\documentclass[journal, 10pt, twocolumn]{IEEEtran}

\usepackage[usenames, dvipsnames]{color}
\usepackage{citesort}
\usepackage{pifont}
\newcommand{\cmark}{\ding{51}}%
\ifCLASSINFOpdf
  \usepackage[pdftex]{graphicx}
	\usepackage{epstopdf}
\else
  \usepackage[dvips]{graphicx}
\fi
%
\usepackage[cmex10]{amsmath}
\interdisplaylinepenalty=2500
\usepackage{algorithmic}
\usepackage{amssymb}
\usepackage{array}
\usepackage{mdwmath}
\usepackage{mdwtab}
\usepackage{eqparbox}
\usepackage{multirow}
\usepackage{fixltx2e}
\usepackage{enumerate}
\usepackage{url}
\usepackage{tabularx}
\usepackage{verbatim}
\usepackage{color}

\newcommand{\ie}{\textit{i.e., }}
\newcommand{\eg}{\textit{e.g., }}


\newcommand{\Rmb}{\mathbb{R}}

\newcommand{\mbf}[1]{\boldsymbol{#1}}
\newcommand{\mbb}[1]{\mathbb{#1}}
\newcommand{\hl}[1]{\textcolor{black}{#1}}

\newcommand{\xb}{\mbf{x}}

\newcommand{\Xb}{\mbf{X}}
\newcommand{\Zb}{\mbf{Z}}
\newcommand{\Xbh}{\hat{\mbf{X}}}
\newcommand{\Ab}{\mbf{A}}

\newcommand{\yb}{\mbf{y}}
\newcommand{\Yb}{\mbf{Y}}

\newcommand{\eb}{\mbf{e}}
\newcommand{\Db}{\mbf{D}}
\newcommand{\Dbh}{\hat{\mbf{D}}}
\newcommand{\Ub}{\mbf{U}}

\newcommand{\Sbb}{\mathbb{S}}
\newcommand{\Ubb}{\mathbb{U}}

\newcommand{\Ib}{\mbf{I}}

\newcommand{\Wb}{\mbf{W}}

\newcommand{\rank}{\operatorname{rank}}

\newcommand{\diag}{\operatorname{diag}}
\newcommand{\argm}{\operatorname{argmin}}


\newcommand{\norm}[1]{\|{#1}\|}

\newcommand{\Rf}{\mathfrak{R}}

\newtheorem{theorem}{Theorem}
\newtheorem{lemma}{Lemma}

\newtheorem{conj}{Conjecture}
\newtheorem{proposition}{Proposition}
\newtheorem{definition}{Definition}

\begin{document}
%
\title{A Set-Theoretic Study of the Relationships of Image Models and Priors for Restoration Problems}
\author{Bihan~Wen, ~\IEEEmembership{Member,~IEEE,}~Yanjun~Li, ~\IEEEmembership{Member,~IEEE,}~Yuqi~Li, ~\IEEEmembership{Student Member,~IEEE,} and~Yoram~Bresler,~\IEEEmembership{Life Fellow,~IEEE}
\thanks{This work was supported in part by the National Science Foundation (NSF) under grants CCF-1320953 and IIS 14-47879. Bihan Wen was supported in part by Ministry of Education, Republic of Singapore, under the start-up grant.}
\thanks{B. Wen is with the School of Electrical and Electronic Engineering, Nanyang Technological University, 639798 Singapore e-mail: bihan.wen@ntu.edu.sg}
\thanks{Y. Li, Y. Li, and Y. Bresler are with the Department of Electrical and Computer Engineering and the Coordinated Science Laboratory, University of Illinois, Urbana-Champaign, IL, 61801 USA e-mail: (yli145, yuqil3, ybresler)@illinois.edu.}
}

\maketitle

\begin{abstract}
Image prior modeling is the key issue in image recovery, computational imaging, compresses sensing, and other inverse problems.
Recent algorithms combining multiple effective priors such as the sparse or low-rank models, have demonstrated superior performance in various applications.
However, the relationships among the popular image models are unclear, and no theory in general is available to demonstrate their connections.
In this paper, we present a theoretical analysis on the image models, to bridge the gap between applications and image prior understanding, including sparsity, group-wise sparsity, joint sparsity, and low-rankness, etc.
We systematically study how effective each image model is for image restoration.
Furthermore, we relate the denoising performance improvement by combining multiple models, to the image model relationships.
Extensive experiments are conducted to compare the denoising results which are consistent with our analysis.
On top of the model-based methods, we quantitatively demonstrate the image properties that are inexplicitly exploited by deep learning method, of which can further boost the denoising performance by combining with its complementary image models.
\end{abstract}

\begin{IEEEkeywords}
Sparse representation, Rank Minimization, Image Denoising, Image Reconstruction, Block matching, Machine learning.
\end{IEEEkeywords}

\IEEEpeerreviewmaketitle
\section{Introduction} \label{sec1}
																																												
\hl{Image restoration (IR)} aims to recover an image $\mbf{x}$ from its degraded measurements $\mbf{y}$, which can be represented as 
\begin{equation} \label{eq:ip}
\mbf{y} = \Ab\, \mbf{x} + \eb.
\end{equation}
Here $\Ab$ and $\eb$ denote the sensing operator and additive noise, respectively.
Different forms of $\Ab$ in (\ref{eq:ip}) defines a wide range of IR problems, e.g., in image denoising, $\Ab = \Ib$ with $\Ib$ denotes an identity matrix.
Furthermore, modern imaging applications usually recover high-quality $\mbf{x}$ from imcomplete or corrupted measurements $\mbf{y}$, in order to reduce the data-acquisition time (e.g., magnetic resonance imaging~\cite{lustig2008compressed}) or radiation dose (e.g., computed tomography~\cite{sidky2008image}).
Under such settings, IR becomes an ill-posed inverse problem, i.e., the unique solution $\xb$ cannot be obtained by directly inverting the linear system.
Thus, having an effective regularizer is key to a successful IR algorithm.

\hl{Popular IR methods apply regularizers by exploiting image priors, e.g., sparsity, low-rankness, etc.}
Natural images are known to be \textit{sparse}, i.e., image patches are typically sparsifiable or compressible under certain transforms, or over certain dictionaries.
Early works exploit image sparsity in fixed transform domains \cite{Dabov2007,yu2011dct,chang2000adaptive}. 
More recent IR works proposed to adapt the sparse models to image patches via data-driven approaches, such as dictionary learning \cite{elad,elad2,Mairal2009} or transform learning \cite{sabres3,octobos,wen2017sparsity}. 
They demonstrated promising performance in various inverse problems \cite{elad,elad2,Mairal2009,Dong2011,sabres3,octobos,wen2017sparsity}.
Besides local sparsity, when modeling images with diverse textures, some IR methods proposed to first group, or partition the image patches into groups of similar ones using block matching, or clustering techniques, respectively \cite{Zhang2014,octobos,Mairal2009,zha2018group}. 
They are approximately sparse \cite{Zhang2014,octobos,zha2018group}, or jointly sparse \cite{Mairal2009} under a group-based sparse model.
Apart from sparsity priors, many popular algorithms also apply low-rank modeling for each group of patches, to exploit image self-similarity \cite{Dong2013,Zhang2014,gu2017weighted,Yoon2014,zha2017image}.

\begin{figure}[!t]
\begin{center}
\begin{tabular}{c}
\hspace{-0.1in}
\vspace{-0.1in}
\includegraphics[height=2.0in]{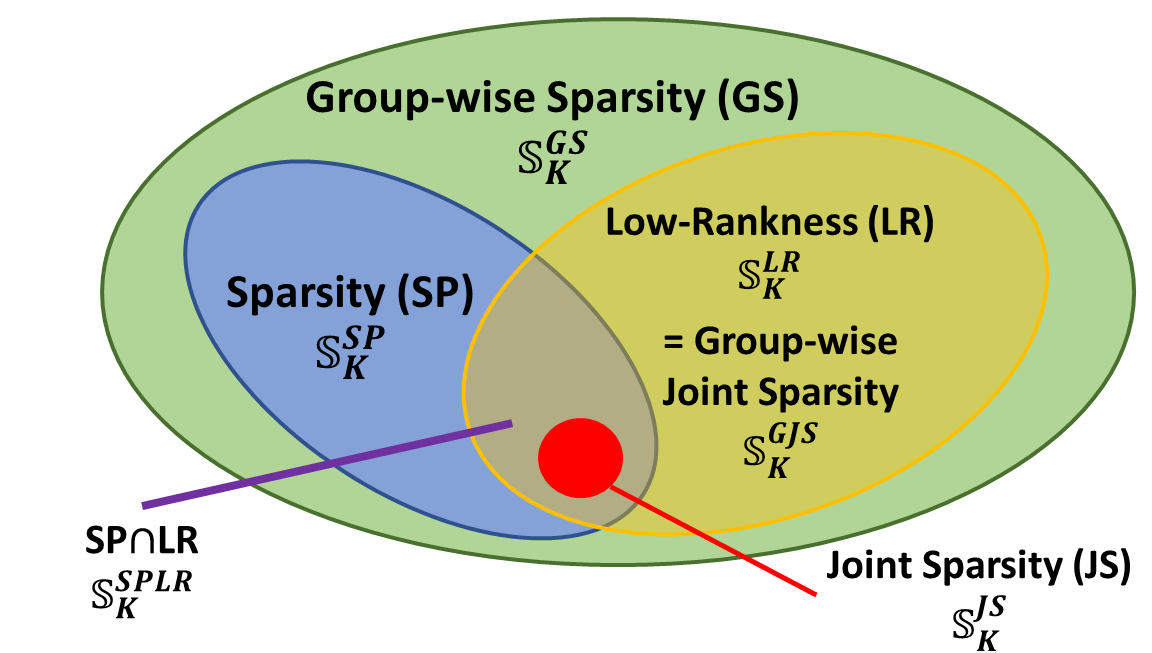}\\
\end{tabular}
\vspace{0.05in}
\caption{A Venn diagram illustrating the relationship among the sets of GS, SP, LR, GJS, JS, and SPLR image models, when the number of groups $N > 1$.}
\label{fig:Relationship}
\end{center}
\vspace{-0.2in}
\end{figure}

\hl{While numerous IR algorithms focused on exploiting single image models, very recent works \cite{zha2017analyzing,zha2018acml,wen2017joint,wen2017sparsity} proposed to jointly utilize multiple complementary models (e.g., sparsity and low-rankness), and demonstrated superior results in IR. 
Besides methods based on parsimonious models, recent deep learning approaches \cite{wang2018non,liu2018non} also combine local operations with non-local structures (that are conjectured to exploit the complementary image properties), which leads to state-of-the-art performances in a wide range of IR and computer vision applications.
Such performance improvements in practice have raised the following questions that need to be answered:
\begin{enumerate}
\item How to theoretically analyze the relationships among the popular image models?
\item Why and how does the combination of complementary models help improve the IR results?
\item What are the effective approaches to jointly exploit multiple models for IR?
\item What types of image models do the deep learning methods inexplicitly exploit?
\end{enumerate}
}
\hl{
To the best of our knowledge, few theory has to date studied and addressed the above questions. 
To investigate the hybrid image recovery methods, it is important to understand whether certain image models are identical, more general, or more restrictive than others.
In this work, we answer the proposed questions, via a 
systematical study of several popular image models, such as sparsity, group-wise sparsity, joint sparsity, and low-rankness, in terms of the \textit{sets} of images that satisfy the corresponding models. 
Such sets will be referred to as the \textit{model sets} for simplicity, and we show the relationships among the model sets. 
We use image denoising, as the simplest IR application, to provide theoretical analysis of model effectiveness for IR, as well as extensive experimental results as the evidence.
Assuming the fact that images satisfy the model sets are the only priors, we denoise the images using the maximum likelihood estimate, by projecting the noisy measurements onto the corresponding model sets.  
We demonstrate how our study can serve as the guidance for boosting the IR results, by combining multiple regularizes based on complementary models, as well as the popular deep models.
}

Our contributions in this paper are summarized as follows:
\begin{itemize}
	\item \hl{ We provide a theoretical analysis on popular image model sets and their relationships (see Section~\ref{sec3} and Fig.~\ref{fig:Relationship}).}

 	\item \hl{We evaluate how effective each image model is for denoising, in terms of the \textit{modeling error} and \textit{survived noise energy} (see Section~\ref{sec4}).}

	
	\item \hl{We relate the denoising performance improvement by combining two image models, to the relationship of their model sets (see Section~\ref{sec43}). }


	\item \hl{Extensive experiments are conducted, comparing the denoising results using single model, and those by combining models with large or small model set intersections.}

	\item \hl{We quantitatively demonstrate the image properties that are inexplicitly exploited by deep learning algorithms for IR. We further improve the denoising results, by combining the state-of-the-art deep learning algorithm with its complementary image models.}
\end{itemize}

The rest of the paper is organized as follows.
Section \ref{sec2} summarizes the related image restoration works based on each of the popular image models.
Section \ref{sec3} provides a theoretical analysis on the popular image models, such as sparsity, joint sparsity, group-wise sparsity, low-rankness, etc. We show the relationships among their solution sets, with mild assumption.
Section \ref{sec4} presents numerical results analyzing how effective certain image model, or the combination of several image models can represent the image, and how robust they are to noise corruption.
Section \ref{sec5} demonstrate the behavior of the proposed image restoration framework using multiple image regularizers. 
We show promising denoising results by combining complementary image priors, which 
also boosts the performance of the state-of-the-art image denoising algorithm based on deep learning.
Section \ref{sec6} concludes with proposal for future works.

\section{Related Works} \label{sec2}

Many recent works focused on model-based image restoration and imaging problems, which are associated with different sensing operators $\Ab$'s in (\ref{eq:ip}).
The regularizers that have been applied in these algorithms, are based on common image models, including sparsity, group-wise sparsity, joint sparsity, low-rankness, etc.
We take the simplest image restoration, namely image denoising, and survey the relevant and representative works according to the image models they applied.
Besides the model-based algorithms, there are other effective and popular image denoising algorithms, such as BM3D \cite{Dabov2007,dabov2007color}, EPLL \cite{zoran2011learning}, etc, using collaborative filtering or probabilistic model.
We restrict our discussion to only image denoising algorithms based on explicit parsimonious models in this paper.
Similar types of image models have also been widely applied in other image restoration applications.

\begin{table}[t!]
\centering
\fontsize{9}{12pt}\selectfont
\begin{tabular}{|c|c|c|c|c|}
\hline
\multirow{2}{*}{\textbf{ Methods }} & \multirow{2}{*}{Sparsity} & Group. & Joint  & Low- \\
 &  & Sparsity & Sparsity & Rankness \\
\hline
DCT \cite{yu2011dct} & \cmark & & & \\
\hline
Wavelets \cite{chang2000adaptive} & \cmark & & & \\
\hline
KSVD \cite{elad} & \cmark & & & \\
\hline
Analysis & \multirow{2}{*}{\cmark} & & & \\
KSVD \cite{rubinstein2013analysis} &  & & & \\
\hline
OCTOBOS &  \cmark & & & \\
\hline
SSC-GSM \cite{dong2015image} & & \cmark & & \\
\hline
GSRC \cite{zha2017image} & & \cmark & & \\
\hline
PGPD \cite{xu2015patch} & & \cmark & & \\
\hline
NCSR \cite{dong2013nonlocally} & & \cmark & & \\
\hline
LSSC \cite{Mairal2009} & & & \cmark & \\
\hline
SAIST \cite{Dong2013} &  & & & \cmark \\
\hline
WNNN  \cite{gu2017weighted} &  & & & \cmark \\
\hline
PCLR \cite{chen2015external} &  & & & \cmark \\
\hline
STROLLR \cite{wen2017sparsity} &  \cmark & & & \cmark\\
\hline
\end{tabular}
\caption{Comparison of the major image models that the popular image denoising methods apply. }
\label{Tab:compareDenoising}
\vspace{-0.2in}
\end{table}

\subsection{Sparsity}
Sparsity of natural signals has been widely exploited for image denoising.
Conventional methods imposed image sparsity by applying analytical transforms, e.g. discrete cosine transform (DCT) \cite{Dabov2007,yu2011dct} and wavelets \cite{chang2000adaptive}.
\hl{Images are approximately sparse in the transform domain, while noise is randomly distributed}.
Thus, applying shrinkage functions, such as hard or soft thresholding in the transform domain can effectively remove noise.
Recent works focus on \textit{synthesis model} for image modeling, in which a dictionary can be learned, and each image patch is approximately represented as a linear combination of a few sparsely selected dictionary atoms \cite{elad,elad2,mairal2009online}.
The popular KSVD methods \cite{elad,elad2} proposed heuristic algorithms for learning the overcomplete dictionary, which is effective in image denoising.
Besides the synthesis model, other works, including the popular the Analysis KSVD \cite{rubinstein2013analysis} method, proposed dictionary learning algorithm using the \textit{analysis model} \cite{elad2007analysis}.
However, both the analysis and synthesis models involve NP-hard sparse coding step, and expensive learning steps.
As an alternative, very recent methods generalized the analysis model, and proposed the \textit{transform learning} algorithms \cite{sabres3,octobos,wen2017frist} whose sparse coding is exact and cheap.
Structured overcomplete transform learning~\cite{octobos,wen2017frist} was proposed and  demonstrated promising performance in image denoising.

\subsection{Group-wise Sparsity and Joint Sparsity}

Besides sparsity, natural images are known to have self-similarity.
Non-local but similar structures within an image can be grouped and jointly processed, to help restore the image more effectively.
\hl{Recent image denoising algorithms, such as SSC-GSM \cite{dong2015image} and PGPD \cite{xu2015patch}, proposed to exploit such property by applying the \textit{group-wise sparsity} model, in which similar image patches are first grouped, and a different dictionary is learned within each group for IR.
Such approaches demonstrated advantages for recovering images with diverse textures \cite{zha2017image,dong2013nonlocally,dong2015image,xu2015patch}.
As an alternative, Mairal et al. proposed the LSSC method \cite{Mairal2009} which constrained the sparse codes within each group of similar patches to be not only sparse, and also share the same support of their sparse codes.
Such image model is called \textit{joint sparsity} \cite{Mairal2009}, which is more restrictive for imposing the intra-group data correlation.}

\subsection{Low-Rankness}

Another popular approach to exploit image non-local self-similarity, is to impose \textit{low-rankness} of groups of similar patches.
A successful approach of this nature vectorizes the image patches, to form the columns of a data matrix for each group.
Such data matrix is restored by low-rank approximation, and its columns are then aggregated to recover the image~\cite{gu2017weighted}.
Image denoising algorithms, including WNNM \cite{gu2017weighted}, SAIST \cite{Dong2013}, PCLR \cite{chen2015external}, based on low-rank image prior have demonstrated superior performance in image recovery applications.
Recently proposed STROLLR \cite{wen2017sparsity} further improves the quality of the denoised estimate by simultaneously applying low-rankness and sparsity models.

\subsection{Bridging the Gap Between Models}

There are a handful of previous efforts on bridging the gap between various image models.
\hl{Dong et al. \cite{Dong2013} showed that the joint sparsity model is equivalent to low-rank model in a single-group case.
Such result is limited as the image self-similarity is always exploited by modeling with multiple groups of patches.
Recently, Zha et al. \cite{zha2017analyzing} proposed to construct a specifically designed dictionary for sparse coding. 
It corresponds to a special sparse model, which is showed to be equivalent to the low-rank model.
However, the results in \cite{zha2017analyzing} are hard to be generalized to the commonly used sparse models.}

\section{Image Model Analysis} \label{sec3}

\hl{In this section, we provide an analysis on various image models that are widely used in image restoration applications.
We show the relationship among the various model sets, which is summarized in Fig. \ref{fig:Relationship}.}

\subsection{Synthesis and Transform Models} \label{sec31}

\hl{Sparsity of natural images are exploited under different sparse signal models.
They suggest that a signal $\yb \in \Rmb^{n}$ can be approximately modeled as its sparse feature $\xb \in \Rmb^{m}$ in certain domains.
We define that $\xb$ is $K$-sparse if $\norm{\xb}_0 \leq K$, where $K \ll m$ is called the sparsity level of $\xb$, \ie the number of non-zeros in $\xb$.
The synthesis model \cite{candes2005decoding,elad,Mairal2009,Dong2011} and transform model \cite{octobos,sabres3,frist,chen2017trainable} are the well-known sparse models that are widely used in IR algorithms.
We show that the two sparse models with the same $K$ become equivalent under the \textit{unitary dictionary assumption}.}

\hl{The \textit{synthesis model} represents a signal $\yb \in \Rmb^{n}$ using a synthesis dictionary $\Db \in \Rmb^{n \times m}$ as $\yb \, = \, \Db \, \xb_s + \eb_s$, where $\xb_s \in \Rmb^{m}$ is $K$-sparse, and $\eb_s$ is the dictionary modeling error which is assumed to be small. 
Given the dictionary $\Db$, the synthesis sparse coding problem is formulated as}
\begin{align} \label{eq:synSparse}
\xb_s = \underset{\xb}{\argm}\: \left \| \yb - \Db \, \xb \right \|_{2}^{2}\;\;\;\; s.t.\;\;\norm{\xb}_0 \leq K\, .
\end{align}

\hl{The \textit{transform model}, provides an alternative approach for data representation. 
It models $\yb$ as approximately sparsifiable using a transform $\Wb \in \Rmb^{m \times n}$, \ie $\Wb \, \yb = \xb_a + \eb_a$, where $\xb_a$ is $K$-sparse, and $\eb_a$ is a small transform-domain modeling error. 
Given the transform $\Wb$, the transform model sparse coding problem is formulated as}
\begin{align} \label{eq:tranSparse}
\xb_a = \underset{\xb}{\argm}\: \left \| \Wb \, \yb - \xb \right \|_{2}^{2}\;\;\;\; s.t.\;\;\norm{\xb}_0 \leq K\, .
\end{align}


\begin{figure*}[!t]
\begin{center}
\begin{tabular}{cccc}
\includegraphics[height=1.18in]{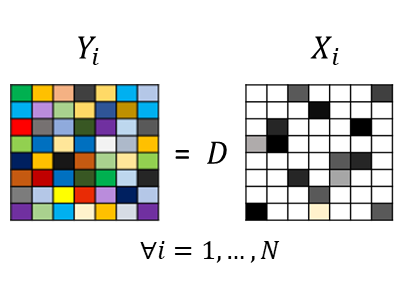} &
\includegraphics[height=1.18in]{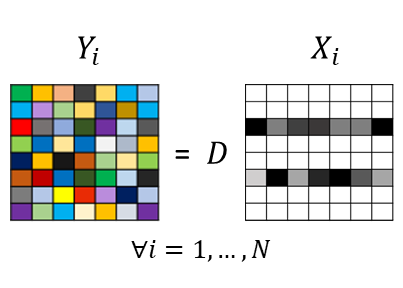} &
\includegraphics[height=1.18in]{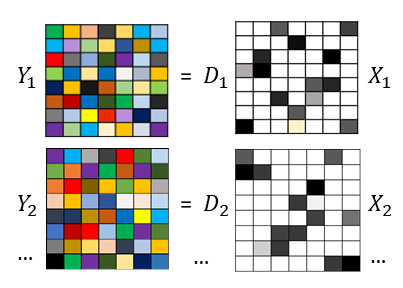} &
\includegraphics[height=1.18in]{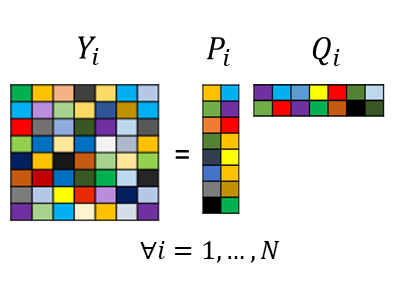} \\
{\small (a) Sparsity (SP)} 	&  {\small (b) Joinst Sparsity (JS)} &  {\small (c) Group-wise Sparsity (GS)} & {\small (d) Low-Rankness (LR) }
\end{tabular}
\vspace{0.1in}
\caption{Illustrations of the signals $\left \{ \mbf{Y}_i \right \}_{i=1}^N$ that satisfy the (a) SP, (b) JS, (c) GS, or (d) LR image models with $K = 2$.}
\label{fig:defModel}
\end{center}
\vspace{-0.1in}
\end{figure*}

\hl{In this work, we unify the two sparse models, by introducing the \textit{unitary dictionary assumption}, \ie   $\Db = \Wb^T \in \Ubb$, and $\Ubb \triangleq \begin{Bmatrix} \Db \in \Rmb^{n \times n} : \Db^{T} \Db = \Ib_n \end{Bmatrix}$ is the set of unitary matrix where $\Ib_n$ is the identity matrix.}
\vspace{0.06in}
\begin{proposition} \label{prop:SP}
The Problems (\ref{eq:synSparse}) and (\ref{eq:tranSparse}) become equivalent, under the \textit{unitary dictionary assumption}.
\end{proposition}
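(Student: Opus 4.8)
The plan is to prove equivalence by showing that, under the unitary dictionary assumption, the two cost functions in (\ref{eq:synSparse}) and (\ref{eq:tranSparse}) are \emph{identical} as functions of $\xb$ over the shared feasible set $\{\xb : \norm{\xb}_0 \le K\}$; equality of the minimizing sets is then immediate. The single tool I would use is the norm-preserving (isometry) property of unitary matrices.

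First I would invoke $\Db = \Wb^T \in \Ubb$, which makes $\Db$ square with $\Db^T\Db = \Db\Db^T = \Ib_n$ and $\Wb = \Db^T = \Db^{-1}$. I would then insert $\Db\Db^T = \Ib_n$ into the synthesis residual and factor out $\Db$:
\begin{equation}
\norm{\yb - \Db\,\xb}_2^2 = \norm{\Db(\Db^T\yb - \xb)}_2^2 = \norm{\Db^T\yb - \xb}_2^2 = \norm{\Wb\,\yb - \xb}_2^2 ,
\end{equation}
where the middle step uses $\norm{\Db\,\zb}_2 = \norm{\zb}_2$ for every $\zb$ (a direct consequence of $\Db^T\Db = \Ib_n$) and the last step substitutes $\Wb = \Db^T$. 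This is exactly the objective of (\ref{eq:tranSparse}).

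Because the two objectives agree pointwise and both are minimized subject to the same constraint $\norm{\xb}_0 \le K$, the two optimization problems are identical, so their optimal solution sets coincide and in particular $\xb_s = \xb_a$. The main obstacle here is not analytical — the result is essentially the observation that an orthogonal change of variables leaves the $\ell_2$ residual invariant. The only points deserving care are bookkeeping ones: checking that the unitary assumption forces $m = n$, so that $\xb_s$ and $\xb_a$ lie in the same space and the sparsity constraints genuinely coincide; and emphasizing that the conclusion is an equivalence of the optimization \emph{problems} (hence of the recovered sparse codes), not merely of their optimal values.
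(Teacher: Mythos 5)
Your proposal is correct and follows essentially the same route as the paper's proof: both arguments rest on the isometry property of the unitary dictionary to show the two objectives $\left\| \yb - \Db\,\xb \right\|_2^2$ and $\left\| \Wb\,\yb - \xb \right\|_2^2$ coincide pointwise over the common constraint set (the paper applies $\Wb$ to the residual, while you factor out $\Db$ — the same step written in mirror image). Your added remark that the unitary assumption forces $m=n$ so the sparsity constraints genuinely coincide is a small but welcome piece of bookkeeping the paper leaves implicit.
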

\vspace{0.06in}
\begin{proof}
With the \textit{unitary dictionary assumption}, we have $\Wb \Db = \Ib_n$ and $\left \| \Wb  \Theta  \right \|_2 = \left \| \Theta  \right \|_2  $, $\forall \Theta$. Thus, the objective function in (\ref{eq:synSparse}), \ie $\left \| \yb - \Db \, \xb \right \|_{2}^{2}$ $= \left \| \Wb \yb - \Wb \Db \xb \right \|_{2}^{2}$ $= \left \| \Wb \, \yb - \xb \right \|_{2}^{2}$, becomes identical to that in (\ref{eq:tranSparse}). Therefore, the Problems (\ref{eq:synSparse}) and (\ref{eq:tranSparse}) become equivalent\footnote{Besides, the analysis model~\cite{rubinstein2013analysis,elad2007analysis} also becomes equivalent under the unitary dictionary assumption. We omit the discussion of the analysis model in this work.}.
\end{proof}
\vspace{0.06in}

\hl{In the following analysis, we will only discuss the sparsity using the synthesis model with the unitary dictionary $\Db$ for the sake of simplicity. 
We use the common notations $\xb \in \Rmb^{n}$ and $\eb \in \Rmb^{n}$ to represent the sparse code and modeling error. 
Furthermore, solving synthesis sparse coding problem (\ref{eq:synSparse}) is NP-hard in general~\cite{octobos,AharonEladBruckstein2006}. 
Here, we solve the equivalent problem (\ref{eq:tranSparse}) which has exact solution involving the cheap operator of keeping $K$ elements of signal $\yb$ with largest magnitude, \ie projecting $\yb$ onto the $\ell_0$ ball~\cite{sabres3}. 
The similar equivalence of the synthesis and transform models also holds for joint sparsity, where the $K$-sparse constraint is replaced with the joint $K$-sparse constraint that is defined in Section~\ref{sec32}.}

\subsection{Image Model Definitions} \label{sec32}

To better represent or recover an image $\yb \in \Rmb^{p}$, popular image restoration algorithms investigate the properties of its local patches \cite{Mairal2009,ji2011robust}. 
On top of that, non-local methods group or partition the image patches, via block matching or clustering, before processing them in order to exploit the image self-similarity \cite{Dong2011,dong2013nonlocally,Zhang2014}. 
Following a similar image modeling pipeline, we use a set of data matrices $\begin{Bmatrix} \Yb_i \end{Bmatrix}_{i=1}^{N}$ as the equivalent representation of an image $\yb$. 
Each $\Yb_i \triangleq V_i \, \yb \in \Rmb^{n \times M_i}$ denotes a group patches extracted from $\yb$, \ie a group of vectorized image patches forms the columns of $\Yb_i$.
For simplicity, we use $\Yb_i$ in the following analysis without writing it as a function of $\yb$.  
The grouping operator $V_i$: $\Rmb^{p} \rightarrow \Rmb^{n \times M_i}$ is a function of the image $\yb$ (but this is not displayed explicitly), and its exact form also depends on the specific grouping algorithm.
For a given $\yb$, $V_i$ is treated as a linear operator.
In comparing different models, the $V_i$'s which determine the $\Yb_i$'s, are the same for all models.
Now, it is easily verified that as long as each pixel of $\yb$ appears in at least one $\Yb_i$, the image $\yb$ can be equivalently represented as
\begin{align} \label{eq:imageGroup}
\yb = ( \sum_{i=1}^{N} \, V_i^{\ast} V_i)^{-1} \sum_{i=1}^{N} \, V_i^{\ast} \Yb_i \,\, .
\end{align} 
Here $V_i^\ast :  \Rmb^{n \times M_i} \rightarrow \Rmb^p$ is the adjoint operator of $V_i$: it takes the elements of $\yb$ (the image pixels) found in the input $\Yb_i$, and accumulates them into the output vector $V_i^\ast \Yb_i$. Accordingly, operator $V_i^\ast V_i: \Rmb^p \rightarrow \Rmb^p$ maps an image in $\Rmb^p$ to another such image, and can be represented by a $p \times p$ matrix.

\hl{We now define the various model sets $\Sbb$'s, using the patch block representation $\begin{Bmatrix} \Yb_i \end{Bmatrix}_{i=1}^N$.
We use a superscript to indicate the name of corresponding image model, and the subscript $K$ as the main model parameter. 
We use a superscript to abbreviate the name of corresponding image model, and the subscript $K$ as the main model parameter. 
For example, $\Sbb_{K}^{SP}$ denotes the sparsity model set, with sparsity level $K$.
We assume throughout that $K < \min(n, M_i)$ $\forall i$.
}

The image \textit{sparsity} (SP) model, which was discussed in Section \ref{sec31}, requires each image patch to be sparsifiable under a common unitary dictionary, i.e., each $j$-th column $\Yb_i^j$ of the matrix $\Yb_i$ is sparsifiable (see Fig.~\ref{fig:defModel}(a)).
The image sparsity model set is thus defined as
\begin{definition}[Sparsity] \label{def:SP}
The $K$-sparse set $\Sbb_{K}^{SP} \triangleq \begin{Bmatrix} \yb \in \Rmb^{p} : \exists \, \Db \in \Ubb \, \,s.t.  \Yb_i = \Db \, \Xb_i, \, \norm{\Xb_i^j}_0 \leq K \, \forall i,j \end{Bmatrix}$. An image $\yb$ satisfies the SP model if $\yb \in \Sbb_{K}^{SP}$.
\end{definition}

\hl{On top of sparsity, various works \cite{Mairal2009,ji2011robust} made use of a more restrictive image model - joint sparsity - in order to exploit the correlation of the patches within a group of patches that are similar.
The joint sparsity model \cite{Mairal2009} requires the columns in each $\Xb_i$ to be not only sparse, but also share the same support (see Fig.~\ref{fig:defModel}(b)).
One way to impose joint sparsity of a matrix is by penalizing the $\ell_{0, \infty}$ norm of each $\Xb_i$, which is defined as}
\begin{align} \label{eq:JSnorm}
\left \| \Xb_i \right \|_{0, \infty} \triangleq \sum_{j = 1}^{n} \left \| \Xb_i^{j} \right \|_{\infty}^{0}
\end{align} 
\hl{Here the $\ell_{0, \infty}$ norm simply counts the number of non-zero rows of $\Xb$.
The formal definition of the \textit{joint sparsity} (JS) model set is the following, }
\begin{definition}[Joint Sparsity] \label{def:JS}
The joint $K$-sparse set $\Sbb_{K}^{JS} \triangleq \big\{  \yb \in \Rmb^{p} : \exists \, \Db \in \Ubb \, \,s.t.  \Yb_i = \Db \, \Xb_i, \, \left \| \Xb_i \right \|_{0, \infty} \leq K\; \forall i \big\}$. An image $\yb$ satisfies the JS model if $\yb \in \Sbb_{K}^{JS}$.
%
\end{definition}

Both the SP and JS models apply a common dictionary for all $\left \{ \Yb_i \right \}$.
Recent works \cite{Dong2011,dong2013nonlocally,Zhang2014} relaxed this constraint, applying sparsity by learning a different dictionary $\Db_i$ for each data group $\Yb_i$.
We call this property \textit{group-wise sparsity} (GS) 
\footnote{A similar concept was also named ``group-based'' in previous works. GS is different from the ``joint sparsity'' defined here, which was also sometimes referred to as ``group sparsity'' in other literature~\cite{Mairal2009}.} (see Fig.~\ref{fig:defModel}(c)), 
and the GS model set is defined as follow,
\begin{definition}[Group-wise Sparsity]\label{def:GS}
The group-wise $K$-sparse set $\Sbb_{K}^{GS} \triangleq \big\{ \yb \in \Rmb^{p} : \exists \, \begin{Bmatrix} \Db_i \end{Bmatrix}_{i=1}^{N}, \, \Db_i \in \Ubb \; \,s.t. \,\, \Yb_i = \Db_i \Xb_i, \, \norm{\Xb_i^j}_0 \leq K \,\; \forall i,j \big\}$. An image $\yb$ satisfies the GS model if $\yb \in \Sbb_{K}^{GS}$.
\end{definition}

\hl{One can similarly relax the dictionary sharing constraint on the JS model, and define the group-wise joint sparsity (GJS) model set as following,}
\begin{definition}[Group-wise Joint Sparsity]\label{def:GJS}
The group-wise jointly $K$-sparse set $\Sbb_{K}^{GJS} \triangleq \big\{ \yb \in \Rmb^{p} : \exists \, \begin{Bmatrix} \Db_i \end{Bmatrix}_{i=1}^{N}, \, \Db_i \in \Ubb \; \,s.t. \,\, \Yb_i = \Db_i \Xb_i, \, \norm{\Xb_i}_{0, \infty} \leq K \; \forall i \big\}$. An image $\yb$ satisfies the GJS model if $\yb \in \Sbb_{K}^{GJS}$.
\end{definition}

Besides sparsity related models, low-rankness is another effective prior for exploiting natural image non-local self-similarity \cite{nie2012low,Dong2013,Dong2014,gu2017weighted}.
Most of the image restoration algorithms based on the low-rankness model proposed to group similar image patches, and approximate each data group $\Yb_i$ to be low-rank.
We define the image (group-wise) low-rankness (LR) model set as
\begin{definition}[Low-Rankness] \label{def:LR}
The $K$-rank set $\Sbb_{K}^{LR} \triangleq \big\{ \yb \in \Rmb^{p} : \rank(\Yb_i) \leq K \, \, \forall i \big\}$.
An image $\yb$ satisfies the LR model if $\yb \in \Sbb_{K}^{LR}$.
\end{definition}

\hl{Equivalently, for any $\yb \in \Sbb_{K}^{LR}$, there exists a matrix pair $\mbf{P}_i \in \Rmb^{n \times K}$ and $\mbf{Q}_i \in \Rmb^{K \times M_i}$ for each $\Yb_i$, such that $\Yb_i = \mbf{P}_i  \mbf{Q}_i$. 
We use this interpretation to illustrate the condition of the $K$-rank set in Fig.~\ref{fig:defModel}(d).}
Besides the four popular models, very recent works proposed to exploit the SP and LR properties simultaneously on image and video data, demonstrating superior performance in restoration and reconstruction tasks \cite{wen2017sparsity,wen2017joint}.
We refer to such image models that require image data to be both sparse and low-rank, as the \textit{SPLR model}.
The SPLR image model set is defined as
\begin{definition}[SPLR] \label{def:SPLR}
The joint $K$-sparse and $K$-rank set $\Sbb_{K}^{SPLR} \triangleq \Sbb_{K}^{LR} \cap \Sbb_{K}^{SP}$.
An image $\yb$ satisfies the SPLR model if $\yb \in \Sbb_{K}^{SPLR}$.
\end{definition}

\subsection{Main Results} \label{sec33}

We analyze the relationship among the various sparsity and low-rankness related image models.
The results are presented in terms of the corresponding model sets.
We first consider the special case: a single ($N=1$) group of patches. 

\vspace{0.1in}
\begin{theorem}\label{thm:mainK1}
When $N=1$, the image model sets satisfy
\begin{enumerate}
\item $\Sbb_K^{JS} = \, \Sbb_K^{GJS} = \Sbb_K^{LR} = \Sbb_K^{SPLR}$,
\vspace{0.06in}
\item $\Sbb_K^{SP} = \, \Sbb_K^{GS}$,
\vspace{0.06in}
\item $\Sbb_K^{JS} \subsetneq \Sbb_K^{SP}$.
\vspace{0.06in}
\end{enumerate}
\end{theorem}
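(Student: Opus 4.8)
The plan is to reduce every assertion to a condition on the single data matrix $\Yb \triangleq \Yb_1 = V_1\yb$, since for $N=1$ membership in each model set is determined by $\Yb$ alone. I would first dispatch the two trivial coincidences: with only one group the ``per-group dictionary'' conditions defining $\Sbb_K^{GS}$ and $\Sbb_K^{GJS}$ reduce literally to the ``common dictionary'' conditions defining $\Sbb_K^{SP}$ and $\Sbb_K^{JS}$, so $\Sbb_K^{GS}=\Sbb_K^{SP}$ (statement 2) and $\Sbb_K^{GJS}=\Sbb_K^{JS}$ hold at once.

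The core of statement 1 is $\Sbb_K^{JS}=\Sbb_K^{LR}$. For $\Sbb_K^{JS}\subseteq\Sbb_K^{LR}$, if $\Yb=\Db\Xb$ with $\Db\in\Ubb$ and $\norm{\Xb}_{0,\infty}\le K$, then $\Xb$ has at most $K$ nonzero rows, so $\rank(\Xb)\le K$; since $\Db$ is orthogonal it preserves rank, giving $\rank(\Yb)=\rank(\Xb)\le K$. For the reverse $\Sbb_K^{LR}\subseteq\Sbb_K^{JS}$, I would invoke the full SVD $\Yb=\Ub\Sigma\Vb^T$ with $\Ub\in\Ubb$; when $\rank(\Yb)\le K$ the matrix $\Sigma\Vb^T$ has its nonzero entries confined to the first $K$ rows, so setting $\Db=\Ub$ and $\Xb=\Db^T\Yb=\Sigma\Vb^T$ exhibits $\yb$ as jointly $K$-sparse. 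Combining this equality with $\Sbb_K^{JS}\subseteq\Sbb_K^{SP}$ (established in statement 3 below) closes the chain: $\Sbb_K^{JS}\subseteq\Sbb_K^{LR}\cap\Sbb_K^{SP}=\Sbb_K^{SPLR}$, while $\Sbb_K^{SPLR}=\Sbb_K^{LR}\cap\Sbb_K^{SP}\subseteq\Sbb_K^{LR}=\Sbb_K^{JS}$, so all four sets coincide.

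For statement 3, the inclusion $\Sbb_K^{JS}\subseteq\Sbb_K^{SP}$ follows because at most $K$ nonzero rows forces each column to have at most $K$ nonzeros, i.e., every jointly $K$-sparse matrix is column-wise $K$-sparse under the same $\Db$. Strictness requires a separating example: taking $\Db=\Ib$ and letting the first $K+1$ columns of $\Yb$ be the standard basis vectors $\eb_1,\dots,\eb_{K+1}$ (possible since $K<\min(n,M_1)$) yields a matrix whose columns are $1$-sparse, hence $K$-sparse, so the corresponding image lies in $\Sbb_K^{SP}$; but $\rank(\Yb)=K+1>K$, so by $\Sbb_K^{JS}=\Sbb_K^{LR}$ this image is not in $\Sbb_K^{JS}$, proving the inclusion is proper.

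I expect the reverse inclusion $\Sbb_K^{LR}\subseteq\Sbb_K^{JS}$ to be the main technical step, since it is where the unitary dictionary must be \emph{produced} from the low-rank structure rather than assumed; the SVD supplies it cleanly, the only point of care being that the full square $\Ub$ is used so that $\Db\in\Ubb$. A secondary subtlety is realizability of the separating matrix: I would appeal to the block-representation bijection (\ref{eq:imageGroup}) (or choose a non-overlapping grouping, for which $V_1$ is invertible) to guarantee that the constructed $\Yb$ arises from a genuine image $\yb\in\Rmb^p$.
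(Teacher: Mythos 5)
Your proposal is correct and follows essentially the same route as the paper: the trivial collapse of the group-wise definitions when $N=1$, the SVD construction for $\Sbb_K^{LR}\subseteq\Sbb_K^{JS}$ and the rank bound for the converse, and $\Sbb_K^{SPLR}=\Sbb_K^{LR}\cap\Sbb_K^{SP}=\Sbb_K^{LR}$ via $\Sbb_K^{JS}\subseteq\Sbb_K^{SP}$. The one place you improve on the paper is the strictness in statement 3: the paper only remarks that columns with differing supports violate joint sparsity (for a fixed dictionary), whereas your rank-$(K+1)$ example of $K$-sparse columns, certified as outside $\Sbb_K^{JS}$ through the dictionary-independent identity $\Sbb_K^{JS}=\Sbb_K^{LR}$, correctly rules out \emph{every} unitary dictionary at once.
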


\begin{figure}[!t]
\begin{center}
\begin{tabular}{c}
\hspace{-0.1in}
\vspace{-0.1in}
\includegraphics[height=1.8in]{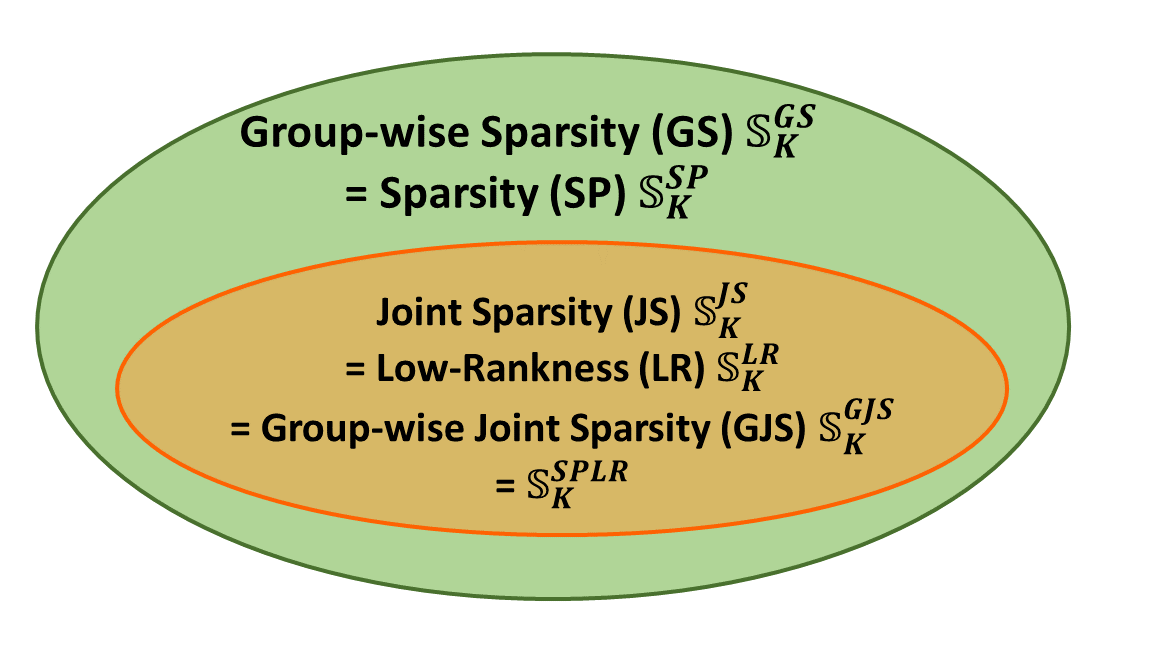}\\
\end{tabular}
\vspace{0.05in}
\caption{A Venn diagram illustrating the relationship among the sets of GS, SP, LR, GJS, JS, and SPLR image models, when $N = 1$.}
\label{fig:RelationshipN1}
\end{center}
\vspace{-0.2in}
\end{figure}

The relationship when $N=1$ is summarized as the Venn diagram in Fig.~\ref{fig:RelationshipN1}. 
Next, we generalize to $N>1$.

\vspace{0.1in}
\begin{theorem}\label{thm:main}
When $N>1$ with $\{ V_i \}_{i=1}^N$ common to all models, the various image model sets satisfy
\vspace{0.06in}
\begin{enumerate}
\item $\Sbb_K^{JS} \subsetneq \Sbb_{K}^{SP} \subsetneq \Sbb_{K}^{GS}$.
\vspace{0.06in}

\item $\Sbb_K^{JS} \subsetneq \Sbb_{K}^{GJS} \subsetneq \Sbb_{K}^{GS}$.

\vspace{0.06in}
\item $\Sbb_{K}^{GJS} = \Sbb_{K}^{LR}$.

\vspace{0.06in}
\item $\Sbb_{K}^{SP} \nsubseteq \Sbb_{K}^{LR}$.
\vspace{0.06in}
\item $\Sbb_{K}^{LR} \nsubseteq \Sbb_{K}^{SP}$ when $N > {n \choose K}$.
\vspace{0.06in}
\item $\Sbb_{K}^{JS} \subsetneq \Sbb_{K}^{SPLR}$.
\end{enumerate}
\end{theorem}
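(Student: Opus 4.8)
The plan is to split the six claims into three kinds and attack them in increasing order of difficulty: the containments (the $\subseteq$ half of (1),(2),(6) and the equality (3)) follow by direct set‑membership arguments built on Proposition~\ref{prop:SP} and the single‑group Theorem~\ref{thm:mainK1}; the easy non‑containment (4) needs one tiny example; and the proper containments together with (5) all rest on a single counting mechanism governed by the number ${n \choose K}$ of coordinate $K$‑subspaces. I would prove the containments first because they are cheap and they feed into (6).

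For the inclusions, suppose $\yb\in\Sbb_K^{JS}$ with common unitary $\Db$, so each $\Xb_i=\Db^{T}\Yb_i$ has at most $K$ nonzero rows. Then every column $\Xb_i^{j}$ is $K$-sparse and $\rank(\Yb_i)=\rank(\Xb_i)\le K$, which gives simultaneously $\Sbb_K^{JS}\subseteq\Sbb_K^{SP}$, $\Sbb_K^{JS}\subseteq\Sbb_K^{GJS}$ (take $\Db_i\equiv\Db$), and $\Sbb_K^{JS}\subseteq\Sbb_K^{LR}$. Taking $\Db_i\equiv\Db$ likewise yields $\Sbb_K^{SP}\subseteq\Sbb_K^{GS}$, and ``$\le K$ nonzero rows'' forcing $K$-sparse columns gives $\Sbb_K^{GJS}\subseteq\Sbb_K^{GS}$. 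Since $\Sbb_K^{SPLR}=\Sbb_K^{LR}\cap\Sbb_K^{SP}$, the two facts $\Sbb_K^{JS}\subseteq\Sbb_K^{SP}$ and $\Sbb_K^{JS}\subseteq\Sbb_K^{LR}$ immediately deliver the inclusion in (6). For the equality (3) I would argue group by group: both GJS and LR impose their condition on each $\Yb_i$ independently with a per‑group dictionary, so the claim reduces to the $N=1$ identity $\Sbb_K^{GJS}=\Sbb_K^{LR}$ of Theorem~\ref{thm:mainK1}, namely that $\rank(\Yb_i)\le K$ iff $\Yb_i=\Db_i\Xb_i$ with $\Db_i$ unitary and $\Xb_i$ of $\le K$ nonzero rows, which is exactly the thin SVD $\Yb_i=\Ub_i\Sigma_i\Vb_i^{T}$ with $\Db_i=\Ub_i$ completed to a unitary matrix.

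The non‑containment (4) needs only the group with columns $\eb_1,\dots,\eb_{K+1}$: under $\Db=\Ib$ these are $K$-sparse, so the image lies in $\Sbb_K^{SP}$, yet $\rank=K+1>K$ keeps it out of $\Sbb_K^{LR}$; the same example (column‑sparse but rank $K+1$) simultaneously witnesses $\Sbb_K^{JS}\subsetneq\Sbb_K^{SP}$ and $\Sbb_K^{GJS}\subsetneq\Sbb_K^{GS}$. The remaining separations are powered by a lemma: if the columns of a rank‑$K$ group are chosen generically inside a $K$-dimensional subspace $C_i=\range(\Yb_i)$, then they are all $K$-sparse under a unitary $\Db$ only if $\Db^{T}C_i$ is a coordinate $K$-subspace. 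The lemma itself follows from the subspace fact that a $K$-dimensional subspace all of whose vectors are $K$-sparse must be coordinate (a generic vector realizes the full union of supports $T$, forcing $|T|\le K$, while injectivity of the restriction to $T$ forces $|T|\ge K$). Granting the lemma, (5) is pigeonhole: choose $N>{n\choose K}$ distinct generic $K$-subspaces $C_1,\dots,C_N$ and populate each $\Yb_i$ with generic spanning columns, so $\yb\in\Sbb_K^{LR}$ automatically; any common $\Db$ placing $\yb$ in $\Sbb_K^{SP}$ would force each $C_i$ to equal $\Db U$ for a coordinate $K$-subspace $U$, of which there are only ${n\choose K}$, impossible for $N$ distinct $C_i$. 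The strict halves of (1) and (2), $\Sbb_K^{SP}\subsetneq\Sbb_K^{GS}$ and $\Sbb_K^{JS}\subsetneq\Sbb_K^{GJS}$, and the strictness in (6), use two generic $K$-subspaces that are not simultaneously coordinate‑izable: each group is column‑sparse (indeed jointly sparse) under its own SVD‑aligned dictionary, hence in $\Sbb_K^{GS}$, resp. in $\Sbb_K^{GJS}=\Sbb_K^{LR}$ and, when additionally column‑sparse under one common $\Db$, in $\Sbb_K^{SP}\cap\Sbb_K^{LR}=\Sbb_K^{SPLR}$, while no common dictionary can jointly‑sparsify both, so the point escapes $\Sbb_K^{SP}$, resp. $\Sbb_K^{JS}$; padding with arbitrary extra groups keeps this valid for every $N>1$.

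The hard part is making ``generic'' rigorous in the lemma underlying (5): the column choice is fixed once and must defeat every unitary $\Db$ simultaneously, yet the bad configurations depend on $\Db$. I expect the clean route is to fix the generic column spaces and columns first and then observe that, for any $\Db$ that does place all groups in $\Sbb_K^{SP}$, the union‑of‑supports argument applies verbatim to force each $\Db^{T}C_i$ coordinate; the count $N>{n\choose K}$ then produces the contradiction uniformly, with the genericity needed only to exclude accidental cancellations that would let a generic column be $K$-sparse under a non‑coordinate $\Db^{T}C_i$.
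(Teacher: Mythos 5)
Your overall architecture matches the paper's (inclusions by direct membership plus $\Db_i\equiv\Db$, the SVD argument for $\Sbb_K^{GJS}=\Sbb_K^{LR}$, and a $\binom{n}{K}$ counting argument for statement (5)), and parts of it are actually cleaner: the single group with columns $\eb_1,\dots,\eb_{K+1}$ is a simpler witness for $\Sbb_K^{SP}\nsubseteq\Sbb_K^{LR}$ than the paper's circulant matrix, and it correctly doubles as the strictness witness for (1) and (2), since anything of rank $K+1$ is excluded from $\Sbb_K^{JS}$ and $\Sbb_K^{GJS}$ outright. However, your route to (5) has a genuine gap. The lemma you lean on --- generic spanning columns of a rank-$K$ group are all $K$-sparse under $\Db$ only if $\Db^{T}C_i$ is a coordinate subspace --- is false when the group has only a few columns, generic or not. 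Take $n=3$, $K=2$, any two vectors $v_1,v_2$ spanning a plane $C$: choose a unitary $\Db$ whose first column is orthogonal to $v_1$ and whose second column is orthogonal to $v_2$ (always possible); then both $\Db^{T}v_j$ are $2$-sparse while $\Db^{T}C$ is coordinate only if $C$ is spanned by two columns of $\Db$, which it generically is not. The union-of-supports argument needs a vector that is generic \emph{for the transformed subspace} $\Db^{T}C_i$, and your columns are fixed before $\Db$ is chosen; defeating all $\Db$ simultaneously would require a dimension count over the unitary group (and enough columns per group) that you do not carry out. The paper sidesteps this entirely by counting columns rather than subspaces: with all $NK$ columns in general position, any $K$ atoms of $\Db$ span at most $K$ of them, so $\Db$ can sparsify at most $\binom{n}{K}K<NK$ columns --- no claim about $\Db^{T}C_i$ being coordinate is ever needed.

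The strictness of (6) is also not yet a proof. You ask for two \emph{generic} $K$-subspaces that are ``additionally column-sparse under one common $\Db$,'' but these two requirements pull against each other: generic subspaces populated with generic columns will not be $K$-sparse under any common dictionary, so you cannot invoke genericity and membership in $\Sbb_K^{SP}$ at the same time. What is needed is an explicit witness: groups $\Yb_i=\Db\Xb_i$ with every column of $\Xb_i$ being $K$-sparse and $\rank(\Xb_i)\le K$, yet with $\norm{\Xb_i}_{0,\infty}>K$ so that no reindexing of supports (and, one must also check, no \emph{other} unitary dictionary) yields joint $K$-sparsity. The paper constructs such $\Xb_1,\Xb_2$ explicitly, exhibiting a linear dependence among the columns to certify $\rank(\Xb_i)=K$ while the row support has size $K+1$. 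Your sketch identifies the right obstruction (two rank-$K$ column spaces that cannot be simultaneously coordinatized by one unitary basis) but does not produce the object that is simultaneously in $\Sbb_K^{SPLR}$, so as written (5) and (6) remain open in your argument.
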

\vspace{0.1in}

Statements (1) and (2) in Theorem \ref{thm:main} are relatively straightforward, indicating relationships of different types of sparsity models.
Statement (3) indicates that the properties of group joint sparsity and low rankness coincide.
Statement (4) and (5) indicate that the SP and LR model sets overlap at most partially (with sufficiently large $N$). 
Because their intersection is the non-empty set $S_K^{SPLR}$, it follows that they do overlap, but only partially. 
Statement (6) indicates that JS is a proper subset of SPLR.
Figure \ref{fig:Relationship} illustrates the main results.

\subsection{Proof of Image Model Set Relationships} \label{sec34}

To prove Theorem~\ref{thm:mainK1} and Theorem~\ref{thm:main} of Section~\ref{sec33}, we first prove
 several Lemmas.
The first two lemmas which hold for any $K$ without additional assumption will be used for proving both Theorem~\ref{thm:mainK1} and Theorem~\ref{thm:main}.

\vspace{0.06in}
\begin{lemma}\label{lem:LReqGJS}
$\Sbb_{K}^{GJS} = \Sbb_{K}^{LR}$.
\end{lemma}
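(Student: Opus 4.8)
The plan is to prove the set equality by reducing it to a statement about a single data matrix and then establishing the two inclusions separately. Because both membership conditions are imposed group-by-group on the same matrices $\{\Yb_i\}_{i=1}^N$ (the grouping operators $V_i$ being common to both models), it suffices to show, for a single fixed $\Yb \in \Rmb^{n \times M}$, that $\Yb$ admits a factorization $\Yb = \Db \Xb$ with $\Db \in \Ubb$ and $\norm{\Xb}_{0,\infty} \leq K$ if and only if $\rank(\Yb) \leq K$. Conjoining this equivalence over all $i$ then yields $\Sbb_K^{GJS} = \Sbb_K^{LR}$.

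For the direction $\Sbb_K^{GJS} \subseteq \Sbb_K^{LR}$, I would start from a factorization $\Yb = \Db \Xb$ with $\Db$ unitary and $\Xb$ having at most $K$ nonzero rows. Since $\Db$ is invertible, left multiplication by $\Db$ preserves rank, so $\rank(\Yb) = \rank(\Xb)$, and a matrix with at most $K$ nonzero rows has rank at most $K$; hence $\rank(\Yb) \leq K$.

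For the reverse direction $\Sbb_K^{LR} \subseteq \Sbb_K^{GJS}$, I would invoke the singular value decomposition $\Yb = \Ub \mbf{\Sigma} \Vb^T$, with $\Ub \in \Rmb^{n \times n}$ and $\Vb \in \Rmb^{M \times M}$ orthogonal and $\mbf{\Sigma} \in \Rmb^{n \times M}$ diagonal. Taking $\Db = \Ub \in \Ubb$ and $\Xb = \mbf{\Sigma} \Vb^T$ gives $\Yb = \Db \Xb$. Since $\rank(\Yb) \leq K$ forces at most $K$ singular values to be nonzero, $\mbf{\Sigma}$ has at most $K$ nonzero rows, and therefore so does $\Xb = \mbf{\Sigma} \Vb^T$; that is, $\norm{\Xb}_{0,\infty} \leq K$, placing $\yb$ in $\Sbb_K^{GJS}$.

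This argument is essentially routine, and I do not anticipate a genuine obstacle. The one point meriting care is the reverse direction, where a unitary factor must be produced explicitly: the SVD supplies it directly through the left singular vectors, and the identification of the $\ell_{0,\infty}$ norm with the count of nonzero rows is exactly what lets the rank-$K$ bound translate into the required row-sparsity bound. I would also note that $\Ub$ being square and orthogonal is precisely the condition for membership in $\Ubb$, so no adjustment of dimensions is needed.
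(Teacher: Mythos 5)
Your proof is correct and follows essentially the same route as the paper's: the forward inclusion uses that a unitary factor preserves rank and a matrix with at most $K$ nonzero rows has rank at most $K$, and the reverse inclusion takes $\Db = \Ub$ from the SVD and $\Xb = \mbf{\Sigma}\Vb^T$, which inherits the row-sparsity bound from $\mbf{\Sigma}$. The only cosmetic difference is that you first reduce to a single-matrix equivalence before conjoining over $i$, which the paper does implicitly.
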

\vspace{0.06in}
\begin{proof}[Proof of Lemma \ref{lem:LReqGJS}]
For any $\yb \in \Sbb_{K}^{LR}$, each $\Yb_i$ in the equivalent representation $\left \{ \Yb_i \right \}_{i=1}^N$ has full SVD as $\Yb_i = \Ub_i \Sigma_i \mbf{Q}_i^{T}$ where
$\Ub_i \in R^{n \times n}$,  $Q_i \in R^{M_i \times M_i}$, and $\Sigma_i \in R^{n \times M_i}$ with main diagonal $\diag(\Sigma_i)_j = 0 \; \forall j > K$.
Let $\Db_i = \Ub_i$, $\Yb_i = \Db_i \Xb_i$ where $\Xb_i = \Sigma_i \mbf{Q}_i^{T}$, and $\left \| \Xb_i \right \|_{0, \infty} \leq \left \| \Sigma_i \right \|_{0, \infty} \leq K$ $\forall i$. Thus, $\yb \in \Sbb_{K}^{GJS}$, which shows $\Sbb_{K}^{GJS} \subseteq \Sbb_{K}^{LR}$.

On the other hand, for any $\yb \in \Sbb_{K}^{GJS}$, each $\Yb_i = \Db_i \Xb_i$ such that $\left \| \Xb_i \right \|_{0, \infty} \leq K$. 
Thus $\rank( \Xb_i ) \leq K$, and $\rank( \Yb_i ) \leq \rank( \Xb_i ) \leq K$.
Therefore, $\yb \in \Sbb_{K}^{LR}$ and $\Sbb_{K}^{LR} \subseteq \Sbb_{K}^{GJS}$, which shows $\Sbb_{K}^{GJS} = \Sbb_{K}^{LR}$.
\end{proof}
\vspace{0.06in}

\vspace{0.06in}
\begin{lemma}\label{lem:JSsubSP}
$\Sbb_{K}^{JS} \subsetneq \Sbb_{K}^{SP}$.
\end{lemma}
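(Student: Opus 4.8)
The plan is to prove the inclusion $\Sbb_{K}^{JS}\subseteq\Sbb_{K}^{SP}$ and then separate the two sets by exhibiting a single image in $\Sbb_{K}^{SP}\setminus\Sbb_{K}^{JS}$, which upgrades the inclusion to a proper one. For the inclusion I would start from an arbitrary $\yb\in\Sbb_{K}^{JS}$ with witnessing unitary $\Db\in\Ubb$ and coefficients $\Xb_i=\Db^{T}\Yb_i$ satisfying $\norm{\Xb_i}_{0,\infty}\le K$. Since $\norm{\Xb_i}_{0,\infty}$ counts the nonzero rows of $\Xb_i$, every column $\Xb_i^{j}$ is supported on that common set of at most $K$ rows, so $\norm{\Xb_i^{j}}_0\le\norm{\Xb_i}_{0,\infty}\le K$ for all $i,j$. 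The same dictionary $\Db$ therefore certifies $\yb\in\Sbb_{K}^{SP}$, giving $\Sbb_{K}^{JS}\subseteq\Sbb_{K}^{SP}$.

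For strictness, the key observation is that membership in $\Sbb_{K}^{JS}$ forces each group to be low rank: if $\Db$ is unitary and $\Db^{T}\Yb_i$ has at most $K$ nonzero rows, then $\rank(\Yb_i)=\rank(\Db^{T}\Yb_i)\le K$ because an invertible factor preserves rank; equivalently $\Sbb_{K}^{JS}\subseteq\Sbb_{K}^{GJS}=\Sbb_{K}^{LR}$ by Lemma~\ref{lem:LReqGJS}. Hence it suffices to produce a $\yb\in\Sbb_{K}^{SP}$ one of whose groups has rank $K+1$. I would take $\Db=\Ib_n$ as the common dictionary and arrange that some group $\Yb_{i_0}$ has its first $K+1$ columns equal to distinct standard basis vectors of $\Rmb^{n}$ (feasible since $K<\min(n,M_{i_0})$), with all remaining columns and all other groups chosen $K$-sparse in the standard basis (e.g. zero). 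Then every column of every $\Yb_i$ is at most $1$-sparse, hence $K$-sparse, under $\Db=\Ib_n$, so $\yb\in\Sbb_{K}^{SP}$; yet $\rank(\Yb_{i_0})=K+1>K$, so $\yb\notin\Sbb_{K}^{LR}\supseteq\Sbb_{K}^{JS}$, and the inclusion is strict.

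The one point that needs care — and the main obstacle — is realizability: the groups $\Yb_i$ are not free variables but images $V_i\yb$ of a single $\yb$ under the fixed grouping operators, so I must confirm that the prescribed $\{\Yb_i\}$ come from an actual image via (\ref{eq:imageGroup}). This is immediate for non-overlapping patches, where the $V_i$ give a bijection between $\yb$ and $\{\Yb_i\}$; for overlapping groupings the construction of $\Yb_{i_0}$ must instead respect the shared-pixel constraints, which I would handle by choosing $K+1$ linearly independent but still $K$-sparse patches compatible with the overlap pattern rather than literal basis vectors. Because rank is invariant under the unitary change of basis, no alternative choice of $\Db$ could restore the joint-sparsity bound for $\Yb_{i_0}$, so the separation is genuine regardless of the dictionary, completing the argument.
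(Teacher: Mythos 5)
Your proof is correct, and for the strictness part it takes a genuinely different — and in fact more robust — route than the paper. The inclusion $\Sbb_{K}^{JS}\subseteq\Sbb_{K}^{SP}$ is argued identically in both: every column of $\Xb_i$ is supported on the common set of at most $K$ nonzero rows, so $\norm{\Xb_i^j}_0\le\norm{\Xb_i}_{0,\infty}\le K$. For strictness, however, the paper merely observes that a $K$-column-sparse $\Xb_i$ need not have all columns sharing a support, and concludes ``otherwise $\yb\notin\Sbb_K^{JS}$''; as written, that argument only rules out joint sparsity with respect to the \emph{particular} dictionary witnessing SP membership, and does not exclude that some other unitary $\Db$ could certify $\yb\in\Sbb_K^{JS}$. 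Your argument closes exactly this loophole: you pass through the dictionary-invariant quantity $\rank(\Yb_{i_0})$, using $\Sbb_K^{JS}\subseteq\Sbb_K^{LR}$ (via Lemma~\ref{lem:LReqGJS}) and exhibiting a group of $K+1$ standard basis vectors that is $1$-sparse columnwise under $\Db=\Ib_n$ yet has rank $K+1$, so no choice of unitary dictionary can restore joint $K$-sparsity. You also flag the realizability of the prescribed $\{\Yb_i\}$ as images $V_i\yb$ of an actual $\yb$, a point the paper silently ignores in all of its counterexample constructions; your remark that the construction is immediate for non-overlapping groupings and adaptable otherwise is the right level of care. The trade-off is that your separation leans on Lemma~\ref{lem:LReqGJS}, whereas the paper's (looser) argument is self-contained; but since that lemma is proved independently and earlier, there is no circularity, and your version is the one that actually establishes the claim rigorously.
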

\vspace{0.06in}
\begin{proof}[Proof of Lemma \ref{lem:JSsubSP}]
For any $\yb \in \Sbb_{K}^{JS}$, each $\Yb_i = \Db \, \Xb_i$ with $\left \| \Xb_i \right \|_{0, \infty} \leq K$ which counts the number of non-zero rows of $\Xb_i$. Since $\left \| \Xb_i^j \right \|_0 \leq \left \| \Xb_i \right \|_{0, \infty} \leq K$ $\forall i,\,j$, we have $\yb \in \Sbb_{K}^{SP}$ .

On the other hand, for $\yb \in \Sbb_{K}^{SP}$, \ie $\left \| \Xb_i^j \right \|_0 \leq K$. The condition $\left \| \Xb_i^j \right \|_0 = \left \| \Xb_i \right \|_{0, \infty} \leq K$ $\forall i,\,j$ holds, only if all $\Xb_i^j$ share the same supports. Otherwise, $\yb \notin \Sbb_{K}^{JS}$.
\end{proof}
\vspace{0.16in}

We now prove the Lemma~\ref{lem:SPeq} and Lemma~\ref{lem:JSeq} which are relatively trivial, and hold only when $N = 1$.

\vspace{0.06in}
\begin{lemma}\label{lem:SPeq}
$\Sbb_K^{SP} = \Sbb_K^{GS}$, when $N=1$.
\end{lemma}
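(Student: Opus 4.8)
The plan is to exploit the fact that with $N=1$ there is only a single data matrix $\Yb_1$ and a single dictionary in play, so the only conceptual difference between the two models—whether one dictionary is \emph{shared} across the $N$ groups (SP) or a \emph{separate} dictionary is allowed per group (GS)—becomes vacuous. I would therefore prove equality by establishing the two inclusions, each of which reduces to a trivial renaming of the dictionary variable.

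First I would show $\Sbb_K^{SP} \subseteq \Sbb_K^{GS}$. Given $\yb \in \Sbb_K^{SP}$, the definition provides a common $\Db \in \Ubb$ with $\Yb_1 = \Db \, \Xb_1$ and $\norm{\Xb_1^j}_0 \leq K$ for all $j$. Since $N=1$, the GS membership condition asks only for a single $\Db_1 \in \Ubb$ with $\Yb_1 = \Db_1 \Xb_1$ and the same column-sparsity bound; choosing $\Db_1 := \Db$ certifies $\yb \in \Sbb_K^{GS}$. For the reverse inclusion $\Sbb_K^{GS} \subseteq \Sbb_K^{SP}$, I would take any $\yb \in \Sbb_K^{GS}$, which with $N=1$ yields a single $\Db_1 \in \Ubb$ and $\Xb_1$ satisfying $\Yb_1 = \Db_1 \Xb_1$ and $\norm{\Xb_1^j}_0 \leq K$ for all $j$; setting $\Db := \Db_1$ makes this dictionary trivially ``common'' over the one group, so $\yb \in \Sbb_K^{SP}$. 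The two inclusions give the claimed equality $\Sbb_K^{SP} = \Sbb_K^{GS}$.

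I expect no genuine obstacle here: the statement is essentially a tautology once one notes that the shared-dictionary constraint distinguishing SP from GS only bites when there are at least two groups to share across. The substantive comparison between these two models—where the shared dictionary is strictly more restrictive, yielding the proper inclusion $\Sbb_K^{SP} \subsetneq \Sbb_K^{GS}$—belongs to the $N>1$ regime treated in Theorem \ref{thm:main}, not here.
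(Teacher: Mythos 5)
Your argument is correct and is essentially the paper's own proof: with $N=1$ the shared dictionary $\Db$ of the SP model and the per-group dictionary $\Db_1$ of the GS model are the same object, so the two membership conditions coincide. Your version merely spells out the two trivial inclusions that the paper's one-line proof leaves implicit.
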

\vspace{0.06in}
\begin{proof}[Proof of Lemma \ref{lem:SPeq}]
Since $N=1$, there is only one group in the representation, \ie $\Yb_1$ is the equivalent representation of  $\yb$.
Thus the shared dictionary $\Db$ in $\Sbb_K^{SP}$ is equivalent to the $\Db_1$ in $\Sbb_K^{GS}$.
Therefore, $\Sbb_K^{SP} = \Sbb_K^{GS}$.
\end{proof}
\vspace{0.06in}

\vspace{0.06in}
\begin{lemma}\label{lem:JSeq}
$\Sbb_K^{JS} = \, \Sbb_K^{GJS} = \, \Sbb_K^{LR} = \, \Sbb_K^{SPLR}$, when $N=1$.
\end{lemma}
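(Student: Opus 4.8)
The plan is to reduce this chain of four equalities to the two general lemmas already established, Lemma~\ref{lem:LReqGJS} and Lemma~\ref{lem:JSsubSP}, together with the dictionary-collapse observation used in Lemma~\ref{lem:SPeq}. Nothing new needs to be built from scratch; the work is in stitching the existing pieces in the right order.

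First I would establish $\Sbb_K^{JS} = \Sbb_K^{GJS}$ in the $N=1$ case. The definitions of $\Sbb_K^{JS}$ and $\Sbb_K^{GJS}$ differ only in that the former requires a single dictionary $\Db$ common to all groups, whereas the latter permits a separate $\Db_i$ for each group. When $N=1$ there is exactly one group $\Yb_1$, so the shared dictionary $\Db$ and the group-wise dictionary $\Db_1$ impose the identical constraint, and the two sets coincide. This is precisely the argument used in Lemma~\ref{lem:SPeq}, applied now to the joint-sparsity constraint instead of the plain sparsity constraint.

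Next I would invoke Lemma~\ref{lem:LReqGJS}, which asserts $\Sbb_K^{GJS} = \Sbb_K^{LR}$ for \emph{any} $N$ and $K$ without extra assumptions; in particular it holds at $N=1$. Chaining with the previous step gives $\Sbb_K^{JS} = \Sbb_K^{GJS} = \Sbb_K^{LR}$. Finally, for the SPLR set I would unfold its definition $\Sbb_K^{SPLR} = \Sbb_K^{LR} \cap \Sbb_K^{SP}$ and show the intersection is redundant: by Lemma~\ref{lem:JSsubSP} we have $\Sbb_K^{JS} \subseteq \Sbb_K^{SP}$, and combining this with $\Sbb_K^{LR} = \Sbb_K^{JS}$ yields $\Sbb_K^{LR} \subseteq \Sbb_K^{SP}$. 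Hence $\Sbb_K^{LR} \cap \Sbb_K^{SP} = \Sbb_K^{LR}$, so $\Sbb_K^{SPLR} = \Sbb_K^{LR}$, closing the chain of four equalities.

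I do not expect a genuine obstacle here; the lemma is a direct corollary of the general Lemmas~\ref{lem:LReqGJS} and \ref{lem:JSsubSP} once the $N=1$ collapse of the shared and group-wise dictionaries is noted. The only point requiring a moment of care is that Lemma~\ref{lem:JSsubSP} gives a \emph{strict} inclusion $\Sbb_K^{JS} \subsetneq \Sbb_K^{SP}$ in general; this strictness is harmless here, because SPLR is defined as the intersection $\Sbb_K^{LR} \cap \Sbb_K^{SP}$ rather than as JS itself, and the containment $\Sbb_K^{LR} \subseteq \Sbb_K^{SP}$ is exactly what makes that intersection reduce to $\Sbb_K^{LR}$.
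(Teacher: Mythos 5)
Your proof is correct and follows exactly the same route as the paper's: collapse the shared and group-wise dictionaries at $N=1$ to get $\Sbb_K^{JS} = \Sbb_K^{GJS}$, apply Lemma~\ref{lem:LReqGJS} to reach $\Sbb_K^{LR}$, and use Lemma~\ref{lem:JSsubSP} to show the intersection defining $\Sbb_K^{SPLR}$ is redundant. Your step spelling out why $\Sbb_K^{LR} \cap \Sbb_K^{SP} = \Sbb_K^{LR}$ is in fact more explicit than the paper's one-line appeal to Lemma~\ref{lem:JSsubSP}.
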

\vspace{0.06in}
\begin{proof}[Proof of Lemma \ref{lem:JSeq}]
Since $N=1$, similar to the proof of Lemma~\ref{lem:SPeq}, $\Sbb_K^{JS} = \, \Sbb_K^{GJS}$.
By Lemma~\ref{lem:LReqGJS}, $\Sbb_K^{JS} = \, \Sbb_K^{GJS} = \, \Sbb_K^{LR}$. 
Finally, by Lemma~\ref{lem:JSsubSP}, the intersection $\Sbb_K^{SPLR} = \Sbb_K^{LR}$, which completes the proof. 
\end{proof}
\vspace{0.06in}

Lemmas~\ref{lem:LReqGJS} to \ref{lem:JSeq} together prove Theorem~\ref{thm:mainK1}, which states the relationship of the model sets when $N = 1$.
We now consider the general case when $N > 1$, and show the following lemmas to prove Theorem~\ref{thm:main}.
We first show Lemma~\ref{lem:JSsubSPsubGS} and Lemma~\ref{lem:JSsubGJSsubGS}, which are relatively trivial.


\vspace{0.06in}
\begin{lemma}\label{lem:JSsubSPsubGS}
$\Sbb_K^{JS} \subsetneq \Sbb_{K}^{SP} \subsetneq \Sbb_{K}^{GS}$, when $N>1$.
\end{lemma}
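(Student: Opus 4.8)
The plan is to read the two-link chain as two separate inclusions and to locate all of the content in a single strictness claim. The inclusion $\Sbb_K^{JS} \subsetneq \Sbb_K^{SP}$ is nothing new: it is exactly Lemma~\ref{lem:JSsubSP}, which already establishes both the containment and its properness for any $K$ and any $N$, so I would simply invoke it. The inclusion $\Sbb_K^{SP} \subseteq \Sbb_K^{GS}$ is immediate from the definitions: if a single unitary $\Db \in \Ubb$ sparsifies every group, so that each column of $\Db^T \Yb_i$ is $K$-sparse, then the group-wise choice $\Db_i = \Db$ for all $i$ witnesses membership in $\Sbb_K^{GS}$. Hence the only substantive task is to show that this second containment is \emph{strict} when $N>1$, i.e. to exhibit an image $\yb \in \Sbb_K^{GS}\setminus\Sbb_K^{SP}$.

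To build such a witness I would use $N=2$ groups whose per-group sparsifying bases are genuinely incompatible, so that no common unitary can serve both. Concretely, for $K=1$ and $n=2$ take $\Db_1=\Ib_2$ with $\Yb_1=[\,\eb_1\;\;\eb_2\,]$, each column $1$-sparse in the standard basis, and $\Db_2=\tfrac{1}{\sqrt2}[\,\eb_1+\eb_2\;\;\eb_1-\eb_2\,]$ with $\Yb_2=\Db_2$, each column $1$-sparse in the rotated basis. Then $\yb\in\Sbb_1^{GS}$ by construction. To see $\yb\notin\Sbb_1^{SP}$, note that a column is $1$-sparse under a unitary $\Db$ iff it is parallel to a column of $\Db$, and columns parallel to distinct (hence orthogonal) atoms must themselves be orthogonal; but $\eb_1$ from $\Yb_1$ and $(\eb_1+\eb_2)/\sqrt2$ from $\Yb_2$ are neither parallel nor orthogonal, so no single orthonormal basis can make every column of both groups $1$-sparse. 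Embedding these blocks in the leading $2\times 2$ coordinates and zero-padding extends the example to any $n$, and choosing the two groups to occupy disjoint pixels makes $\{\Yb_i\}$ automatically consistent with a genuine image via (\ref{eq:imageGroup}).

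The step I expect to be the real obstacle is upgrading this strictness argument from $K=1$ to general $K$. For $K=1$ the contradiction is clean because $1$-sparsity forces each column to be a single atom, turning the incompatibility into a bare orthogonality violation; for $K>1$ a column that is $K$-sparse in a common basis need not be a basis direction, so the direct argument breaks down. I would handle this either by freezing $K-1$ of the sparse coordinates to a structure shared by both groups and letting only the remaining coordinate carry the two-dimensional incompatibility above, or, failing a clean explicit construction, by a dimension-counting argument showing that the collections $\{\Yb_i\}$ admitting a single common unitary sparsifier form a lower-dimensional subset of those admitting group-wise sparsifiers. A secondary and routine point to keep honest is the realizability of the chosen $\{\Yb_i\}$ as $V_i\yb$ for one image $\yb$, which disjoint grouping sidesteps by removing any overlap-consistency constraint.
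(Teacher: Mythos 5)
Your proposal takes essentially the same route as the paper: the inclusion $\Sbb_K^{SP}\subseteq\Sbb_K^{GS}$ by setting $\Db_i=\Db$ for all $i$, the first strict inclusion by citing Lemma~\ref{lem:JSsubSP}, and strictness of the second inclusion via an $N=2$ counterexample with $\Db_1=\Ib_n$ and $\Db_2$ a planar rotation of the first two coordinates. In fact your write-up is \emph{more} complete than the paper's on the one point that matters: the paper states the rotated-dictionary construction, observes that $\|\Yb_1^j\|_0\le K$, and then stops without ever verifying that no single unitary $\Db$ can sparsify both groups, whereas you carry that verification through for $K=1$ (a column is $1$-sparse under a unitary iff it is parallel to an atom, and atoms are pairwise orthogonal, so a pair of columns that are neither parallel nor orthogonal cannot both be $1$-sparse in one basis). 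The difficulty you flag --- upgrading the strictness argument from $K=1$ to general $K$, where a $K$-sparse column need not lie along a single atom --- is a real gap, and it is a gap in the paper's own proof as well, since the paper never specifies $\Xb_1,\Xb_2$ or completes the non-membership argument for any $K$. Your suggested fix (confine the incompatibility to a two-dimensional coordinate block carrying the rotation, with the remaining $K-1$ nonzeros placed in coordinates shared by both groups, and include enough columns so that a common $\Db$ would be forced to contain both the standard and the rotated atoms in that block) is the natural way to close it; the paper offers nothing beyond the construction itself.
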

\vspace{0.06in}
\begin{proof}[Proof of Lemma \ref{lem:JSsubSPsubGS}]
For any $\yb \in \Sbb_{K}^{SP}$, $\Yb_i = \Db \, \Xb_i$ with $\left \| \Xb_i^j \right \|_0 \leq K$ $\forall i, \, j$. Let $\Db_i = \Db$ $\forall i$, thus we have $\Yb_i = \Db_i \, \Xb_i$ with $\left \| \Xb_i^j \right \|_0 \leq K$ $\forall i,\, j$. Therefore, $\yb \in \Sbb_{K}^{GS}$. On the other hand, there exists $\yb \in \Sbb_{K}^{GS}$ but $\yb \notin \Sbb_{K}^{SP}$. We first consider $N=2$ and construct a counter example $\Yb_1 = \Db_1 \Xb_1$ and $\Yb_2 = \Db_2 \Xb_2$ where $\left \| \Xb_i^j \right \|_0 \leq K$ $\forall j$ and $i = 1,2$, and  
\begin{equation} 
\nonumber 
\Db_1 = \mbf{I}_n
, \,
\Db_2 = 
\begin{bmatrix}
\cos(\theta) & -\sin(\theta) & \mbf{0}  \\ 
\sin(\theta) & \cos(\theta) & \vdots \\ 
\mbf{0} & \dotsc & \mbf{I}_{n-2} \\ 
\end{bmatrix} \, ,
\end{equation}
with any $\theta \neq 2 l \pi$, $l \in \mathbb{Z}$, and identity matrix $\mbf{I}_n \in \Rmb^{n \times n}$.
Since $\Db_1 = \mbf{I}_n$, $\left \| \Yb_1^j \right \|_0 \leq K$ $\forall j$
Furthermore, Lemma~\ref{lem:JSsubSP} shows that $\Sbb_K^{JS} \subsetneq \Sbb_{K}^{SP}$, which completes the proof.
\end{proof}
\vspace{0.06in}


\vspace{0.06in}
\begin{lemma}\label{lem:JSsubGJSsubGS}
$\Sbb_K^{JS} \subseteq \Sbb_{K}^{GJS} \subseteq \Sbb_{K}^{GS}$, when $N>1$.
\end{lemma}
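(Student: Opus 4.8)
The plan is to establish the two inclusions separately, each by a direct set-membership argument that parallels the reasoning already used in Lemmas~\ref{lem:JSsubSP} and~\ref{lem:JSsubSPsubGS}. Since both inclusions are of the ``$\subseteq$'' type, for each it suffices to take an arbitrary image in the smaller set and exhibit dictionaries and sparse codes witnessing membership in the larger set.

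For the first inclusion $\Sbb_K^{JS} \subseteq \Sbb_{K}^{GJS}$, I would start from any $\yb \in \Sbb_K^{JS}$, so that there is a single unitary $\Db \in \Ubb$ with $\Yb_i = \Db\,\Xb_i$ and $\norm{\Xb_i}_{0,\infty} \leq K$ for all $i$. The key step is simply to specialize the per-group dictionaries of the GJS model to the common one, i.e. set $\Db_i = \Db$ for every $i$. Then the same factorization $\Yb_i = \Db_i\,\Xb_i$ with $\norm{\Xb_i}_{0,\infty}\leq K$ certifies $\yb \in \Sbb_K^{GJS}$. This is the same ``take all dictionaries equal'' device already used in the proof of Lemma~\ref{lem:JSsubSPsubGS}.

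For the second inclusion $\Sbb_{K}^{GJS} \subseteq \Sbb_{K}^{GS}$, I would take any $\yb \in \Sbb_K^{GJS}$, so there exist $\{\Db_i\}$ in $\Ubb$ with $\Yb_i = \Db_i\,\Xb_i$ and $\norm{\Xb_i}_{0,\infty}\leq K$. The key observation --- exactly as in Lemma~\ref{lem:JSsubSP} --- is that the $\ell_{0,\infty}$ norm counts the number of nonzero rows of $\Xb_i$, so every column obeys $\norm{\Xb_i^j}_0 \leq \norm{\Xb_i}_{0,\infty}\leq K$. The same factorization, now read as a column-wise sparsity constraint, then witnesses $\yb \in \Sbb_K^{GS}$.

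I do not expect any genuine obstacle here; as the text itself remarks, the lemma is ``relatively trivial.'' The only point requiring a little care is keeping the bookkeeping straight between the row-counting $\ell_{0,\infty}$ constraint of the (G)JS models and the column-wise $\ell_0$ constraint of GS, which is precisely the inequality $\norm{\Xb_i^j}_0 \leq \norm{\Xb_i}_{0,\infty}$ invoked above. Note that the \emph{strict} inclusions claimed in Theorem~\ref{thm:main}(2) are not part of this lemma and would need separate counterexamples.
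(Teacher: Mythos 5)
Your proposal is correct and follows essentially the same route as the paper's proof: the first inclusion by setting $\Db_i = \Db$ for all $i$, and the second via the column-wise bound $\norm{\Xb_i^j}_0 \leq \norm{\Xb_i}_{0,\infty} \leq K$. Your closing remark that the strict inclusions in Theorem~\ref{thm:main}(2) require separate counterexamples is also consistent with the paper, which states the lemma itself only with $\subseteq$.
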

\vspace{0.06in}
\begin{proof}[Proof of Lemma \ref{lem:JSsubGJSsubGS}]
For any $\yb \in \Sbb_{K}^{JS}$, each $\Yb_i = \Db \, \Xb_i$ with $\left \| \Xb_i \right \|_{0, \infty} \leq K$. Let $\Db_i = \Db$ $\forall i$, then $\yb \in \Sbb_{K}^{GJS}$. Thus $\Sbb_K^{JS} \subseteq \Sbb_{K}^{GJS}$
Furthermore, for any $\yb \in \Sbb_{K}^{GJS}$, $\Yb_i = \Db_i \, \Xb_i$ with $\left \| \Xb_i \right \|_{0, \infty} \leq K$ $\forall i$.
Since $\left \| \Xb_i^j \right \|_0 \leq \left \| \Xb_i \right \|_{0, \infty} \leq K$ $\forall i,\,j$, we have $\yb \in \Sbb_{K}^{GS}$ and thus $\yb \in \Sbb_{K}^{GS}$, which completes the proof.
\end{proof}
\vspace{0.06in}

Lemmas~\ref{lem:JSsubSPsubGS} and \ref{lem:JSsubGJSsubGS} show that the GJS and SP sets are both the supersets of JS, and also both are subsets of GS.
We now show that neither LR nor SP include one another by Lemmas~\ref{lem:SPnsubLR} and \ref{lem:LRnsubSP}.

\vspace{0.06in}
\begin{lemma} \label{lem:SPnsubLR}
$\Sbb_{K}^{SP} \nsubseteq \Sbb_{K}^{LR}$.
\end{lemma}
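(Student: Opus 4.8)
The plan is to prove the non-inclusion by exhibiting a single counterexample image $\yb$ that lies in $\Sbb_{K}^{SP}$ but not in $\Sbb_{K}^{LR}$. The guiding observation is that the SP model constrains only each \emph{column} of every $\Yb_i$ to be $K$-sparse under a common unitary $\Db$, and column-wise sparsity imposes no bound whatsoever on $\rank(\Yb_i)$ once the columns are permitted to have \emph{different} supports. Rank is governed by the \emph{union} of the column supports, which can be as large as $\min(n, M_i)$ even when every individual column is $1$-sparse.

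First I would fix the dictionary to be the identity, $\Db = \Ib_n \in \Ubb$, so that the SP membership condition collapses to requiring every column of every $\Yb_i$ to be $K$-sparse in the standard basis. This eliminates the dictionary from the argument and reduces the question to directly controlling the sparsity and the rank of the patch-group matrices. Next I would construct the groups so that at least one of them, say $\Yb_{i_0}$, has $K+1$ of its columns equal to the distinct standard basis vectors $\mbf{e}_1, \dots, \mbf{e}_{K+1} \in \Rmb^{n}$ (any remaining columns taken to be $K$-sparse as well). This is feasible because the standing assumption $K < \min(n, M_i)$ guarantees at least $K+1$ coordinates and at least $K+1$ columns. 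Each such column is $1$-sparse, hence $K$-sparse, so the SP condition holds and $\yb \in \Sbb_{K}^{SP}$; yet these $K+1$ columns are linearly independent, forcing $\rank(\Yb_{i_0}) \geq K+1 > K$, so $\yb \notin \Sbb_{K}^{LR}$. Applied to the single group in the $N=1$ case, the same construction settles that case as well, consistent with Theorem~\ref{thm:mainK1}.

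The hard part will be \emph{realizability}: I must ensure that the prescribed high-rank group $\Yb_{i_0}$ genuinely arises from some image $\yb$ under the fixed operators $\{V_i\}$, while simultaneously keeping \emph{all} other groups $K$-sparse under the same identity dictionary. The overlap consistency encoded in (\ref{eq:imageGroup}) forces shared pixels of overlapping patches to agree, so one cannot prescribe the $\Yb_i$'s entirely independently. I would sidestep this by choosing $\yb$ to be globally sparse with its nonzero entries well separated, so that every extracted patch contains at most $K$ nonzeros, making column-sparsity automatic across \emph{all} groups, while the block-matched group $\Yb_{i_0}$ still collects patches whose single active pixel lands in distinct coordinates, producing the required rank. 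In the non-overlapping or single-group reshaping setting this consistency is immediate (any matrix is realizable by some $\yb$), and the separated-sparse-feature construction extends the counterexample to the general overlapping grouping.
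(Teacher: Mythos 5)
Your proposal is correct and follows essentially the same route as the paper: both exhibit a group matrix whose columns are individually $K$-sparse under a unitary dictionary yet whose rank is $K+1$, the paper using the $(K+1)\times(K+1)$ all-ones-minus-identity matrix $\Lambda_K$ (each column exactly $K$-sparse, full rank) and you using $K+1$ distinct standard basis vectors, which is if anything a simpler witness. Your added discussion of realizability under the fixed grouping operators $\{V_i\}$ is a refinement that the paper's proof silently skips, but it does not change the substance of the argument.
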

\vspace{0.06in}
\begin{proof}[Proof of Lemma \ref{lem:SPnsubLR}]
For $\yb \in \Sbb_{K}^{SP}$, $\exists \Db \in \Ubb$ such that $\Yb_i = \Db \, \Xb_i$ with $\norm{\Xb_i^j}_0 \leq K$  $\forall i,j$, and $\rank(\Yb_i) = \rank(\Xb_i)$.
Whereas, $\rank(\Xb_i)$ may not be smaller than $K$. 
As a counter example, we can construct $\Xb_i$ as
\begin{equation} 
\nonumber
\Xb_i =  \begin{bmatrix}
    \Lambda_K & \mbf{B}_i^U\\
    \mbf{0} & \mbf{B}_i^L
  \end{bmatrix}, \; 
\Lambda_K =  \begin{bmatrix}
    0 & 1 & \dotsc & 1 \\
    1 & 0 & 1 & 1 \\
		\vdots & 1 & \ddots & 1 \\
		1 & \dotsc & 1 & 0 
  \end{bmatrix} , \,
\mbf{B}_i \triangleq \begin{bmatrix}
    \mbf{B}_i^U \\
    \mbf{B}_i^L
  \end{bmatrix}
\end{equation}
Here the circular matrix $\Lambda_K \in \Rmb^{(K + 1) \times (K + 1)}$ is full-rank.
If the size of $\Xb_i$ is larger than $\Lambda$, we pad zero rows below $\Lambda$ when $n > K + 1$, and random $\mbf{B}_i$ with $\norm{\mbf{B}_i^j}_0 \leq K$ $\forall j$ when $M_i > K + 1$.
The $\Xb_i$ satisfies $\norm{\Xb_i^j}_0 \leq K$ $\forall j$ but $\rank(\Xb_i) = K + 1$. Thus $\yb \notin \Sbb_{K}^{LR}$, which completes the proof.
\end{proof}
\vspace{0.06in}

\vspace{0.06in}
\begin{lemma} \label{lem:LRnsubSP}
$\Sbb_{K}^{LR} \nsubseteq \Sbb_{K}^{SP}$, when $N > {n \choose K}$.
\end{lemma}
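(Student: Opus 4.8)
The plan is to exhibit a single image $\yb \in \Sbb_K^{LR} \setminus \Sbb_K^{SP}$, and the natural tool is a pigeonhole argument on the ${n \choose K}$ coordinate $K$-subspaces available to any fixed unitary dictionary. First I would pick $N$ \emph{distinct} $K$-dimensional subspaces $\mathcal{C}_1,\dots,\mathcal{C}_N \subset \Rmb^n$ (infinitely many exist, so this is possible for any finite $N$), and build each block $\Yb_i$ to have rank exactly $K$ with $\range(\Yb_i)=\mathcal{C}_i$, filling its columns with vectors in general position inside $\mathcal{C}_i$. By Definition~\ref{def:LR} this immediately gives $\yb \in \Sbb_K^{LR}$, since $\rank(\Yb_i)=K\le K$ for every $i$. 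All the work then goes into showing that no single unitary $\Db$ can make every column of every $\Yb_i$ be $K$-sparse, i.e. $\yb \notin \Sbb_K^{SP}$.

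The crux is a \emph{forcing lemma}: if $\Yb_i=\Db\Xb_i$ with every column of $\Xb_i$ being $K$-sparse and the columns of $\Yb_i$ in general position (any $K$ of them linearly independent), then $\mathcal{C}_i$ must equal the span of some $K$ columns of $\Db$, i.e. a coordinate $K$-subspace of $\Db$. Indeed, each $K$-sparse column lies in $\mathcal{C}_i \cap \mathrm{span}\{\mbf{d}_k : k \in S\}$ for some $K$-subset $S$ (here $\mbf{d}_k$ denotes the $k$-th column of $\Db$); were $\mathcal{C}_i$ not itself one of these coordinate subspaces, each such intersection would be a \emph{proper} subspace of $\mathcal{C}_i$ of dimension at most $K-1$, and since a real vector space is never a finite union of proper subspaces, sufficiently many general-position columns could not all lie in $\bigcup_S (\mathcal{C}_i \cap \mathrm{span}\{\mbf{d}_k : k \in S\})$, a contradiction. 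Granting the lemma, assume $\yb\in\Sbb_K^{SP}$ with common $\Db$: then each $\mathcal{C}_i$ is one of the ${n \choose K}$ coordinate $K$-subspaces of $\Db$, so $i\mapsto\mathcal{C}_i$ maps $\{1,\dots,N\}$ into a set of size ${n \choose K}$. Since $N > {n \choose K}$, two indices share the same subspace, contradicting the distinctness of the $\mathcal{C}_i$. Hence $\yb\notin\Sbb_K^{SP}$, which is the claim.

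The main obstacle is making the forcing lemma quantitatively compatible with the standing assumptions. The phrase ``sufficiently many general-position columns'' is the delicate point: a crude count shows that $(K-1){n \choose K}$ columns always rule out every non-coordinate $\mathcal{C}_i$ (each of the at most ${n \choose K}$ proper intersections has dimension $\le K-1$ and, by general position, holds at most $K-1$ columns), but the theorem only posits $K < M_i$, so I must argue that far fewer columns suffice or design the $\mathcal{C}_i$ accordingly; for instance, when $2K\le n$ a generic $K$-subspace meets each coordinate $K$-subspace only at the origin, so very few generic columns already preclude the non-coordinate case. A second, structural subtlety is realizability: the blocks $\Yb_i = V_i\yb$ are coupled through the fixed grouping operators $V_i$ and typically share pixels, so the $\mathcal{C}_i$ cannot be prescribed completely freely. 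Since only one counterexample is needed for the $\nsubseteq$ assertion, I would resolve this by carrying out the construction for a grouping whose relevant patches are disjoint, leaving the corresponding pixels of $\yb$ unconstrained and hence free to realize the chosen $\{\mathcal{C}_i\}$.
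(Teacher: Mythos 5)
Your overall strategy---a pigeonhole argument over the ${n \choose K}$ coordinate $K$-subspaces of a putative common dictionary $\Db$---is the same as the paper's, but you run the pigeonhole at the level of \emph{subspaces} (each block's column space must coincide with one of the ${n \choose K}$ coordinate subspaces, yet $N$ distinct column spaces are required), whereas the paper runs it at the level of \emph{individual columns}. The paper takes $M_i = K$ columns per block so that $\rank(\Yb_i) \le K$ holds automatically, places all $NK$ columns in general position in $\Rmb^n$ (any $K+1$ of them linearly independent), and observes that each coordinate $K$-subspace of $\Db$ can then contain at most $K$ of these columns, so $\Db$ can sparsify at most ${n \choose K}K < NK$ of them. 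This sidesteps your forcing lemma entirely.

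The genuine gap in your version is exactly the step you flag as delicate: the forcing lemma. Because $\Db$ is chosen \emph{after} the $\mathcal{C}_i$'s, different columns of a single block may carry different supports, so they need only lie in the union $\bigcup_S \left(\mathcal{C}_i \cap \mathrm{span}\{\mbf{d}_k : k \in S\}\right)$ over $K$-subsets $S$; your own count shows that excluding this for a non-coordinate $\mathcal{C}_i$ requires on the order of $(K-1){n \choose K}$ general-position columns per block, far more than the $M_i > K$ the setup provides. Your proposed rescue via $2K \le n$ genericity does not close the gap: genericity of $\mathcal{C}_i$ relative to a \emph{fixed} family of coordinate subspaces is irrelevant when the adversary picks $\Db$ afterward, since it can always rotate atoms of $\Db$ to intersect $\mathcal{C}_i$ nontrivially. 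The clean repair is to drop the subspace-level pigeonhole and count columns as the paper does (your construction already supports this if you additionally place the columns of different blocks in joint general position in $\Rmb^n$). On the other hand, your realizability remark---that the blocks $\Yb_i = V_i\,\yb$ are coupled through shared pixels, so the $\mathcal{C}_i$ cannot be prescribed freely---is a legitimate concern that the paper's proof silently ignores, and restricting to a grouping with disjoint patches is the right way to handle it.
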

\vspace{0.06in}
\begin{proof}[Proof of Lemma \ref{lem:LRnsubSP}]
For an image signal $\yb \in \Sbb_{K}^{LR}$, i.e., $\left \{ \Yb_i \right \}_{i=1}^N$ with each $\Yb_i \in \Rmb^{n \times M}$, we first consider $M = K$. 
Thus we have $NK$ patch vectors, denoted as $\Yb_i^j \in \Rmb^{n}$, e.g.,  $\Yb_i^j$ is the $j$-th column of $\Yb_i$. 
Without loss of generality, in this proof we assume that all $\left \{ \Yb_i^j \right \}$ are in general position, thus any $K+1$ vectors are linearly independent. Assuming the contrary of the Lemma, i.e., $\yb \in \Sbb_{K}^{SP}$, thus $\exists \Db \in \Ubb$ such that $\Yb_i^j = \Db \, \Xb_i^j$ with $\norm{\Xb_i^j}_0 \leq K$  $\forall i, j$. Given any $K$ atoms of $\Db$, there is no more than $K$ vectors from $\left \{ \Yb_i^j  \right \}$ that can be spanned by these $K$ atoms, because any $K+1$ vectors are linearly independent. There are ${n \choose K}$ sets of $K$ atoms in total, thus $\Db$ can at most sparsify ${n \choose K}\,K$ vectors from $\left \{ \Yb_i^j \right \}$. Since ${n \choose K}\,K < NK$, there is a contradiction, which means the contrary of the Lemma is false. Therefore, $\Sbb_{K}^{LR} \nsubseteq \Sbb_{K}^{SP}$, when $N > {n \choose K}$. When $M > K$, one can construct the $\Yb_i$ by simply adding more columns to the $n \times K$ matrices while maintaining the rank $K$, and the result still holds.
\end{proof}

\vspace{0.06in}
\begin{lemma} \label{lem:JSinSPLR}
$\Sbb_{K}^{JS} \subsetneq \Sbb_{K}^{SPLR}$, when $N>1$.
\end{lemma}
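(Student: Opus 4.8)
The plan is to prove the inclusion and its strictness separately. For the inclusion $\Sbb_K^{JS} \subseteq \Sbb_K^{SPLR}$, I would unfold the definition $\Sbb_K^{SPLR} = \Sbb_K^{LR} \cap \Sbb_K^{SP}$ and verify membership in each factor. Membership in $\Sbb_K^{SP}$ is immediate from Lemma~\ref{lem:JSsubSP}, which already gives $\Sbb_K^{JS} \subseteq \Sbb_K^{SP}$. Membership in $\Sbb_K^{LR}$ follows by composing Lemma~\ref{lem:JSsubGJSsubGS} (giving $\Sbb_K^{JS} \subseteq \Sbb_K^{GJS}$) with Lemma~\ref{lem:LReqGJS} ($\Sbb_K^{GJS} = \Sbb_K^{LR}$); alternatively, directly, if $\Yb_i = \Db\Xb_i$ with $\norm{\Xb_i}_{0,\infty} \le K$ then $\Xb_i$ has at most $K$ nonzero rows, so $\rank(\Yb_i) = \rank(\Xb_i) \le K$ since $\Db$ is unitary. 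Intersecting the two memberships yields $\Sbb_K^{JS} \subseteq \Sbb_K^{SPLR}$.

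For strictness I must exhibit a point of $\Sbb_K^{SPLR}$ that escapes $\Sbb_K^{JS}$, and this is where $N>1$ is essential, since Lemma~\ref{lem:JSeq} shows the two sets coincide when $N=1$. I would therefore take $N=2$ and use the identity dictionary $\Db=\Ib$ to witness the SP part of the SPLR membership. Concretely, I would let $\Yb_1$ have columns spanning $R_1 = \range(\Yb_1) = \mathrm{span}(\mbf e_1,\dots,\mbf e_K)$, with each column $1$-sparse, and let $\Yb_2$ have the $2$-sparse columns $\mbf e_1-\mbf e_2,\dots,\mbf e_K-\mbf e_{K+1}$, spanning $R_2 = \range(\Yb_2) = \{x : \sum_{i=1}^{K+1} x_i = 0,\ x_j = 0\ \forall j > K+1\}$. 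For $K\ge 2$ every column is $K$-sparse, so $\yb\in\Sbb_K^{SP}$, and each $\Yb_i$ has rank $K$, so $\yb\in\Sbb_K^{LR}$; padding each group with additional $K$-sparse columns drawn from the same ranges fulfils $M_i>K$ without altering the ranges. Hence $\yb\in\Sbb_K^{SPLR}$.

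The crux, and the step I expect to be hardest, is proving that no common unitary $\Db'$ can jointly $K$-sparsify both groups, i.e.\ that $\yb\notin\Sbb_K^{JS}$. The key observation is that membership in $\Sbb_K^{JS}$ forces each $K$-dimensional range $R_i$ to coincide with the span of some $K$ atoms of $\Db'$, so both $R_1$ and $R_2$ are coordinate subspaces of a single orthonormal basis; equivalently, the orthogonal projectors onto $R_1$ and $R_2$ are simultaneously diagonalizable and therefore commute, which forces every principal angle between $R_1$ and $R_2$ to be $0$ or $\pi/2$. I would then contradict this by exhibiting an acute principal angle: the vector $\mbf u = \mbf e_1 + \dots + \mbf e_K \in R_1$ is orthogonal to $R_1\cap R_2$ yet satisfies $\langle \mbf u,\ \mbf e_1-\mbf e_{K+1}\rangle = 1 \neq 0$ with $\mbf e_1-\mbf e_{K+1}\in R_2$, so the residual of $R_1$ modulo $R_1\cap R_2$ is neither contained in nor orthogonal to $R_2$. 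This violates the commuting-projector structure, so $\yb\notin\Sbb_K^{JS}$ and the inclusion is strict. I note that this construction uses $K\ge 2$; for $K=1$ the SP and LR constraints already force all columns of each group to be parallel to a single dictionary atom, collapsing SPLR into JS, so that degenerate case would have to be excluded or handled separately.
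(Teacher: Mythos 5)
Your proof is correct and, on the decisive step, takes a genuinely different and in fact more complete route than the paper's. The inclusion $\Sbb_{K}^{JS} \subseteq \Sbb_{K}^{SPLR}$ is handled the same way in both: Lemma~\ref{lem:JSsubSP} gives the SP factor, and Lemmas~\ref{lem:JSsubGJSsubGS} and \ref{lem:LReqGJS} (or your direct rank bound) give the LR factor. For strictness, the paper likewise builds an explicit $N=2$ witness --- two $(K+1)\times(K+1)$ matrices $\Xb_1,\Xb_2$ with $K$-sparse columns, rank $K$, and $K+1$ nonzero rows --- but it only verifies that this \emph{particular} representation violates $\norm{\Xb_1}_{0,\infty}\le K$; it never rules out that some \emph{other} unitary dictionary $\Db'$ jointly $K$-sparsifies both groups, which is what $\yb\notin\Sbb_{K}^{JS}$ actually requires. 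Your argument supplies exactly the missing piece: a rank-$K$ group that is jointly $K$-sparse under $\Db'$ must have range equal to the span of $K$ atoms of $\Db'$, so $R_1$ and $R_2$ would be coordinate subspaces of a single orthonormal basis, with commuting projectors and principal angles in $\{0,\pi/2\}$, and your vector $\mbf{e}_1+\cdots+\mbf{e}_K$ exhibits an acute angle. This dictionary-independent obstruction is the right way to certify non-membership in JS, and it is the main respect in which your proof diverges from (and strengthens) the paper's. Your closing caveat about $K=1$ is also apt: a rank-one group with $1$-sparse columns under a unitary dictionary is automatically jointly $1$-sparse, so $\Sbb_{1}^{SPLR}=\Sbb_{1}^{JS}$ and the strict inclusion fails; the paper's construction silently breaks there as well, so requiring $K\ge 2$ is a genuine hypothesis the lemma needs rather than a defect of your argument.
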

\vspace{0.06in}
\begin{proof}[Proof of Lemma \ref{lem:JSinSPLR}]
Based on Lemma~\ref{lem:LReqGJS}, \ref{lem:JSsubSPsubGS}, and \ref{lem:JSsubGJSsubGS}, $\Sbb_{K}^{JS} \subseteq \Sbb_{K}^{SP}$ and $\Sbb_{K}^{JS} \subseteq \Sbb_{K}^{LR}$. Therefore, $\Sbb_{K}^{JS} \subseteq \Sbb_{K}^{SPLR}$.

We now only need to show $\Sbb_{K}^{JS} \neq \Sbb_{K}^{SPLR}$.
For $\yb \in \Sbb_{K}^{SPLR}$, $\exists \Db \in \Ubb$ such that $\Yb_i = \Db \, \Xb_i$ with $\norm{\Xb_i^j}_0 \leq K \, \forall i, j$. Furthermore, $\rank(\Yb_i) \leq K$, which means $\rank(\Xb_i) \leq K$ $\forall i$.
Such $\Xb_i$ may not satisfy the joint sparsity condition, \ie $\left \| \Xb_i \right \|_{0, \infty} \leq K$.
As a counter example for $N=2$, we can construct $\Xb_1$ and $\Xb_2$ as
\begin{equation} 
\nonumber
\Xb_1 =  \begin{bmatrix}
 1 & \dotsc & 1 & 1 & 0 \\ 
 1 &  & 1 & 0 & K-1 \\ 
 \vdots &  & \text{\reflectbox{$\ddots$}} & 1 & -1\\ 
 1 & 0 &  & \vdots & \vdots \\ 
 0 & 1 & \dots & 1 & -1 
\end{bmatrix} \in \Rmb^{(K + 1) \times (K + 1)}
\end{equation}
\begin{equation} 
\nonumber  
\Xb_2 =  \begin{bmatrix}
 -1 & 1 & \dotsc & 1 & 0 \\ 
 \vdots &  &  & 0 & 1 \\ 
 -1 & 1 & \text{\reflectbox{$\ddots$}} &  & \vdots \\ 
 K-1 & 0 & 1 &  & 1\\ 
 0 & 1 & 1 & \dotsc & 1
\end{bmatrix}  \in \Rmb^{(K + 1) \times (K + 1)}
\end{equation}
If the size of $\Xb_i$ is larger than $(K + 1) \times (K + 1)$, we pad zero rows below $\Xb_i$ when $n > K + 1$, and repeat any column of $\Xb_i$ if $M_i > K + 1$.
We have $\norm{\Xb_1^j}_0 \leq K$ and $\norm{\Xb_2^l}_0 \leq K$ $\forall j,l$. Furthermore, $\rank(\Xb_1) = \rank(\Xb_2) = K$ because one of the columns satisfy the following
\begin{align} 
\nonumber \Xb_1^{K+1} = & \sum_{j=1}^{K-1} \Xb_1^{j} - (K-1) \Xb_1^{K} \\
\nonumber \Xb_2^1 = & \sum_{j=3}^{K+1} \Xb_2^{j} - (K-1) \Xb_1^{2}
\end{align}
However, the joint sparsity $\left \| \Xb_1 \right \|_{0, \infty} = K + 1$, thus $\yb \notin \Sbb_{K}^{JS}$. 
Therefore, $\Sbb_{K}^{JS} \neq \Sbb_{K}^{SPLR}$, which completes the proof.
\end{proof}
\vspace{0.06in}

The Lemma~\ref{lem:JSinSPLR} shows the Statement (6) in the Theorem~\ref{thm:main}. Therefore, we complete the proof of the Theorem \ref{thm:main}.

\section{Image Modeling and Denoising} \label{sec4}

Since natural images are neither exactly sparse nor exactly low-rank, the commonly used image models are all the approximate models, i.e., the true image data are close to, but not exactly belong to the image model sets.
\hl{Therefore, on top of the analysis of the relationship among image model sets, 
we study how effective each model can be applied to represent, and thus to denoise image data.
For image denoising, an effective image model should be able to}
\begin{enumerate}
\item Preserve the clean image, \ie the model set is close to the distribution of natural images.
\item Reject random noise, \ie the model set is small and cannot be too flexible.
\end{enumerate}
\hl{We propose to study the image denoising, which is the simplest restoration problem, in order to quantitatively evaluate the effectiveness of image models.
Note that effective image models in denoising problems are usually also useful in other restoration or inverse problems \cite{dong2013nonlocally,Zhang2014,wen2017frist}.}

\subsection{Denoising by Projection} \label{sec41}

Denote the noisy measurement of a clean signal $\mbf{u}$ as $\mbf{z} = \mbf{u} + \mbf{e}$, where $\mbf{e}$ is additive white Gaussian noise.
Assuming the fact that $\mbf{u}$ satisfies a certain model, \ie belongs to a certain model set is the only prior, we denoise $\mbf{z}$ using the maximum likelihood estimate of $\mbf{u}$, by projecting $\mbf{z}$ onto the corresponding model set.

Though each of the discussed image model sets corresponds to a union of subspaces, locally, image patch denoising can be viewed as projection onto a low-dimensional subspace, \eg 
sparse coding with a specific support corresponds to projection onto the subspace spanned by the selected atoms.
Figure~\ref{fig:subspaceApprox} provides a simple illustration of denoising using the SP model with $n=2$ and $K=1$.
Thus, we approximate the denoised estimate of $\mbf{z}$ as $f(\mbf{z}) = \mbb{P} \mbf{z}$, where the operator $\mbb{P}$ denotes the projection onto the local subspace of certain model set.

To simplify the analysis of model effectiveness in denoising, we first investigate the denoising of image patches. 
Unlike complete image denoising, this simplified approach does not involve patch consensus or aggregations.
We work with a set of image patches from an image corpus $\Omega$, which are denoted as $\begin{Bmatrix} \mbf{u}_i \end{Bmatrix}_{i \in \Omega}$.
The noisy measurement of each $\mbf{u}_i$ is $\mbf{z}_i = \mbf{u}_i + \mbf{e}_i$, where $\mbf{e}_i$ is additive noise.
The proposed denoising schemes are consistent with the definitions of various models in Section~\ref{sec32}.

\begin{figure}[!t]
\begin{center}
\begin{tabular}{c}
\includegraphics[width=2.4in]{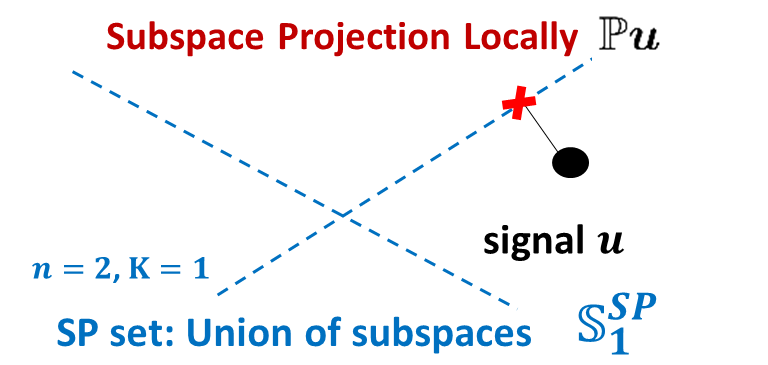}\\
\end{tabular}
\vspace{0.05in}
\caption{Illustration of denoising signal in 2D using SP model with $K=1$.}
\label{fig:subspaceApprox}
\end{center}
\vspace{-0.18in}
\end{figure}

\subsection{Denoising with a Single Model} \label{sec42}

We first describe and analyze the denoising methods for image patches by applying a single image model, including the SP, GS, JS, and LR models.

\subsubsection{Sparsity Model (SP)}
The denoising algorithm based on a SP model projects each $\mbf{z}_i \in \Rmb^{n}$ onto a union of subspaces spanned by $K$ atoms of the underlying dictionary, where $K$ is the patch sparsity level.
The $n \times n$ unitary dictionary $\mbf{D} \in \Rmb^{n}$ can be learned by solving the following problem,
\begin{align} \label{eq:spDL}
\nonumber \underset{\{ \Db, \left \{ \mbf{x}_i \right \} \}}{\operatorname{min}} \sum_{i \in \Omega} & \left \| \mbf{z}_i - \Db \mbf{x}_i  \right \|_{2}^{2} \\
s.t. & \, \left \| \mbf{x}_i \right \|_0 \leq K\,\, \forall i\,,\;\; \Db^T \Db = \mbf{I}_n.
\end{align}
Here $\mbf{x}_i \in \Rmb^{n}$ denotes the sparse code for $\mbf{z}_i$, which has at most $K$ non-zero elements. 
As proved in Section \ref{sec31}, unitary dictionary learning problem is equivalent to unitary transform learning \cite{sabres3,wen2017sparsity}. 
A simple block coordinate descent algorithm can be applied to solve for $\Db$ and $\begin{Bmatrix} \mbf{x}_i \end{Bmatrix}$ iteratively \cite{sabres3,wen2017sparsity}.
Fixing $\Db$, the optimal sparse code $\hat{\mbf{x}}_i = H_K (\Db^T \, \mbf{z}_i)$, where $H_K(\cdot)$ is the projector onto the $K$-$\ell_0$ ball \cite{sabres3}, i.e., $H_K(\mbf{b})$ zeros out all but the $K$ elements with the largest magnitude in $\mbf{b} \in \Rmb^{n}$.
Fixing $\begin{Bmatrix} \mbf{x}_i \end{Bmatrix}$, there is an exact solution for the optimal dictionary $\hat{\Db} = \mbf{G}\, \mbf{S}^T$, where the square matrices $\mbf{G}$ and $\mbf{S}$ are obtained using singular value decomposition (SVD) as $\mbf{S}\, \Sigma\, \mbf{G}^T = \mathrm{SVD}(\sum_{i \in \Omega}\, \mbf{z}_i \mbf{x}_i^T)$ \cite{sabres3}.

Given the dictionary $\Dbh$ and the sparsity level $K$, the denoised estimate of each patch is obtained by
\begin{equation} \label{eq:spDenoise}
f^{SP} ( \mbf{z}_i ) = \mbb{P}_i^{SP} \mbf{z}_i = \hat{\mbf{D}} H_K (\hat{\Db}^T \, \mbf{z}_i) \, ,
\end{equation}
where $\mbb{P}_i^{SP}$ denotes the linear projection operator for denoising the $i$-th patch, based on the SP model. 
\footnote{The projection operator is a function of $\Dbh$ and $K$. Though all patches share the common dictionary $\Dbh$, the projection operator varies for each patch as the support of each $\mbf{x}_i$ is different.}

\subsubsection{Group-wise Sparsity (GS)}
Different from sparsity, the denoising algorithm based on GS model is a non-local method.
Similar patches are first matched into $N$ groups, and vectorized to form columns of data matrices $\begin{Bmatrix} \mbf{Z}_i \end{Bmatrix}_{i = 1}^{N}$, where each $\mbf{Z}_i \in \Rmb^{n \times M_i}$.
The GS based algorithm learns separate dictionary $\mbf{D}_i$ for each group, by solving the following problem
\begin{align} \label{eq:gsDL}
\nonumber \underset{\{ \Db_i, \mbf{X}_i \}}{\operatorname{min}} & \left \| \mbf{Z}_i - \Db_i \mbf{X}_i  \right \|_{F}^{2} \;\; \forall i = 1, ..., N\\
s.t. & \, \left \| \mbf{X}_{i, j} \right \|_0 \leq K\,\, \forall j\,,\;\; \Db_i^T \Db_i = \mbf{I}_n.
\end{align}
Here $\mbf{X}_{i, j}$ denotes the $j$-th column of sparse code matrix $\mbf{X}_i$.
Very similar to SP based dictionary learning, there is a simple block coordinate descent algorithm solving each $\hat{\Db}_i$ and $\hat{\mbf{X}}_i$, and each step has exact solution.
The difference is that each $\hat{\Db}_i$ is only trained using patches within the $i$-th group.
The $j$-th column of the denoised $i$-th patch is 
\begin{equation} \label{eq:gsDenoise}
f^{GS} ( \mbf{Z}_{i, j} ) = \mbb{P}_{i, j}^{GS}  \mbf{Z}_{i, j} = \Dbh_i H_K (\Dbh_i^T \, \mbf{Z}_{i, j}) \, , 
\end{equation}
where $ \mbb{P}_{i, j}^{GS}$ denotes the projection operator for denoising the $j$-th patch in the $i$-th group, based on the GS model.

\subsubsection{Joint Sparsity (JS)}

To explore patch correlation within each group, the denoising algorithm based on the JS model projects each group of patches onto the same low-dimensional subspace, spanned by $K$ atoms of the common dictionary $\Db$ for all groups.
The JS model dictionary learning problem is formulated as follow,
\begin{align} \label{eq:jsDL}
\nonumber \underset{\{ \Db, \left \{ \mbf{X}_i \right \} \}}{\operatorname{min}} \sum_{i = 1}^N & \left \| \mbf{Z}_i - \Db \mbf{X}_i  \right \|_{F}^{2} \\
s.t. & \, \left \| \mbf{X}_i \right \|_{\infty}^0 \leq K\,\, \forall i\,,\;\; \Db^T \Db = \mbf{I}_n\;.
\end{align}
Similar to the SP model dictionary learning problem, with $\mbf{X}_i$ fixed, the exact solution of each $\Dbh_i$ can be calculated using SVD.
With $\Db_i$ fixed, the sparse coding step also has exact solution.
The optimal sparse code $\Xbh_i = \tilde{H}_K (\Db^T \mbf{Z}_i)$, where the operator $\tilde{H}_K (\cdot)$ sparsifies the matrix with each column has the same support, i.e., $\tilde{H}_K (\mbf{B})$ zeros out all but the $K$ rows of $\mbf{B} \in \Rmb^{n \times M}$, which have the largest $\left \| \mbf{B}^{j} \right \|_F^2$.
The denoised estimate of the $i$-th group is 
\begin{equation} \label{eq:jsDenoise}
f^{JS} ( \mbf{Z}_{i} ) = \mbb{P}_{i}^{JS} = \Dbh \tilde{H}_K (\Dbh^T \, \mbf{Z}_{i}) \, , 
\end{equation}
where $\mbb{P}_{i}^{JS}$ denotes the projection operator for denoising the $j$-th patch in the $i$-th group, based on the JS model.

\subsubsection{Low-Rankness (LR)}

Apart from sparsity, the denoising algorithm based on the LR model projects each group of patches onto the low-dimensional subspace, spanned by its first $K$ eigenvectors.
The denoised estimate based on LR model for each $\mbf{Z}_i$ is obtained by solving the following problem,
\begin{align} \label{eq:lrDenoising}
\nonumber f^{LR} ( \mbf{Z}_i ) =  \underset{ \mbf{L}_i }{\operatorname{argmin}} & \left \| \mbf{Z}_i - \mbf{L}_i  \right \|_{F}^{2} \,\, \forall i\, \\
s.t. & \, \text{rank} ( \mbf{L}_i ) \leq K\;.
\end{align}
There is an exact solution to the low-rank approximation. 
Applying the SVD to each $\mbf{Z}_i$, i.e., $\mbf{P}_i \text{diag}(\mbf{v}_i) \mbf{Q}_i^T = \mathrm{SVD}( \mbf{Z}_i )$, the denoised low-rank estimate is achieved by projecting the eigenvalues onto the $K$-$\ell_0$ ball as following 
\begin{equation} \label{eq:lrEstimate}
f^{LR} ( \mbf{Z}_i ) = \mbb{P}_i^{LR} \mbf{Z}_i = \mbf{P}_i \text{diag} \begin{Bmatrix} H_K(\mbf{v}_i) \end{Bmatrix} \mbf{Q}_i^T \;.
\end{equation}

\subsection{Denoising with Multiple Models} \label{sec43}

We showed that the various image models, which explicitly exploit different image properties, are in fact all related.
Applying multiple image models for IR can potentially provide more effective image data representation.
\hl{Though it is unclear what the best combination of regularizes is theoretically for natural image modeling.
We now show that the \textit{convex combination} of denoising results using single models, is more effective than applying alternating projection. 
We also provide a theoretical analysis of the improvement of denoising results using convex combination. 
}

\begin{figure}[!t]
\begin{center}
\begin{tabular}{c}
\includegraphics[width=2.6in]{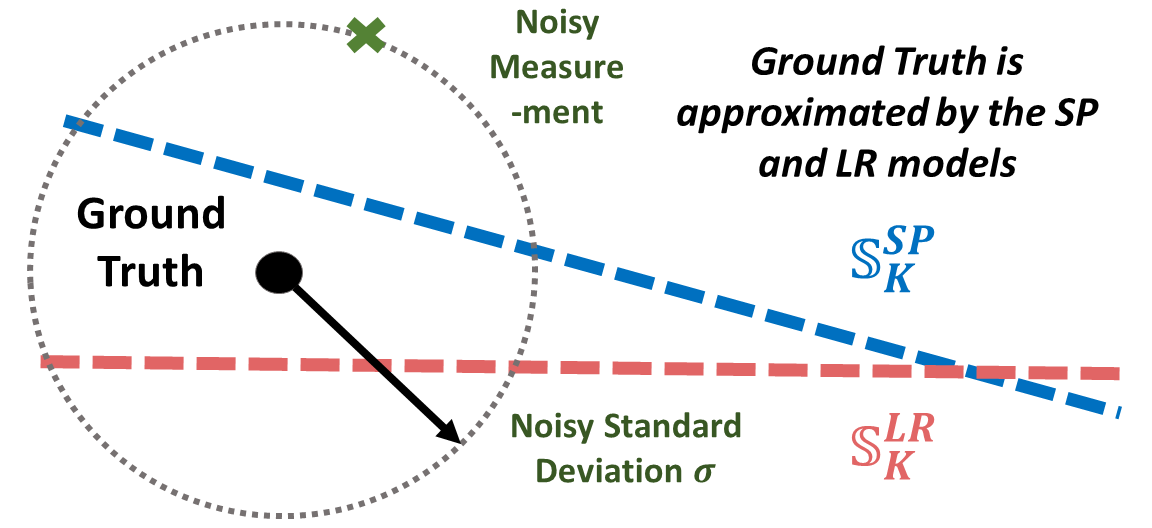}\\
\includegraphics[width=2.6in]{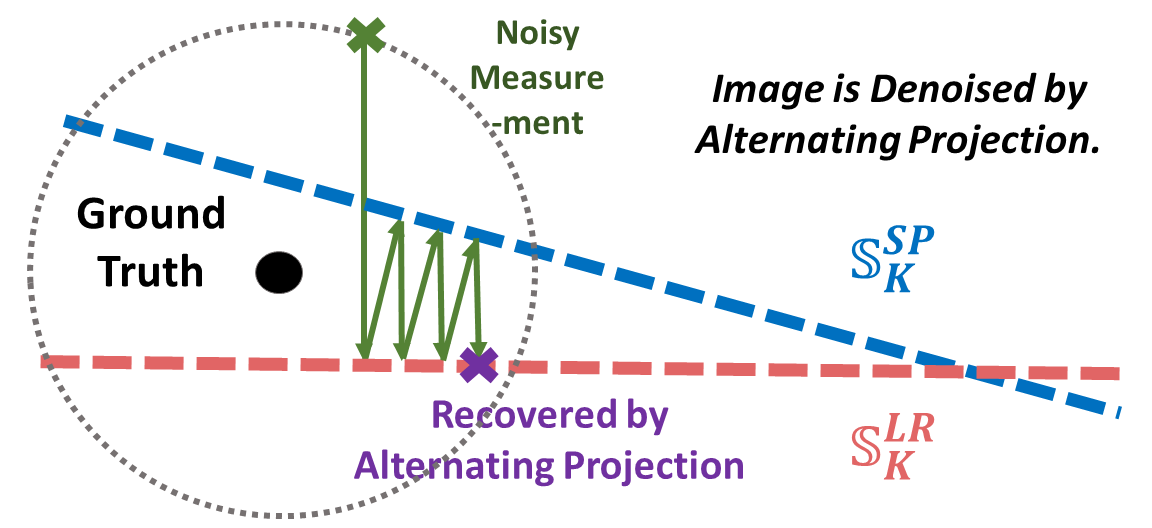}\\
\includegraphics[width=2.6in]{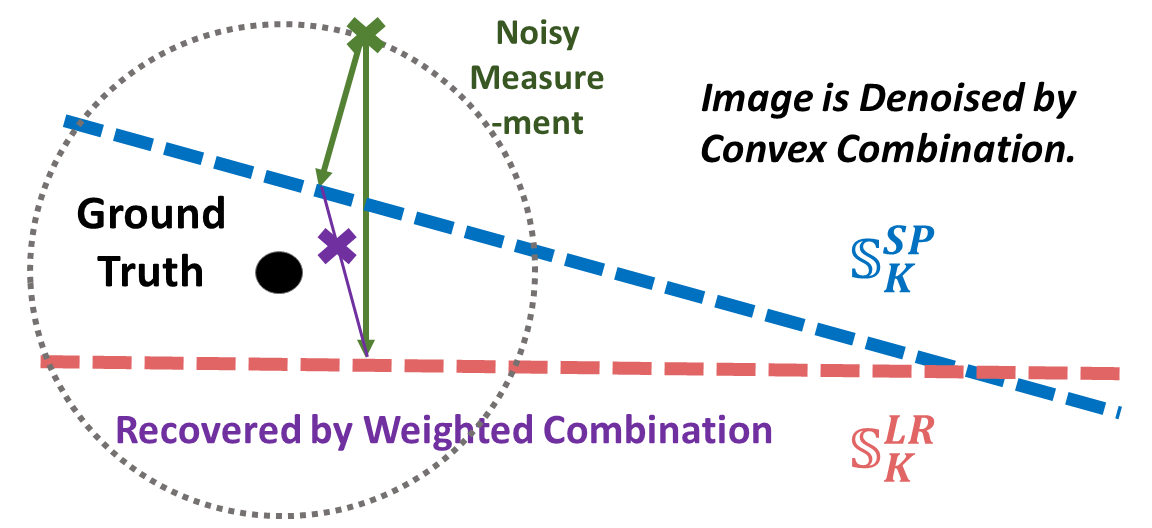}\\
\end{tabular}
\vspace{0.05in}
\caption{Denoising comparison in 2D using dual image models. Top: A noisy measurement is generated from ground true image which is approximated by LR and SP models; Middle: The denoised estimate using alternating projection; Bottom: The denoised estimate using convex combination.}
\label{fig:alterProj}
\end{center}
\vspace{-0.2in}
\end{figure}

\subsubsection{Alternating Projection}
One conventional way to jointly impose multiple image models, is by alternating projection.
Take the case of dual models as example, the method aims to recover the image data by projecting the noisy measurement $\mbf{z}$ onto the set $\mbb{S}^A$ of the model $A$, and the set $\mbb{S}^B$ of the model $B$, iteratively.
Denote the denoised estimate after $t$ times of alternating projection as $f_t^{A + B} ( \mbf{z} )$, which can be represented recursively as
\begin{equation} \label{eq:alterProj}
f_t^{A + B} ( \mbf{z} ) = \mbb{P}^{B} \mbb{P}^{A} f_{t-1}^{A + B} ( \mbf{z} )\;,
\end{equation}
where the initial $f_0^{A + B} ( \mbf{z} ) = \mbf{z}$.
However, since none of the image models can exactly represent natural images, alternating projection is not guaranteed to converge to the ground true signal (Fig.~\ref{fig:alterProj} shows one such example).

\subsubsection{Convex Combination}
Alternatively, we propose to denoise image patches, using a convex combination of the denoised estimates by projecting the noisy measurements onto different individual model sets.
The denoised estimate based on dual models is represented as
\begin{equation} \label{eq:weightedSum}
f^{A + B} ( \mbf{z} ) = \mu\, \mbb{P}^{A} \mbf{z} + (1 - \mu)\, \mbb{P}^{B} \mbf{z}\;,
\end{equation}
where the scalar $\mu$ is the combination weight. 
Figure \ref{fig:alterProj} illustrates a comparison in 2D space, between an denoising example using alternating projection (the middle), and that using convex combination (the bottom).
Neither the LR model, nor the SP model can represent image data exactly, but each of them exploit different properties of natural images. 
Thus, the convex combination of the denoised estimates using different single models, can potentially improve the recovery quality.

\subsubsection{Denoising Analysis by Convex Combination}
To gain some intuition why the proposed approach can improve the denoising performance, we decompose the denoised estimates using algorithm based on model A, and B respectively, as 
\begin{equation} \label{eq:errorDecomp2}
f^{A} (\mbf{z}) = \mbf{u} + \tilde{\mbf{e}}_{A},\;\;  f^{B} (\mbf{z}) = \mbf{u} + \tilde{\mbf{e}}_{B}
\end{equation}
Here $\mbf{u}$ is the ground true signal, and the \textit{remaining noise} in the denoised estimates $f^{A} (\mbf{z})$ and $f^{B} (\mbf{z})$ are denoted as $\tilde{\mbf{e}}_{A}$ and $\tilde{\mbf{e}}_{B}$, respectively.
The dual-model denoised estimate is thus
\begin{equation} \label{eq:weighted}
f^{A+B} (\mbf{z}) = \mbf{u} + \mu \, \tilde{\mbf{e}}_{A} + (1 - \mu) \, \tilde{\mbf{e}}_{B}\, .
\end{equation}
Denote the remaining noise in $f^{A+B} (\mbf{z})$ as $\tilde{\mbf{e}}_{A + B}$, which is the convex combination of $\tilde{\mbf{e}}_{A}$ and $\tilde{\mbf{e}}_{B}$.
Without loss of generality, we define that $\Gamma \triangleq \left \| \tilde{\mbf{e}}_{A} \right \|_2  = \text{min} (\left \| \tilde{\mbf{e}}_{A} \right \|_2, \left \| \tilde{\mbf{e}}_{B} \right \|_2 )$, and $\Gamma + \Delta \triangleq \left \| \tilde{\mbf{e}}_{B} \right \|_2$, with $\Delta \geq 0$.
We would like to achieve the improved denoising result with less remaining noise, i.e.,
\begin{equation} \label{eq:improved}
 \left \| \mu \, \tilde{\mbf{e}}_{A} + (1 - \mu) \, \tilde{\mbf{e}}_{B} \right \|_2 < \Gamma.
\end{equation}
The condition of achieving the denoising improvement, \ie (\ref{eq:improved}) is satisfied, is equivalent to
\begin{equation} \label{eq:target}
\mu^2 \Gamma^2 + (1 - \mu)^2 (\Gamma + \Delta)^2 + 2 \mu (1 - \mu) (\tilde{\mbf{e}}_{A}^T \tilde{\mbf{e}}_{B}) < \Gamma^2\,.
\end{equation}
The condition (\ref{eq:target}) leads to the upper bound of the correlation (i.e., the lower bound of the angle) between the two noise vectors $\tilde{\mbf{e}}_{A}$ and $\tilde{\mbf{e}}_{B}$ as following
\begin{equation} \label{eq:correlation}
\text{cos}\, \theta_{A,B} \triangleq \frac{ (\tilde{\mbf{e}}_{A}^T \tilde{\mbf{e}}_{B}) }{ \left \| \tilde{\mbf{e}}_{A} \right \|_2 \left \| \tilde{\mbf{e}}_{B} \right \|_2 } < 
1 - \frac{[2 + (1 - \mu) \gamma] \gamma}{2 \mu (1 + \gamma)} \,,
\end{equation}
where $\gamma \triangleq \Delta / \Gamma \geq 0$ represents the normalized difference in magnitude of the errors, and $\theta_{A,B}$ is the angle between the two error vectors $\tilde{\mbf{e}}_{A}$ and  $\tilde{\mbf{e}}_{B}$. 
To provide intuition of the performance improvement bound (\ref{eq:correlation}), 
Fig. \ref{fig:combinedError} illustrates $\tilde{\mbf{e}}_{A}$ and  $\tilde{\mbf{e}}_{B}$, and their convex combination $\tilde{\mbf{e}}_{A + B}$ different conditions:
\begin{itemize}
 \item  When $\gamma = 0$, Fig \ref{fig:combinedError} (a) shows that the magnitude of the error $\tilde{\mbf{e}}_{A + B}$ in the combined result always decreases as long as $\text{cos} \, \theta_{A,B} < 1$, \ie $\theta_{A,B} \neq 0$.
 
 \item  When $\gamma$ is large, in order to achieve the denoising improvement (\ref{eq:improved}), Fig \ref{fig:combinedError} (b) shows that the $\text{cos} \theta_{A,B}$ needs to be smaller than the bound  (\ref{eq:correlation}), \ie the angle $\theta_{A,B}$ needs to be sufficiently large. 

 \item When $\gamma$ is large, and the the bound  (\ref{eq:correlation}) is unsatisfied, Fig \ref{fig:combinedError} (c) shows one example that the denoising improvement (\ref{eq:improved}) is not achieved.

\end{itemize}

\begin{figure}[!t]
\begin{center}
\begin{tabular}{ccc}
\includegraphics[height=1in]{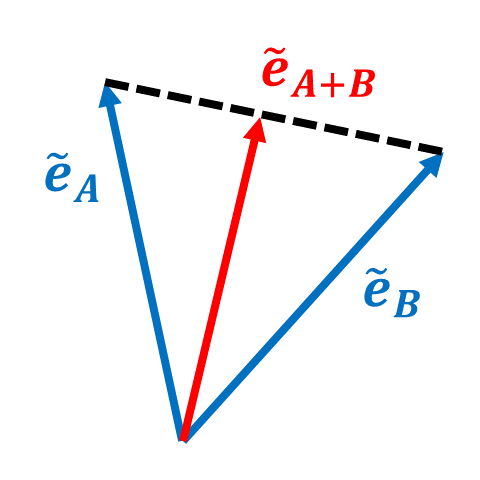} &
\includegraphics[height=1in]{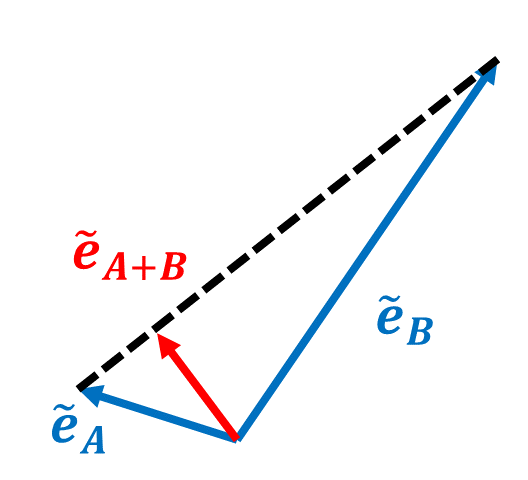} &
\includegraphics[height=1in]{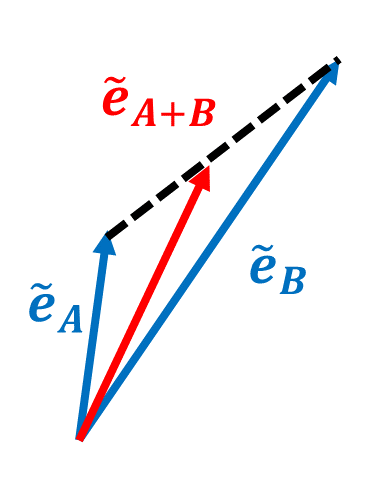}\\
{\small (a)} 	&  {\small (b)} &  {\small (c)}
\end{tabular}
\vspace{0.05in}
\caption{Illustration of the error vectors in the combined estimate $\tilde{\mbf{e}}_{A + B}$, with different conditions of $\tilde{\mbf{e}}_{A}$ and $\tilde{\mbf{e}}_{B}$: (a) $\gamma = 0$, (b) large $\gamma$ with large $\theta_{A,B}$, and (c) large $\gamma$ with small $\theta_{A,B}$. }
\label{fig:combinedError}
\end{center}
\vspace{-0.2in}
\end{figure}



To summarize, we need (1) small $\gamma$ and (2) large $\theta_{A, B}$, to achieve denoising improvement by convex combination.
We now provide suggestions for choosing image denoising algorithms that will satisfy the two corresponding conditions, thus lead to improved denoising results.

\vspace{0.06in}
\begin{proposition}\label{prop:denoising}
Assuming that image denoising using a single model can be approximated as projecting noisy images onto the corresponding subspace, the following suggestions can help boost the denoising performance via convex combination of results using single models $\mbb{S}^A$ and $\mbb{S}^B$. 
\begin{enumerate}
\item Choose denoising algorithms $f^{A} (\cdot)$ and $f^{B} (\cdot)$ with good and similar performance, \ie $\Delta = | \tilde{\mbf{e}}_{A} - \tilde{\mbf{e}}_{B}|$ is small.

\item Among algorithms with similar performances, \ie fixing $\left \| \tilde{\mbf{e}}_{A} \right \|_2$ and $\left \| \tilde{\mbf{e}}_{B} \right \|_2$, select those with small intersection of their model sets, \ie $\mbb{S}^A \cap \mbb{S}^B$ is small.
\end{enumerate}
\end{proposition}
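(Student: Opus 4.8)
The plan is to verify the two claimed suggestions directly from the correlation bound (\ref{eq:correlation}), treating Proposition~\ref{prop:denoising} as an interpretation of that inequality under the projection assumption. First I would recall the setup: we have $\Gamma = \norm{\tilde{\mbf{e}}_A}_2 = \min(\norm{\tilde{\mbf{e}}_A}_2, \norm{\tilde{\mbf{e}}_B}_2)$, $\Gamma + \Delta = \norm{\tilde{\mbf{e}}_B}_2$ with $\Delta \geq 0$, and $\gamma = \Delta/\Gamma \geq 0$. The denoising improvement (\ref{eq:improved}) holds precisely when $\cos\theta_{A,B}$ falls below the right-hand side of (\ref{eq:correlation}). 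The whole argument reduces to monotonicity analysis of that threshold in $\gamma$ and $\theta_{A,B}$, so I would make those dependencies explicit and study them one at a time.

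For Statement (1), I would treat the threshold $T(\gamma) \triangleq 1 - \frac{[2 + (1-\mu)\gamma]\gamma}{2\mu(1+\gamma)}$ as a function of $\gamma$ (for fixed weight $\mu$) and show it is decreasing on $\gamma \geq 0$, with $T(0) = 1$. The key observation is that at $\gamma = 0$ the bound becomes $\cos\theta_{A,B} < 1$, which is satisfied for any $\theta_{A,B} \neq 0$; this recovers the picture in Fig.~\ref{fig:combinedError}(a), where improvement is automatic as long as the two error vectors are not parallel. As $\gamma$ grows, $T(\gamma)$ shrinks, forcing $\theta_{A,B}$ to be larger before improvement is possible. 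Hence smaller $\gamma$, equivalently smaller $\Delta$, makes the improvement condition easier to meet, which is exactly suggestion (1). The routine part here is differentiating $T(\gamma)$ or simply writing the numerator over a common denominator to confirm the sign; I would not grind through that.

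For Statement (2), I would invoke the projection assumption to connect the geometry of the error vectors $\tilde{\mbf{e}}_A, \tilde{\mbf{e}}_B$ to the model sets $\mbb{S}^A, \mbb{S}^B$. Since each single-model denoiser is a projection $\mbb{P}^A, \mbb{P}^B$ onto a (locally) low-dimensional subspace associated with the respective model set, the remaining noise $\tilde{\mbf{e}}_A = \mbb{P}^A \mbf{z} - \mbf{u}$ lives essentially in the subspace tied to $\mbb{S}^A$, and similarly for $\tilde{\mbf{e}}_B$. When $\mbb{S}^A \cap \mbb{S}^B$ is small, the two local subspaces are close to transverse, so $\tilde{\mbf{e}}_A$ and $\tilde{\mbf{e}}_B$ tend to be weakly correlated, i.e.\ $\theta_{A,B}$ is large and $\cos\theta_{A,B}$ is small, satisfying (\ref{eq:correlation}). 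Conversely, a large intersection means the two projections share directions, inflating the correlation and shrinking $\theta_{A,B}$. Fixing $\norm{\tilde{\mbf{e}}_A}_2$ and $\norm{\tilde{\mbf{e}}_B}_2$ (i.e.\ $\Gamma$ and $\gamma$) isolates $\cos\theta_{A,B}$ as the only free quantity in the bound, so minimizing the set intersection is the natural lever for meeting the threshold.

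The main obstacle is making the second step rigorous rather than heuristic: ``small intersection implies small correlation'' is not an exact identity, because the relevant objects are local subspaces (depending on the supports selected for the particular noisy patch) rather than the global model sets, and the noise direction depends on $\mbf{z}$. I would therefore frame suggestion (2) as a qualitative guideline justified under the projection-onto-subspace approximation, arguing that the expected correlation $\mathbb{E}[\cos\theta_{A,B}]$ decreases as the angle between the model-induced subspaces increases, and that a smaller set intersection enlarges that subspace angle on average. A fully quantitative bound would require a distributional model for the clean signal and the subspace selection, which is beyond the scope of the proposition; the honest move is to state the result as heuristic guidance backed by the exact bound (\ref{eq:correlation}) and the geometric intuition in Fig.~\ref{fig:combinedError}, exactly as the paper presents it.
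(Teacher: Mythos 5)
Your handling of Statement (1) matches the paper's: both read the conclusion off the bound (\ref{eq:correlation}) by noting that the admissible region for $\cos\theta_{A,B}$ is largest at $\gamma=0$ and shrinks as $\gamma$ grows, and that since shrinking $\Gamma$ is not an option (it would mean worse baselines), the practical lever is $\Delta$. For Statement (2) you share the paper's intuition but execute it differently. The paper makes one concrete move you omit: it sets $\Pi \triangleq \mbb{S}^A \cap \mbb{S}^B$ (a subspace under the projection assumption), decomposes each error as $\tilde{\mbf{e}}_{A} = \mbb{P}_{\Pi}\tilde{\mbf{e}}_{A} + \mbb{P}_{\Pi^{\perp}}\tilde{\mbf{e}}_{A}$ (likewise for $B$), and reduces the correlation to $\tilde{\mbf{e}}_{A}^T\tilde{\mbf{e}}_{B} = (\mbb{P}_{\Pi}\tilde{\mbf{e}}_{A})^T\mbb{P}_{\Pi}\tilde{\mbf{e}}_{B}$, so that under a uniform-distribution assumption on the surviving noise the inner product scales with the dimension of $\Pi$ and a smaller intersection directly yields a smaller $\cos\theta_{A,B}$. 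You replace this with a qualitative transversality/subspace-angle argument. The paper's decomposition buys a cleaner dimension-counting justification (the correlation is localized entirely in the intersection), though it silently discards the term $(\mbb{P}_{\Pi^{\perp}}\tilde{\mbf{e}}_{A})^T\mbb{P}_{\Pi^{\perp}}\tilde{\mbf{e}}_{B}$, which vanishes only under an additional orthogonality or uncorrelatedness assumption on the out-of-intersection components --- so it is no less heuristic than your route, only more explicit about where the heuristic sits. Your observation that the relevant subspaces are local (support-dependent) and that the statement is guidance rather than a theorem is accurate and consistent with how the paper itself frames the result and defers to the experiments of Section~\ref{sec5}.
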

\vspace{0.06in}
\begin{proof}[Proof of Proposition \ref{prop:denoising}]
As we showed in (\ref{eq:correlation}), small $\gamma$ leads to performance improvement.
Since large $\Gamma$ means poor denoising baselines $f^{A} (\cdot)$ and $f^{B} (\cdot)$, the practical option is to obtain small $\Delta$, \ie $f^{A} (\cdot)$ and $f^{B} (\cdot)$ have similar denoising performance.

With the assumption of subspace projection, denote the intersection $\Pi \triangleq \mbb{S}^A \cap \mbb{S}^B$ which is also a subspace.
With $\left \| \tilde{\mbf{e}}_{A} \right \|_2$ and $\left \| \tilde{\mbf{e}}_{B} \right \|_2$ fixed, minimizing $\text{cos}\, \theta_{A,B}$ is equivalent to minimizing $\tilde{\mbf{e}}_{A}^T \tilde{\mbf{e}}_{B}$. 
Thus
\begin{align}
\nonumber \tilde{\mbf{e}}_{A}^T \tilde{\mbf{e}}_{B} = & \, (\mbb{P}_{\Pi} \tilde{\mbf{e}}_{A} + \mbb{P}_{\Pi^\perp} \tilde{\mbf{e}}_{A})^T (\mbb{P}_{\Pi} \tilde{\mbf{e}}_{B} + \mbb{P}_{\Pi^\perp} \tilde{\mbf{e}}_{B}) \\
 = & \, (\mbb{P}_{\Pi} \tilde{\mbf{e}}_{A})^T \mbb{P}_{\Pi} \tilde{\mbf{e}}_{B}
\end{align}
Thus, assuming the remaining noise is uniformly distributed in the model set, smaller (\ie lower-dimensional) $\Pi$ leads to smaller $\tilde{\mbf{e}}_{A}^T \tilde{\mbf{e}}_{B}$, and thus improved denoising performance.

\end{proof}
\vspace{0.06in}

\begin{conj} \label{conj:complement}
When the denoising algorithm is no longer as simple as single subspace projection, the suggestions in Proposition \ref{prop:denoising} still hold for image denoising.
\end{conj}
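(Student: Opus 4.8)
The plan is to argue the conjecture by reducing the general denoising setting back to the subspace-projection analysis of Proposition~\ref{prop:denoising} via a local linearization, while being explicit that full rigor is obstructed by the nonlinearity of general denoisers (hence the status as a conjecture rather than a theorem). The first observation, which I would establish immediately, is that suggestion~(1) needs essentially nothing beyond what is already in hand: the chain of inequalities from (\ref{eq:improved}) to (\ref{eq:correlation}) is a purely geometric statement about the two residual vectors $\tilde{\mbf{e}}_A$ and $\tilde{\mbf{e}}_B$ and never invokes the projection assumption. Hence, for \emph{any} pair of denoisers, small $\gamma = \Delta/\Gamma$ continues to enlarge the admissible range of correlations $\cos\theta_{A,B}$ for which the convex combination improves, so choosing algorithms with similar (and good) performance remains beneficial regardless of how the denoisers are realized.

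The substantive work is to recover suggestion~(2). Here I would linearize each denoiser about the clean signal: assuming $f^A$ is locally smooth and approximately unbiased on $\mbf{u}$ (i.e. $f^A(\mbf{u}) \approx \mbf{u}$), a first-order expansion in the small-noise regime gives $\tilde{\mbf{e}}_A \approx J_A \mbf{e}$, where $J_A$ is the Jacobian of $f^A$ at $\mbf{u}$. This $J_A$ plays the role previously played by $\mbb{P}^A$, and I would take its range $\range(J_A)$ as the \emph{effective} local model set of the denoiser. The key step is then to repeat the orthogonal-decomposition argument in the proof of Proposition~\ref{prop:denoising}, but with the shared subspace $\Pi$ replaced by $\range(J_A)\cap\range(J_B)$: the inner product $\tilde{\mbf{e}}_A^T\tilde{\mbf{e}}_B$ is controlled by the components of the two residuals lying in this common subspace, so a smaller effective intersection forces a smaller correlation, a larger angle $\theta_{A,B}$, and thus---via (\ref{eq:correlation})---improved denoising. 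This would show that preferring model pairs with small set overlap is still the right heuristic.

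The main obstacle is precisely the step that the linearization glosses over. For a general denoiser, and especially a deep network, $J_A$ is data-dependent and is neither idempotent nor symmetric, so $\tilde{\mbf{e}}_A = J_A\mbf{e}$ need not be white and the identification of $\range(J_A)$ with a genuine model set is only heuristic; moreover the bias term $f^A(\mbf{u})-\mbf{u}$ need not be negligible, and it is not clear that a small stated model set $\mbb{S}^A$ implies a low-dimensional effective tangent space $\range(J_A)$. Making these approximations quantitative---bounding the bias and the non-normality of $J_A$, and relating $\dim(\range(J_A)\cap\range(J_B))$ to $\mbb{S}^A\cap\mbb{S}^B$---is what separates the conjecture from a theorem. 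My plan is therefore to treat the linearization as a heuristic justification and to corroborate both suggestions empirically, as in Section~\ref{sec5}, by directly measuring the residual-noise angle $\theta_{A,B}$ and the realized improvement across model pairs, including combinations involving the deep denoiser, to confirm that the conditions of Proposition~\ref{prop:denoising} continue to predict performance gains in the nonlinear regime.
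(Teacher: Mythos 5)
The statement you are asked to prove is labeled a conjecture precisely because the paper does not prove it: immediately after stating it, the authors write that ``it is difficult to provide theoretical analysis'' once the denoiser is no longer a single subspace projection, and they support the claim only with the experiments of Section~\ref{sec5} (the patch-level tests, the combinations of KSVD/OCTOBOS/SAIST/SSC-GSM, and the DnCNN fusion results). Your proposal therefore goes strictly further than the paper on the analytical side while converging on the same empirical endgame. Two things you do are genuinely additional and correct: first, your observation that the passage from (\ref{eq:improved}) to (\ref{eq:correlation}) is pure vector geometry in $\tilde{\mbf{e}}_A$ and $\tilde{\mbf{e}}_B$ and never uses the projection assumption, so suggestion~(1) of Proposition~\ref{prop:denoising} transfers verbatim to arbitrary denoisers; second, the linearization $\tilde{\mbf{e}}_A \approx J_A\mbf{e}$ with $\range(J_A)$ standing in for the model subspace, which is a sensible heuristic the paper does not attempt. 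You are also right to flag where this heuristic breaks: the bias $f^A(\mbf{u})-\mbf{u}$, the non-idempotence and data-dependence of $J_A$, and the unproved link between $\dim\bigl(\range(J_A)\cap\range(J_B)\bigr)$ and $\mbb{S}^A\cap\mbb{S}^B$ are exactly the obstructions that keep this a conjecture. One further gap worth naming, which you inherit from the paper's own Proposition~\ref{prop:denoising}: even in the linear case, the step reducing $\tilde{\mbf{e}}_A^T\tilde{\mbf{e}}_B$ to $(\mbb{P}_{\Pi}\tilde{\mbf{e}}_A)^T\mbb{P}_{\Pi}\tilde{\mbf{e}}_B$ silently discards the cross term supported on $\Pi^\perp$, which vanishes only under an additional assumption (e.g., that the residual components outside the intersection are orthogonal or uncorrelated in expectation); any linearized version of suggestion~(2) must carry that caveat as well. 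In short, your plan is consistent with the paper, adds a heuristic bridge the paper lacks, and correctly identifies that the only available ``proof'' here is the empirical one the paper actually provides.
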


When the image models become more complicated, it is difficult to provide theoretical analysis on denoising performance.
We, instead, provide experimental results in Section~\ref{sec5}, as the numerical evidences to support our conjectures.

\subsection{Generalization to Image Denoising} \label{sec44}

We generalize the patch denoising method using multiple image models in Section \ref{sec43}, to image-level denoising scheme by combination of multiple algorithms.
Image denoising is typically considered as an inverse problem, which can be formulated as the following optimization problem
\begin{equation} \label{problem:imDenoisingReg} 
\hat{\mbf{x}} = \underset{\mbf{x}}{\operatorname{argmin}} \: \lambda_f \left \| \mbf{x} - \mbf{y} \right \|_{2}^{2} + \Rf\, (\mbf{x})\, ,
\end{equation}
where $\left \| \mbf{x} - \mbf{y} \right \|_{2}^{2}$ is the image fidelity with $\mbf{y}$ being the noisy image, and $\mbf{x}$ being the underlying denoised estimate.
Furthermore, the regularizer $\Rf\, (\mbf{x})$ is imposed based on certain image properties.
There are various image denoising algorithms proposed by exploiting properties based on specific image models.
In order to incorporate models applied in multiple algorithms in one image denoising scheme, we propose a simple image denoising fusion method using the image-level \textit{convex combination}.

Take the dual-model case as an example, the denoised image estimate is obtained by solving the following problem
\begin{align} \label{problem:imDenoisingDualReg} 
\nonumber (\mathrm{P1})\;\;\; \hat{\mbf{x}} =  & \underset{\mbf{x}}{\operatorname{argmin}} \: \;  \lambda_f \left \| \mbf{x} - \mbf{y} \right \|_{2}^{2} \\
\nonumber & + \mu \left \| \mbf{x} - \mbf{x}_A \right \|_2^2 + (1 - \mu) \left \| \mbf{x} - \mbf{x}_B \right \|_2^2 \, \\
\nonumber & = \frac{\lambda_f}{1 + \lambda_f} \mbf{y} + \frac{\mu}{1 + \lambda_f} \mbf{x}_A + \frac{1-\mu}{1+\lambda_f} \mbf{x}_B \, ,
\end{align}
where $\mbf{x}_A \triangleq f^A\,(\mbf{y})$ and $\mbf{x}_B \triangleq f^B\,(\mbf{y})$ are the denoised estimates using the denoising algorithms $f^A\,(\cdot)$ and $f^B\,(\cdot)$, respectively.
If $\lambda_f = 0$, the denoised estimate is simply $\hat{\mbf{x}} = \mu \mbf{x}_A + (1 - \mu) \mbf{x}_B$, which reduces to the case in Section \ref{sec43} when the denoising algorithms are both simple projections.

The denoising scheme ($P1$) can be applied as a computational tool, to evaluate the image models exploited by certain denoising algorithms.
The improvement of the denoising performance by ($P1$) comparing to single algorithm reflects whether the image models are correlated.
Recently, the image denoising algorithms using deep neural networks demonstrated promising performance, while the reason of success remains unclear.
In Section \ref{sec5}, we also apply ($P1$) by combining deep learning methods with various model-based algorithms, and study the image properties that the learned deep neural networks inexplicitly exploit.

\subsection{Evaluation Metrics} \label{sec45}

To quantitatively compare the effectiveness of different image models for denoising, we propose several metrics for evaluating the quality of their denoised estimates.

Suppose we denoise $\mbf{z} = \mbf{u} + \mbf{e}$ using a specific denoiser $f(\cdot)$ via projection, \ie $f(\mbf{z}) = \mbb{P} \mbf{z}$.
The denoised estimate can be decomposed into two parts, namely the \textit{clean signal approximation} $\tilde{\mbf{u}}$ and the \textit{survived noise} $\tilde{\mbf{e}}$ as following
\begin{equation} \label{eq:errorDecomp1}
f(\mbf{z}) = \mbb{P} \mbf{u} +  \mbb{P} \mbf{e} \triangleq \tilde{\mbf{u}} + \tilde{\mbf{e}} \,.
\end{equation}
Ultimately, we evaluate the quality of the denoised estimates using the reconstruction error, which is defined as
\begin{equation} \label{eq:modelingError}
\tilde{E}(f(\mbf{z}), \mbf{u}) \triangleq  \left \| f(\mbf{z}) - \mbf{u} \right \|_2^2 = \left \| (\tilde{\mbf{u}} - \mbf{u}) + \tilde{\mbf{e}} \right \|_2^2 \, .
\end{equation}
Since the initial noise $\mbf{e}$ is uncorrelated with the image data $\mbf{u}$,
and $\tilde{\mbf{u}} - \mbf{u} = (\mbf{I} - \mbb{P}) \mbf{u}$ is orthogonal to $\tilde{\mbf{e}} = \mbb{P} \mbf{e}$, 
the reconstruction error is equivalent to
\begin{align} \label{eq:decouple}
\nonumber \tilde{E}(f(\mbf{z}), \mbf{u}) & = \left \| \tilde{\mbf{u}} - \mbf{u} \right \|_2^2 + \left \| \tilde{\mbf{e}} \right \|_2^2 \\
& \triangleq \tilde{E}_m + \tilde{E}_n \;,
\end{align}
where $\tilde{E}_m = \left \| \tilde{\mbf{u}} - \mbf{u} \right \|_2^2$ denotes the data \textit{modeling error}, and $\tilde{E}_n = \left \| \tilde{\mbf{e}} \right \|_2^2$ denotes the \textit{survived noise energy}.

To reduce the reconstruction error, one needs to $(i)$ preserve $\mbf{u}$ with small $\tilde{E}_m$, and $(ii)$ remove as much noise as possible to minimize $\tilde{E}_n$. 
Overall, the goal of denoising algorithms is to maximize the signal-to-noise ratio (SNR) of the denoised estimate $ \left \| \mbf{u} \right \|_2^2 / (\tilde{E}_m + \tilde{E}_n)$. 
To simplify the analysis, we investigate the denoising of image patches from an image corpus, which are denoted as $\begin{Bmatrix} \mbf{u}_i \end{Bmatrix}_{i \in \Omega}$.
We evaluate the \textit{normalized modeling error} $\alpha$ and \textit{survived noise energy ratio} $\beta$, which are defined as
\begin{equation} \label{eq:empirical}
\alpha \triangleq \frac{\sum_{i \in \Omega} \left \| \tilde{\mbf{u}}_i - \mbf{u}_i \right \|_2^2}{\sum_{i \in \Omega} \left \| \mbf{u}_i \right \|_2^2}, \;\;\; 
\beta \triangleq \frac{\sum_{i \in \Omega} \left \| \tilde{\mbf{e}}_i \right \|_2^2}{\sum_{i \in \Omega} \left \| \mbf{e}_i \right \|_2^2} \, .
\end{equation}
Eventually, we evaluate the effectiveness of certain image model using the empirical SNR of the denoised output
\begin{equation} \label{eq:SNR}
\text{SNR}_{\text{out}} \triangleq  \frac{\sum_{i \in \Omega} \left \| \mbf{u}_i \right \|_2^2}{\sum_{i \in \Omega} \left \| \tilde{\mbf{u}}_i + \tilde{\mbf{e}}_i - \mbf{u}_i \right \|_2^2} = \frac{1}{\alpha + \beta \, / \, \text{SNR}_{\text{in}}} \, .
\end{equation}
Here $\text{SNR}_{\text{in}}$ denotes the input SNR.
The proposed metric implies that, minimizing $\beta$ becomes more important comparing to $\alpha$ as $\text{SNR}_{\text{in}}$ becomes smaller.

\section{Experiments} \label{sec5}
We conduct various experiments to study the effectiveness of image modeling.
We first denoise image patches, which are generated from an image corpus, by projecting them onto the solution set of a single image model, and combining multiple models using methods described in Section \ref{sec42} and Section \ref{sec43}, respectively.
Furthermore, based on the scheme described in Section \ref{sec44}, we evaluate the denoised results by combining several popular image denoising algorithms, based on different image models.
Last but not least, we study what properties the deep denoising neural network exploits inexplicitly, by applying the proposed scheme in Section \ref{sec44} as a computational tool for evaluation.
We show improved denoising results over those from the state-of-the-art denoising network, by combining deep leraning with algorithms which exploit image properties that the learned neural networks fail to capture.

\subsection{Image Patch Denoising} \label{sec51}

Individual images can have very distinct structures and properties, thus evaluation of the denoised patch results from a single image may favor algorithms based on specific image models.
Here, we work with the \textit{Kodak} image dataset \cite{kodak} as our image corpus, which contains $24$ lossless images with diverse features.
The true color images are first converted to gray-scale.
From each image, we randomly select $1000$ patches of size $8 \times 8$ (thus, there are $N = 24000$ selected patches in total).              
We set up a $50 \times 50$ search window centered at each selected patch, and find its $M$ nearest neighbors within the windows \cite{wen2017joint,wen2017sparsity}, where $M = 64$. 
The $M$ patches in the $i$-th group are vectorized and form the columns of $\Xb_i \in \Rmb^{n \times M}$, and thus $\begin{Bmatrix} \Xb_i \end{Bmatrix}_{i=1}^{N}$ are the ground-true image data for our denoising experiment. 
We simulate i.i.d. Gaussian noise matrices $\begin{Bmatrix} \mbf{\Phi}_i \end{Bmatrix}_{i=1}^{N}$ of standard normal distribution (i.e., zero mean and unit standard deviation), which have same size as $\Xb_i$'s.
Thus the noisy data $\begin{Bmatrix} \Zb_i \end{Bmatrix}_{i=1}^{N}$ with noise standard deviation $\sigma$ are generated as $\Zb_i = \Xb_i + \sigma \mbf{\Phi}_i$ $\forall i$.

\subsubsection{Single Model}

We denoise $\begin{Bmatrix} \Zb_i \end{Bmatrix}_{i=1}^{N}$ by projecting them onto the set of SP, LR, JS, and GS models, respectively.
We evaluate the denoised estimates by plotting the $\alpha$ and $\beta$ against $K$, i.e., the sparsity level or rank number in the models, with fixed $\sigma = 20$.
Furthermore, we show the plot of $\text{SNR}_{\text{out}}$ against the $\text{SNR}_{\text{in}}$ with fixed $K = 10$, to illustrate the quality improvement of the denoised estimates, at different noise levels of the input data.

We first conduct \textit{oracle test}, meaning that the models are trained, and the projection operator is determined using the clean data $\begin{Bmatrix} \Xb_i \end{Bmatrix}_{i=1}^{N}$, which excludes the noise overfitting.
Fig. \ref{fig:oraclePlot}(a) plots the normalized modeling error $\alpha$ against the $K$ value in the oracle test.
The most flexible $GS$ model, and the most restrictive $JS$ model lead to the smallest, and the highest modeling error $\alpha$, respectively. 
Such empirical results are in accord with the theoretical analysis in Section \ref{sec42}.
Since the learned dictionaries or subspaces are unitary, and are uncorrelated with the noise, the noise is distributed uniformly in all bases. 
The $\beta$ plots are all linear against $K$, and identical for all models.
Thus the improvement of $\text{SNR}_{\text{out}}$ over $\text{SNR}_{\text{in}}$ only depends on $\alpha$.
Fig. \ref{fig:oraclePlot}(b) plots the output SNR against the input SNR, in which the $GS$ model based denoising provides largest quality improvement.

\begin{figure}[!t]
\begin{center}
\begin{tabular}{cc}
\includegraphics[height=1.1in]{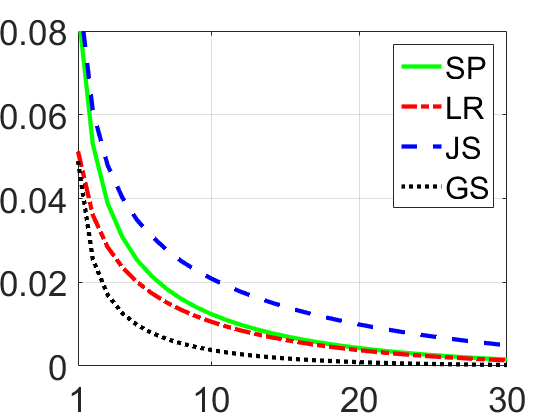} &
\includegraphics[height=1.1in]{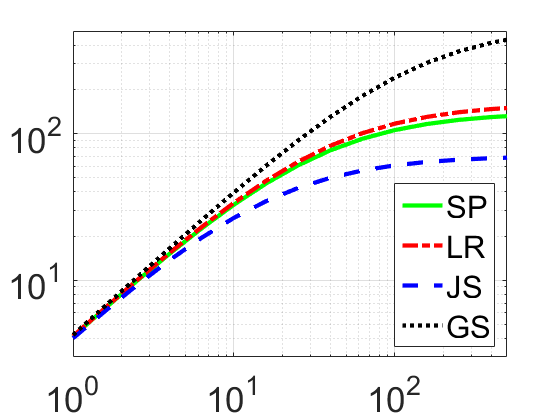}\\
{\small (a) Plot of $\alpha$ v.s $K$. } &  {\small (b) Plot of $\text{SNR}_{\text{out}}$ v.s $\text{SNR}_{\text{in}}$.}
\end{tabular}
\caption{Oracle denoising tests using single image model: (a) Plot of the $\alpha$ value against the $K$ when fixing $\sigma = 20$, and (b) Plot of the output SNR against the input SNR when fixing $K = 10$.}
\label{fig:oraclePlot}
\end{center}
\vspace{-0.12in}
\end{figure}

In practice, most of the popular model-based image restoration algorithms proposed to learn the sparse or low-rank models using the corrupted measurements directly.
Thus, the trained dictionary, or the low-dimensional subspace may be overfitting to the noise, depending on noisy variance, as well as the model complexity, for a fixed set of training samples.
To study the influence of the noise, we train the models using the noisy measurements $\begin{Bmatrix} \Zb_i \end{Bmatrix}_{i=1}^{N}$ with noise $\sigma = 20$, and conduct the denoising test.
Fig. \ref{fig:noisyPlot}(a) plots the normalized modeling error $\alpha$ against the $K$ value.
Comparing to Fig. \ref{fig:oraclePlot}(a), the $\alpha$ value of the denoised estimate using the GS model becomes smaller, relative to the results using other models, especially when $K$ is small.
Whereas the restrictive $JS$ models now provides high $\alpha$.
Fig. \ref{fig:noisyPlot}(b) plots the remaining noise ratio $\beta$, which is no longer identical for all models.
The $GS$ model leads to much higher $\beta$ due to noise overfitting in training, while the $JS$ model is relatively more robust to noise.
Fig. \ref{fig:noisyPlot}(c) plots the $\text{SNR}_{\text{out}}$ against the $\text{SNR}_{\text{in}}$. 
Different from Fig. \ref{fig:oraclePlot}(b) in which the $GS$ model always provides best denoised result, the restrictive JS model provides the best denoised estimates with the highest $\text{SNR}_{\text{out}}$ among all models, when $\text{SNR}_{\text{in}}$ is low (i.e. image is noisy).
As $\text{SNR}_{\text{in}}$ keeps increasing, the $\text{SNR}$ plot in Fig. \ref{fig:noisyPlot}(c) converges to the oracle results in Fig. \ref{fig:oraclePlot}(b).
Furthermore, instead of using fixed $K$ values, we search for the optimal $K$ which generate the highest $\text{SNR}_{\text{out}}$ for each $\text{SNR}_{\text{in}}$ using each model \footnote{The popular denoising algorithms usually have specific approaches of selecting $K$~\cite{Mairal2009,gu2017weighted,octobos}}.
Fig.~\ref{fig:bestKplot}(a) plots the $\text{SNR}_{\text{out}}$ against the $\text{SNR}_{\text{in}}$ using the optimal $K$'s, which demonstrate similar behavior comparing Fig. \ref{fig:noisyPlot}(c).

\begin{figure}[!t]
\begin{center}
\begin{tabular}{ccc}
\includegraphics[height=1.05in]{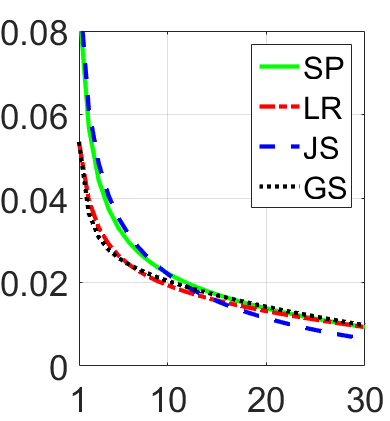} &
\includegraphics[height=1.05in]{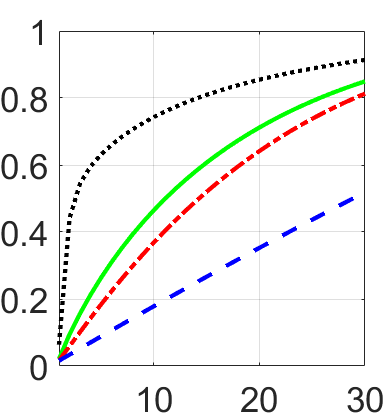} &
\includegraphics[height=1.05in]{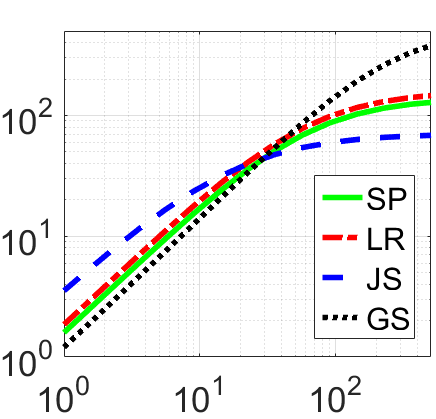}\\
{\small (a) $\alpha$ v.s $K$} &  {\small (b) $\beta$ v.s $K$} & {\small (c) $\text{SNR}_{\text{out}}$ v.s $\text{SNR}_{\text{in}}$}
\end{tabular}
\vspace{-0.07in}
\caption{Denoising test using single image model: (a) Plot of the $\alpha$ value against the $K$, (b) Plot of the $\beta$ value against the $K$ when fixing $\sigma = 20$, and (c) Plot of the output SNR against the input SNR when fixing $K = 10$.}
\label{fig:noisyPlot}
\end{center}
\vspace{-0.2in}
\end{figure}

\begin{figure}
\begin{center}
\begin{tabular}{cc}
\includegraphics[height=1.1in]{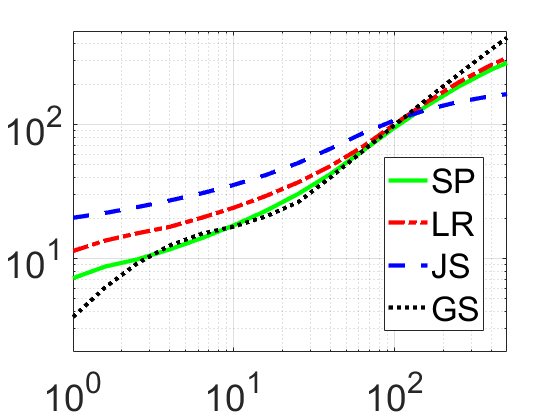} &
\includegraphics[height=1.1in]{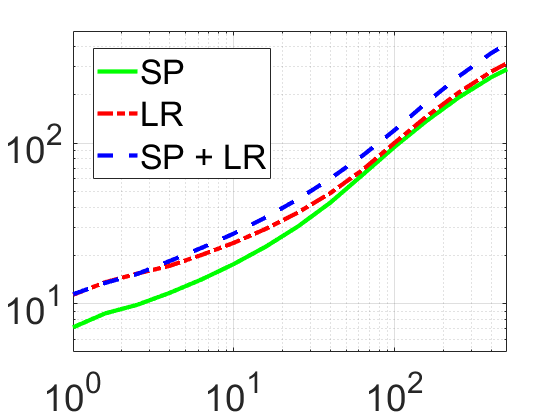}\\
{\small (a) $\text{SNR}$s by single models. } &  {\small (b) Combining SP and LR.}
\end{tabular}
\caption{The plot of output $\text{SNR}$s against input $\text{SNR}$s with the optimal $K$'s: Denoising results by (a) using single image models, and (b) combing SP and LR models.}
\label{fig:bestKplot}
\end{center}
\vspace{-0.12in}
\end{figure}

\subsubsection{Multiple Models}
When applying multiple image models, Section \ref{sec43} provides some intuitions why alternating projection algorithms may fail to generate accurate estimates.
Besides, we conduct dual-model denoising test, and compare the empirical results by applying each single model, and two image models jointly, using both the alternating projection and the convex combination algorithms described in Section \ref{sec43}.
For the convex combination algorithm, I set the weighting factor $\mu = 0.5$ \footnote{There exists optimal $\mu$ which may further improve the denoised estimate. Here we naively set $\mu = 0.5$ which has already showed improvement over the competing approaches.}.

We show the denoising results combining the SP and LR models as an example.
Fig. \ref{fig:alternating}(a) plots the value of $\alpha$ and $\beta$ against $K$, using the SP and LR models independently, as well as using both via alternating projection, and convex combination methods.
It is clear that the denoised estimate using alternating projection algorithm yields larger modeling error, comparing to those obtained by simple projection onto the solution set of a single model.
Whereas the results by applying convex combination has smaller modeling error, which improves the denoising quality.
Furthermore, Fig. \ref{fig:alternating}(b) shows that the alternating projection algorithm generates much larger remaining noise, while convex combination approach can effectively suppress the $\beta$ value.
We also observe similar behavior when denoising image data by combining other image models.
Therefore, the convex combination turns out to be an effective method for jointly imposing multiple image models in denoising.

As the convex combination of results using multiple image models improves the denoising performance, we study what the best combination is to enhance the quality of the denosied estimates.
We provide empirical results which support our Conjecture \ref{conj:complement} in patch denoising.
Fig. \ref{fig:combinationImprove} (a) plots the $\text{SNR}_{\text{out}}$ of the denoised estimates against the $\text{SNR}_{\text{in}}$ of the noisy images, using algorithms based on the LR and GS models.
The convex combination of the results using these two single models provides marginal performance improvement.
It is in accord with the relationship of the solution sets shown in Fig. \ref{fig:Relationship}. Since the LR set is contained in the GS set, their intersection is relatively large.
Fig. \ref{fig:combinationImprove} (b) plots the $\text{SNR}_{\text{out}}$ of the denoised estimates using algorithms based on the LR and SP models.
Different from the LR set and the GS set, Fig. \ref{fig:Relationship} indicates that the SP set has small intersection with LR set.
As a result, larger improvement is observed when combining the results based on the SP model, and the LR model.
Instead of applying fixed $K$ values, we select the optimal $K$ for the denoising test using LR and SP models.
Fig.~\ref{fig:bestKplot} (b) plots the $\text{SNR}_{\text{out}}$ of the denoised estimates against $\text{SNR}_{\text{in}}$. 
The convex combination of results using SP and LR models clearly outperform the algorithm using single model.

\begin{figure}[!t]
\begin{center}
\begin{tabular}{cc}
\includegraphics[height=1.1in]{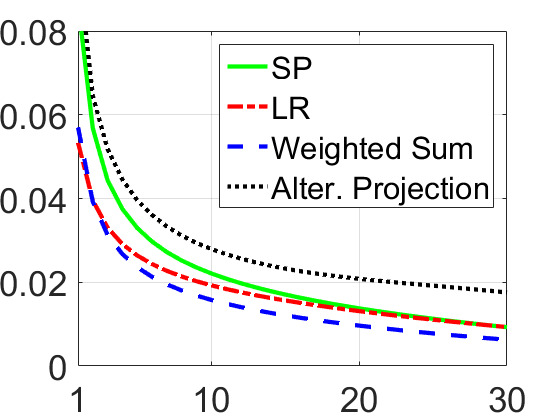} &
\includegraphics[height=1.1in]{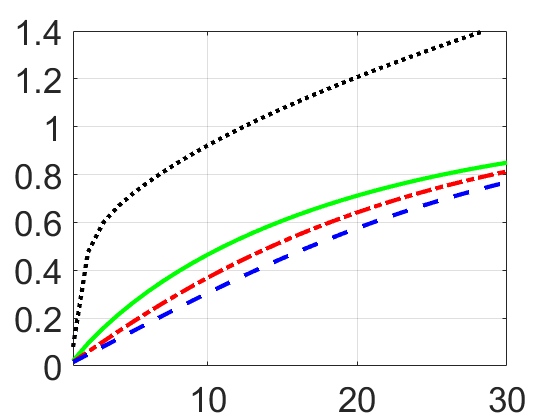}\\
{\small (a) Plot of $\alpha$ v.s $K$. } &  {\small (b) Plot of  $\beta$ v.s $K$.}
\end{tabular}
\caption{Comparison of the denoising results using single model, and multiple models by alternating projection and convex combination: (a) Plot of the $\alpha$ value against the $K$, and (b) Plot of the  $\beta$ value against the $K$.}
\label{fig:alternating}
\end{center}
\vspace{-0.12in}
\end{figure}

\begin{figure}[!t]
\begin{center}
\begin{tabular}{cc}
\includegraphics[height=1.1in]{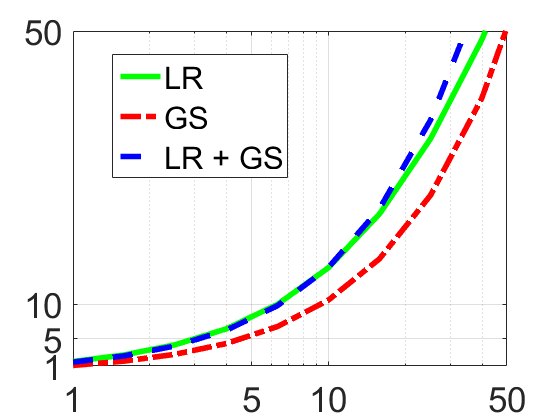} &
\includegraphics[height=1.1in]{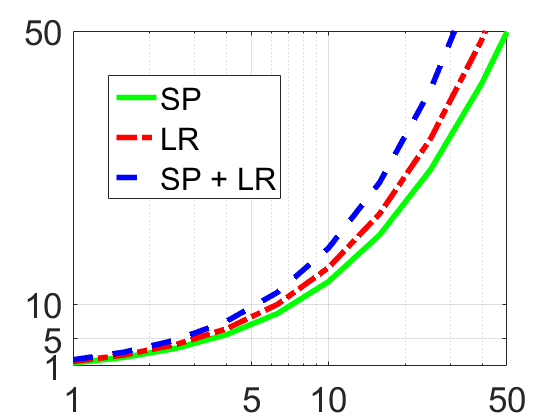}\\
{\small (a) Combining LR and GS. } &  {\small (b) Combining LR and SP.}
\end{tabular}
\caption{The plot $\text{SNR}_{\text{out}}$ using single models, and their convex combination: (a) Plot of the $\alpha$ value against the $K$, and (b) Plot of the  $\beta$ value against the $K$.}
\label{fig:combinationImprove}
\end{center}
\vspace{-0.12in}
\end{figure}

\subsection{Model-based Image Denoising} \label{sec52}

Various model-based image restoration methods have been proposed recently, which achieved promising performance in image denoising.
Comparing to the simple denoising methods by projection which we introduced in Section \ref{sec42}, the popular image denoising algorithms are usually more complicated which involves additional steps, including patch aggregation, block matching, applying special shrinkage function, etc.
However, the core of these algorithms are still based on solution set projection.
Here, we provide numerical results to show that the Conjecture \ref{conj:complement} also holds for the convex combination of results using multiple image denoising algorithms, which is described in Section \ref{sec44}.

We select popular image denoising algorithms based on image models that we analyzed, including
\begin{itemize}
	\item Sparsity (SP) model: KSVD \cite{elad}, and OCTOBOS \cite{octobos},
	\item Low-Rank (LR) model: SAIST \cite{Dong2013}, and WNNM \cite{gu2017weighted},
	\item Group-wise Sparsity (GS) model: SSC-GSM \cite{dong2015image}.
\end{itemize}
The publicly available codes from their authors' websites are used for implementation of the image denoising tests.
We use the $24$ lossless images (converted to gray-scale) from the \textit{Kodak} image dataset \cite{kodak} as the testing images, and simulate i.i.d. Gaussian noise at 4 different noise levels ($\sigma = 5, 10, 15$ and $25$) to generate the noisy images.
The images are denoised using the selected popular restoration algorithms.
We apply the Peak Signal-to-Noise Ratio (PSNR) in decible (dB) as the objective metric to evaluate the quality of the denoised images.
The denoised results using the selected pairs of algorithms are combined.
We set $\lambda_f = 1e-2$, and conduct a line search to use the best weight $\mu$ between $0$ and $1$, which provides the highest PSNR of the combined result.
The reported PSNR value is averaged over the $24$ testing images, for each noise $\sigma$ and method.

\begin{figure*}
\begin{center}
\begin{tabular}{ccccc}
\includegraphics[height=1.24in]{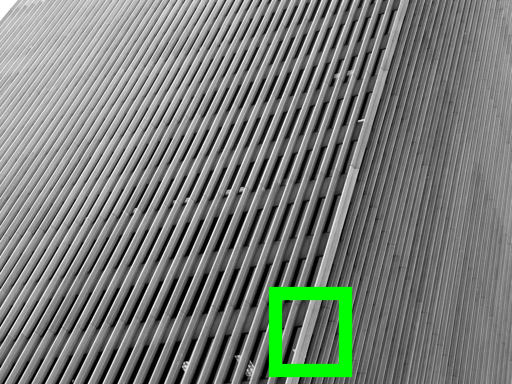} &
\includegraphics[height=1.24in]{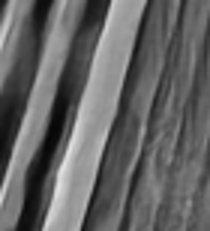} &
\includegraphics[height=1.24in]{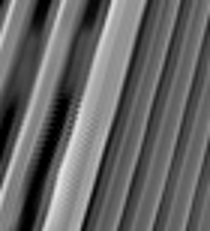} & 
\includegraphics[height=1.24in]{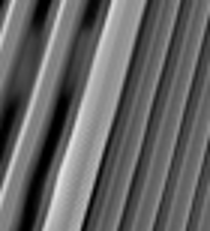} &
\includegraphics[height=1.24in]{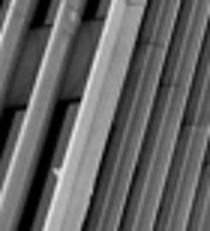}
\\
\includegraphics[height=1.15in]{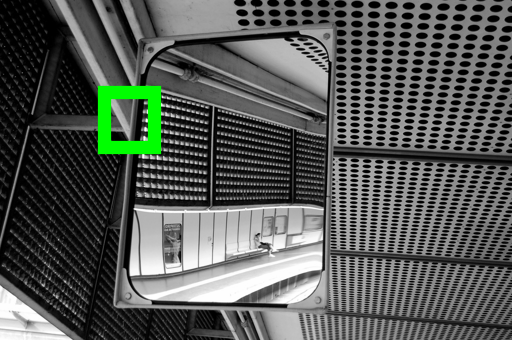} &
\includegraphics[height=1.24in]{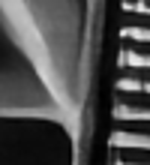} &
\includegraphics[height=1.24in]{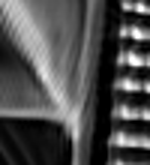} & 
\includegraphics[height=1.24in]{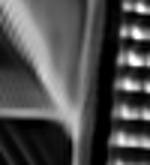} &
\includegraphics[height=1.24in]{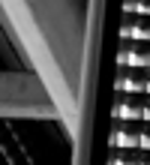}
\\
\includegraphics[height=1.15in]{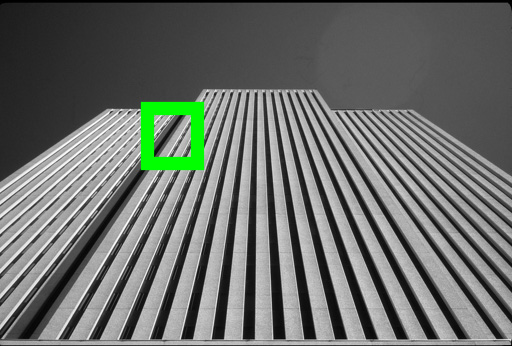} &
\includegraphics[height=1.24in]{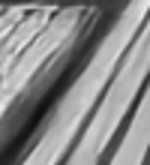} &
\includegraphics[height=1.24in]{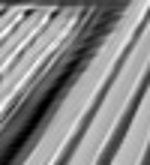} & 
\includegraphics[height=1.24in]{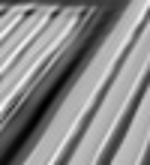} &
\includegraphics[height=1.24in]{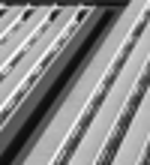}
\\
{\footnotesize (a) Ground Truth with Zoom-in} & {\footnotesize (b) DnCNN} & {\footnotesize (c) WNNM} & {\footnotesize (d) Proposed}  & {\footnotesize (e) Ground Truth}
\label{fig:denoisingVisual}
\end{tabular}
\end{center}
\vspace{-0.1in}
\caption{Denoising results of (a) the example images from the Urban100 Dataset \cite{huang2015single} at $\sigma = 70$, with the green rectangles highlighting the zoom-in regions of (b) the denoised images by DnCNN (PSNR = $22.26$ / $21.08$ / $27.13$ dB), and (c) the denoised images by WNNM (PSNR = $26.15$ / $24.10$ / $29.28$ dB), and (d) the denoised images by the proposed convex combination (PSNR = $\mathbf{27.06}$ / $\mathbf{24.77}$ / $\mathbf{30.21}$ dB).}
\vspace{-0.1in}
\end{figure*}

Table \ref{table:im_denoise_1} lists the average PSNRs of the denoised images using KSVD, OCTOBOS, and their convex combination.
Since both KSVD and OCTOBOS are image denoising algorithms based on the SP model, the convex combination (with the optimal factor $\mu$) of their estimates only provide marginal improvement.

\begin{table}[h!]
\centering
\fontsize{9}{14pt}\selectfont
\begin{tabular}{||r | c c c c||} 
 \hline
 & $\sigma$ =5  & $\sigma$ =15  & $\sigma$ =25 & Average\\
 \hline\hline
KSVD  & 37.60  & 31.59  & 29.12 & 32.77 \\
OCTOBOS  & 38.27   & 32.16  & 29.60 & 33.34 \\
KSVD + OCTOBOS  & 38.28   & 32.20  & 29.64 & 33.37\\
\hline
$\Delta$ PSNR mean& 0.01   & 0.04  & 0.04  & 0.03\\
$\Delta$ PSNR std & 0.01   & 0.02  & 0.02 & 0.02\\
 \hline
\end{tabular}
\caption{Average PSNR of the denoised images by KSVD, OCTOBOS, their convex combination, and the improved PSNR with its standard deviation (std) at different noise levels}
\label{table:im_denoise_1}
\end{table}

\begin{table}[h!]
\centering
\fontsize{9}{14pt}\selectfont
\begin{tabular}{||r | c c c c||} 
 \hline
 & $\sigma$ =5  & $\sigma$ =15  & $\sigma$ =25 & Average\\
 \hline\hline
SSC-GSM  & 38.37  & 32.37  & 29.76 & 33.50 \\
SAIST  & 38.39   & 32.39  & 29.93 & 33.57 \\
SSC-GSM + SAIST  & 38.46  & 32.49  & 29.99 & 33.64\\
\hline
$\Delta$ PSNR & 0.06  & 0.07  & 0.05 & 0.07\\
$\Delta$ PSNR std & 0.02  & 0.03  & 0.04 & 0.03\\
 \hline
\end{tabular}
\caption{Average PSNR of the denoised images by SSC-GSM, SAIST, their combination, and the improved PSNR with its std under different noise levels}
\label{table:im_denoise_2}
\end{table}

\begin{table}[h!]
\centering
\fontsize{9}{14pt}\selectfont
\begin{tabular}{|| r | c c c c||} 
 \hline
 & $\sigma$ =5  & $\sigma$ =15 & $\sigma$ =25 & Average \\
 \hline\hline
OCTOBOS  & 38.27   & 32.16  & 29.60  & 33.34 \\
SAIST  & 38.39    & 32.39  & 29.93  & 33.57 \\
OCTOBOS + SAIST  & 38.50   & 32.50  & 30.02 & 33.67\\
\hline
$\Delta$ PSNR mean& 0.10  & 0.11  & 0.09 & 0.10\\
$\Delta$ PSNR std & 0.03  & 0.04  & 0.04 & 0.04\\
 \hline
\end{tabular}
\caption{Average PSNR of the denoised images by octobos, saist, their combination and improved PSNR (mean and std) under different noise levels}
\label{table:im_denoise_3}
\end{table}

Table \ref{table:im_denoise_2} lists the average PSNRs of the denoised images using SSC-GSM, SAIST, and their convex combination.
The SSC-GSM algorithm is based on the GS model, while SAIST is based on the LR model.
They are different image models based on our analysis, thus the convex combination of their estimates provides relatively larger improvement.

Table \ref{table:im_denoise_3} lists the average PSNRs of the denoised images using OCTOBOS, SAIST, and their convex combination.
The image patch denoising results in Section \ref{sec52} demonstrates that jointly imposing the SP and LR models can effectively improve the denoised estimates.
Here, we observe the similar results: the convex combination of the image denoising algorithms based on the LR and SP models provides more PSNR improvement over other combinations.
It is in accord with Conjecture \ref{conj:complement} in the image denoising experiments.

\subsection{Understand and Enhance Deep Neural Networks for Image Denoising} \label{sec53}

Besides the model-based image restoration algorithms, recent works applied the popular deep learning technique in various inverse problems which showed promising performance.
The recently proposed DnCNN \cite{zhang2017beyond} demonstrated superior image denoising results comparing to the model-based methods.
Different from conventional approaches solving inverse problems, the deep learning approach requires a large training corpus, and has little assumption on the image priors.
However, it is unclear what image properties and models the learned neural network exploits.

We apply the convex combination approach for image denoising, as a computational tool to study the relationship between the learned neural networks and the well-defined image models.
With the same image denoising setup in Section \ref{sec52}, the $24$ images with simulated i.i.d. Gaussian noise at $\sigma = 30, 50$ and $70$ (for which the released DnCNN models have corresponding $\sigma$ levels), from Kodak set are denoised using the trained DnCNN networks, which are available from the authors' GitHub repository \cite{dncnn}.
We combine the denoised estimates using DnCNN, and other image denoising algorithms using different image models. 
We set $\lambda_f = 10^{-2}$, and use the best weight $\mu$ between $0$ and $1$ to achieve the highest PSNR of the combined results.
Based on Conjecture \ref{conj:complement}, if the combined estimate fails provide PSNR improvement over the results using DnCNN or model-based algorithm alone, such image model has been exploited inexplicitly by the learned neural networks.
On the other hand, if such combination can further improve the denoising performance using single method, the corresponding image model has not been fully exploited by deep learning.

\begin{table}[h!]
\centering
\fontsize{9}{14pt}\selectfont
\begin{tabular}{|| r | c c c | c||} 
 \hline
 & $\sigma$ =15  & $\sigma$ =25  & $\sigma$ =50  & $\Delta $ PSNR\\
 \hline\hline
DnCNN  & 32.89  & 30.47  & 27.49 &  0.00 \\
DnCNN + KSVD  & 32.89  & 30.47  & 27.49 & 0.00\\
DnCNN + SSC-GSM  & 32.94  & 30.52  & 27.58 & 0.06\\
DnCNN + WNNM  & 32.95  & 30.55  & 27.60 & 0.08\\ \hline
\end{tabular}
\caption{Average PSNR of the denoised images from Kodak Set by DnCNN, and its combination with other model-based image denoising methods, with their corresponding improved PSNRs.}
\label{table:im_denoise_4}
\end{table}

\begin{table}[h!]
\centering
\fontsize{9}{14pt}\selectfont
\begin{tabular}{|| r | c c c | c||} 
 \hline
 & $\sigma$ =30  & $\sigma$ =50  & $\sigma$ =70  & $\Delta $ PSNR\\
 \hline\hline
DnCNN  & 28.16  & 25.48  & 23.61 & 0.00\\
DnCNN + KSVD  & 27.94  & 25.20  & 23.26 & -0.29\\
DnCNN + SSC-GSM  & 28.64  & 25.97  & 24.17 & 0.51\\
DnCNN + WNNM  & \textbf{28.82}  & \textbf{26.09}  & \textbf{24.26} & \textbf{0.64}\\
\hline
\end{tabular}
\caption{Average PSNR of the denoised images from the Urban100 Dataset \cite{huang2015single} by DnCNN, and its combination with KSVD, SSC-GSM, and WNNM, with their corresponding improved PSNRs. The highest PSNRs for each $sigma$ and the highest $\Delta$ PSNR are highlighted in bold.}
\label{table:im_denoise_5}
\end{table}

Fig. \ref{table:im_denoise_4} lists the average PSNRs of the denoised images using DnCNN, as well as those are convex combination with model-based algorithms, including KSVD, SSC-GSM, and WNNM, at each testing noise level.
For the results that combines DnCNN and other methods, the PSNR improvement comparing to those using DnCNN alone is listed as $\Delta$ PSNR.
The results using DnCNN$+$KSVD do not provide any PSNR improvement, even with the best possible weight $\mu$.
Such empirical results show that the learned DnCNN network has exploited image local sparsity inexplicitly.
On the contrary, the results using DnCNN$+$SSC-GSM and DnCNN$+$WNNM provide even higher PSNR than the result using only the state-of-the-art DnCNN network.
It demonstrated that the learned DnCNN network has not fully captured non-local image properties, such as group-wise sparsity and low-rankness, which are closely related to the well-known image self-similarity.

The aforementioned experiments assume that the optimal $\mu$ is known, for which the value search requires the oracle to be available.
However, in practice, the weight $\mu$ needs to be either adaptively learned, or fixed.
Now we demonstrate how much the simple convex combination approach can enhance the image denoising results over state-of-the-art deep learning method, by naively fixing $\mu = 0.5$.
To verify that the conjecture that the image self-similarity is not fully exploited by the learned DnCNN network, we work with the $100$ images (converted to gray-scale) from the Urban100 dataset \cite{huang2015single} which contains image having repeating structures, and simulate i.i.d. Gaussian noise at $\sigma = 30, 50$ and $70$ to generate the noisy testing images.

Fig. \ref{table:im_denoise_5} lists the average PSNRs of the denoised images using the learned DnCNN network, as well as its combination with KSVD, SSC-GSM, and WNNM, at each testing noise level.
The PSNR improvement by each approach comparing to those using DnCNN alone is listed as $\Delta$ PSNR.
With the fixed weight $\mu = 0.5$, the method combining DnCNN with KSVD provides lower denoised PSNR, which is accord with the previous analysis.
Whereas the DnCNN results combined with SSC-GSM and WNNM achieve noticeable quality improvement, with $\Delta$ PSNR $= 0.51$ and $0.64$, respectively.
It provides confidence and promise on further improvement over the current state-of-the-art image restoration neural networks.
Fig. \ref{fig:denoisingVisual} compares the denoised images using DnCNN, and WNNM alone, to the denoised results by combination of the two.
The results by DnCNN usually recover image details, while introducing spatial distortion.
On the contrary, the results using WNNM have smooth spatial structures, but contain undesired artifacts.
The results by convex combination of the two achieve quality improvement in terms of both denoised PSNR, and the visual quality.

\section{Conclusion} \label{sec6}

We provide theoretical analysis on image models that are used in popular image restoration algorithms.
The relationship among the solution sets of sparsity, group-wise sparsity, joint sparsity, and low-rankness models are presented and proved, with mild assumptions.
We propose objective metrics to evaluate how effective each of these image models are applied in image denoising algorithm.
When images are denoised via weighted combination of results by projection onto the solution sets of single models, we provide a condition which guarantee the image quality improvement in terms of SNR.
It turns out that the combination of complementary image models provides larger denoising performance improvement, and we supply empirical evidence which supports our conjecture.
Furthermore, we apply the proposed denoising framework by weighted combination to study the image properties that are exploited by deep learnng.
We show that the denoised results using the state-of-the-art deep learning methods can be further improved by the proposed framework.
With the knowledge and understanding of the relationship of image models, we plan to develop more advanced image restoration scheme by applying multiple effective regularizers.

\ifCLASSOPTIONcaptionsoff
  \newpage
\fi

\bibliographystyle{./IEEEtran}
\bibliography{./refs}

\begin{thebibliography}{10}
\providecommand{\url}[1]{#1}
\csname url@samestyle\endcsname
\providecommand{\newblock}{\relax}
\providecommand{\bibinfo}[2]{#2}
\providecommand{\BIBentrySTDinterwordspacing}{\spaceskip=0pt\relax}
\providecommand{\BIBentryALTinterwordstretchfactor}{4}
\providecommand{\BIBentryALTinterwordspacing}{\spaceskip=\fontdimen2\font plus
\BIBentryALTinterwordstretchfactor\fontdimen3\font minus
  \fontdimen4\font\relax}
\providecommand{\BIBforeignlanguage}[2]{{%
\expandafter\ifx\csname l@#1\endcsname\relax
\typeout{** WARNING: IEEEtran.bst: No hyphenation pattern has been}%
\typeout{** loaded for the language `#1'. Using the pattern for}%
\typeout{** the default language instead.}%
\else
\language=\csname l@#1\endcsname
\fi
#2}}
\providecommand{\BIBdecl}{\relax}
\BIBdecl

\bibitem{lustig2008compressed}
M.~Lustig, D.~L. Donoho, J.~M. Santos, and J.~M. Pauly, ``Compressed sensing
  {MRI},'' \emph{IEEE signal processing magazine}, vol.~25, no.~2, pp. 72--82,
  2008.

\bibitem{sidky2008image}
E.~Y. Sidky and X.~Pan, ``Image reconstruction in circular cone-beam computed
  tomography by constrained, total-variation minimization,'' \emph{Physics in
  Medicine \& Biology}, vol.~53, no.~17, p. 4777, 2008.

\bibitem{Dabov2007}
K.~Dabov, A.~Foi, V.~Katkovnik, and K.~Egiazarian, ``Image denoising by sparse
  3-d transform-domain collaborative filtering,'' vol.~16, no.~8, pp.
  2080--2095, Aug 2007.

\bibitem{yu2011dct}
G.~Yu and G.~Sapiro, ``Dct image denoising: a simple and effective image
  denoising algorithm,'' \emph{Image Processing On Line}, vol.~1, pp. 292--296,
  2011.

\bibitem{chang2000adaptive}
N.~Weyrich and G.~T. Warhola, ``Wavelet shrinkage and generalized cross
  validation for image denoising,'' \emph{IEEE Transactions on Image
  Processing}, vol.~7, no.~1, pp. 82--90, 1998.

\bibitem{elad}
R.~Yan, L.~Shao, and Y.~Liu, ``Nonlocal hierarchical dictionary learning using
  wavelets for image denoising,'' \emph{IEEE transactions on image processing},
  vol.~22, no.~12, pp. 4689--4698, 2013.

\bibitem{elad2}
M.~Elad and M.~Aharon, ``Image denoising via sparse and redundant
  representations over learned dictionaries,'' \emph{IEEE Trans. Image
  Process.}, vol.~15, no.~12, pp. 3736--3745, 2006.

\bibitem{Mairal2009}
J.~Mairal, F.~Bach, J.~Ponce, G.~Sapiro, and A.~Zisserman, ``Non-local sparse
  models for image restoration,'' in \emph{IEEE 12th Int. Conf. Comput. Vision
  (ICCV 2009)}, Sept 2009, pp. 2272--2279.

\bibitem{sabres3}
S.~Ravishankar and Y.~Bresler, ``$\ell_{0}$ sparsifying transform learning with
  efficient optimal updates and convergence guarantees,'' \emph{IEEE
  Transactions on Signal Processing}, vol.~63, no.~9, pp. 2389--2404, 2014.

\bibitem{octobos}
B.~Wen, S.~Ravishankar, and Y.~Bresler, ``Structured overcomplete sparsifying
  transform learning with convergence guarantees and applications,'' \emph{Int.
  J. Computer Vision}, vol. 114, no.~2, pp. 137--167, 2015.

\bibitem{wen2017sparsity}
B.~Wen, Y.~Li, and Y.~Bresler, ``When sparsity meets low-rankness: Transform
  learning with non-local low-rank constraint for image restoration,'' in
  \emph{Acoustics, Speech and Signal Processing (ICASSP), 2017 IEEE
  International Conference on}.\hskip 1em plus 0.5em minus 0.4em\relax IEEE,
  2017, pp. 2297--2301.

\bibitem{Dong2011}
W.~Dong, X.~Li, L.~Zhang, and G.~Shi, ``Sparsity-based image denoising via
  dictionary learning and structural clustering,'' in \emph{IEEE Conf. Comput.
  Vision and Pattern Recognition (CVPR 2011)}, June 2011, pp. 457--464.

\bibitem{Zhang2014}
J.~Zhang, D.~Zhao, and W.~Gao, ``Group-based sparse representation for image
  restoration,'' vol.~23, no.~8, pp. 3336--3351, Aug 2014.

\bibitem{zha2018group}
Z.~Zha, X.~Zhang, Q.~Wang, L.~Tang, and X.~Liu, ``Group-based sparse
  representation for image compressive sensing reconstruction with non-convex
  regularization,'' \emph{Neurocomputing}, 2018.

\bibitem{Dong2013}
W.~Dong, G.~Shi, and X.~Li, ``Nonlocal image restoration with bilateral
  variance estimation: A low-rank approach,'' vol.~22, no.~2, pp. 700--711, Feb
  2013.

\bibitem{gu2017weighted}
S.~Gu, Q.~Xie, D.~Meng, W.~Zuo, X.~Feng, and L.~Zhang, ``Weighted nuclear norm
  minimization and its applications to low level vision,'' \emph{International
  Journal of Computer Vision}, vol. 121, no.~2, pp. 183--208, 2017.

\bibitem{Yoon2014}
H.~Yoon, K.~S. Kim, D.~Kim, Y.~Bresler, and J.~C. Ye, ``Motion adaptive
  patch-based low-rank approach for compressed sensing cardiac cine mri,''
  vol.~33, no.~11, pp. 2069--2085, Nov 2014.

\bibitem{zha2017image}
Z.~Zha, X.~Yuan, B.~Wen, J.~Zhou, J.~Zhang, and C.~Zhu, ``From rank estimation
  to rank approximation: Rank residual constraint for image restoration,''
  \emph{IEEE Transactions on Image Processing}, vol.~29, pp. 3254--3269, 2019.

\bibitem{zha2017analyzing}
------, ``A benchmark for sparse coding: When group sparsity meets rank
  minimization,'' \emph{IEEE Transactions on Image Processing}, vol.~29, pp.
  5094--5109, 2020.

\bibitem{zha2018acml}
Z.~Zha, X.~Yuan, B.~Wen, J.~Zhou, and C.~Zhu, ``Joint patch-group based sparse
  representation for image inpainting,'' in \emph{Asian Conference on Machine
  Learning}, 2018.

\bibitem{wen2017joint}
B.~Wen, Y.~Li, L.~Pfister, and Y.~Bresler, ``Joint adaptive sparsity and
  low-rankness on the fly: an online tensor reconstruction scheme for video
  denoising,'' in \emph{IEEE International Conference on Computer Vision
  (ICCV)}, 2017.

\bibitem{wang2018non}
X.~Wang, R.~Girshick, A.~Gupta, and K.~He, ``Non-local neural networks,'' in
  \emph{The IEEE Conference on Computer Vision and Pattern Recognition (CVPR)},
  2018, pp. 7794--7803.

\bibitem{liu2018non}
D.~Liu, B.~Wen, Y.~Fan, C.~C. Loy, and T.~S. Huang, ``Non-local recurrent
  network for image restoration,'' \emph{arXiv preprint arXiv:1806.02919},
  2018.

\bibitem{dabov2007color}
K.~Dabov, A.~Foi, V.~Katkovnik, and K.~Egiazarian, ``Color image denoising via
  sparse 3d collaborative filtering with grouping constraint in
  luminance-chrominance space,'' in \emph{Image Processing, 2007. ICIP 2007.
  IEEE International Conference on}, vol.~1.\hskip 1em plus 0.5em minus
  0.4em\relax IEEE, 2007, pp. I--313.

\bibitem{zoran2011learning}
D.~Zoran and Y.~Weiss, ``From learning models of natural image patches to whole
  image restoration,'' in \emph{Computer Vision (ICCV), 2011 IEEE International
  Conference on}.\hskip 1em plus 0.5em minus 0.4em\relax IEEE, 2011, pp.
  479--486.

\bibitem{rubinstein2013analysis}
R.~Rubinstein, T.~Peleg, and M.~Elad, ``Analysis k-svd: A dictionary-learning
  algorithm for the analysis sparse model,'' \emph{IEEE Transactions on Signal
  Processing}, vol.~61, no.~3, pp. 661--677, 2013.

\bibitem{dong2015image}
W.~Dong, G.~Shi, Y.~Ma, and X.~Li, ``Image restoration via simultaneous sparse
  coding: Where structured sparsity meets gaussian scale mixture,''
  \emph{International Journal of Computer Vision}, vol. 114, no. 2-3, pp.
  217--232, 2015.

\bibitem{xu2015patch}
J.~Xu, L.~Zhang, W.~Zuo, D.~Zhang, and X.~Feng, ``Patch group based nonlocal
  self-similarity prior learning for image denoising,'' in \emph{Proceedings of
  the IEEE international conference on computer vision}, 2015, pp. 244--252.

\bibitem{dong2013nonlocally}
W.~Dong, L.~Zhang, G.~Shi, and X.~Li, ``Nonlocally centralized sparse
  representation for image restoration,'' \emph{IEEE Transactions on Image
  Processing}, vol.~22, no.~4, pp. 1620--1630, 2013.

\bibitem{chen2015external}
F.~Chen, L.~Zhang, and H.~Yu, ``External patch prior guided internal clustering
  for image denoising,'' in \emph{Proceedings of the IEEE international
  conference on computer vision}, 2015, pp. 603--611.

\bibitem{mairal2009online}
J.~Mairal, F.~Bach, J.~Ponce, and G.~Sapiro, ``Online dictionary learning for
  sparse coding,'' in \emph{Proceedings of the 26th annual international
  conference on machine learning}.\hskip 1em plus 0.5em minus 0.4em\relax ACM,
  2009, pp. 689--696.

\bibitem{elad2007analysis}
M.~Elad, P.~Milanfar, and R.~Rubinstein, ``Analysis versus synthesis in signal
  priors,'' \emph{Inverse problems}, vol.~23, no.~3, p. 947, 2007.

\bibitem{wen2017frist}
B.~Wen, S.~Ravishankar, and Y.~Bresler, ``{FRIST}- flipping and rotation
  invariant sparsifying transform learning and applications,'' \emph{Inverse
  Problems}, vol.~33, no.~7, p. 074007, 2017.

\bibitem{candes2005decoding}
E.~J. Candes and T.~Tao, ``Decoding by linear programming,'' \emph{IEEE
  transactions on information theory}, vol.~51, no.~12, pp. 4203--4215, 2005.

\bibitem{frist}
B.~Wen, S.~Ravishankar, and Y.~Bresler, ``Learning flipping and rotation
  invariant sparsifying transforms,'' in \emph{IEEE International Conference on
  Image Processing (ICIP)}, 2016, pp. 3857--3861.

\bibitem{chen2017trainable}
Y.~Chen and T.~Pock, ``Trainable nonlinear reaction diffusion: A flexible
  framework for fast and effective image restoration,'' \emph{IEEE transactions
  on pattern analysis and machine intelligence}, vol.~39, no.~6, pp.
  1256--1272, 2017.

\bibitem{AharonEladBruckstein2006}
M.~Aharon, M.~Elad, and A.~Bruckstein, ``{K-SVD}: An algorithm for designing
  overcomplete dictionaries for sparse representation,'' \emph{IEEE Trans. on
  Signal Processing}, vol.~54, no.~11, pp. 4311--4322, 2006.

\bibitem{ji2011robust}
H.~Ji, S.~Huang, Z.~Shen, and Y.~Xu, ``Robust video restoration by joint sparse
  and low rank matrix approximation,'' \emph{SIAM Journal on Imaging Sciences},
  vol.~4, no.~4, pp. 1122--1142, 2011.

\bibitem{nie2012low}
F.~Nie, H.~Huang, and C.~H. Ding, ``Low-rank matrix recovery via efficient
  schatten p-norm minimization,'' in \emph{AAAI}, 2012.

\bibitem{Dong2014}
W.~Dong, G.~Shi, X.~Li, Y.~Ma, and F.~Huang, ``Compressive sensing via nonlocal
  low-rank regularization,'' vol.~23, no.~8, pp. 3618--3632, Aug 2014.

\bibitem{kodak}
Kodak, ``The {Kodak} lossless true color image suite,'' available at
  \url{http://r0k.us/graphics/kodak/}.

\bibitem{huang2015single}
J.-B. Huang, A.~Singh, and N.~Ahuja, ``Single image super-resolution from
  transformed self-exemplars,'' in \emph{Proceedings of the IEEE Conference on
  Computer Vision and Pattern Recognition}, 2015, pp. 5197--5206.

\bibitem{zhang2017beyond}
K.~Zhang, W.~Zuo, Y.~Chen, D.~Meng, and L.~Zhang, ``Beyond a gaussian denoiser:
  Residual learning of deep cnn for image denoising,'' \emph{IEEE Transactions
  on Image Processing}, vol.~26, no.~7, pp. 3142--3155, 2017.

\bibitem{dncnn}
------, ``{DnCNN},'' \url{https://github.com/cszn/DnCNN}, 2017, gitHub
  repository.

\end{thebibliography}

\end{document}